\newtheorem{example}{Example}
\newtheorem{theorem}{Theorem}
\newtheorem{definition}{Definition}
\newtheorem{corollary}{Corollary}
\newtheorem{lemma}{Lemma}
\newtheorem{proposition}{Proposition}
 \newcommand{\cB}{\mathcal{B}} 
 \newcommand{\cD}{\mathcal{D}}
\newcommand{\cI}{\mathcal{I}} \newcommand{\cL}{\mathcal{L}}
\newcommand{\cM}{\mathcal{M}} \newcommand{\cN}{\mathcal{N}}
\newcommand{\cR}{\mathcal{R}} 
 \newcommand{\cJ}{\mathcal{J}}
 \newcommand{\cX}{\mathcal{X}}
\newcommand{\RR}{\mathbb{R}}
\newcommand{\noise}{\bm{\eta}}
\newcommand{\cop}{B^+}
\newcommand{\truedata}{\bm{x}}
\newcommand{\truealloc}{\alloc{\truedata}}
\newcommand{\noisyalloc}{\alloc{\noisydata}}
\newcommand{\noisydata}{\tilde{\bm{x}}}
\newcommand{\noisydataelem}{\tilde{x}}
\newcommand{\postdata}{\pp}
\newcommand{\nnpostdata}{\pp_+}
\newcommand{\sumpostdata}{\pp_S}
\newcommand{\nnsumpostdata}{\pp_{S+}}
\newcommand{\region}{\mathcal{K}}
\newcommand{\nnregion}{\mathcal{K}_+}
\newcommand{\sumregion}{\mathcal{K}_S}
\newcommand{\nnsumregion}{\mathcal{K}_{S+}}
\newcommand{\al}{P^F}
\newcommand{\pp}{\pi}
\newcommand{\bl}{\pp_{\mathrm{BL}}}
\newcommand{\pos}{\pp_{\mathrm{PoS}}}
\newcommand{\bias}[2]{\operatorname{\cB}\left(#1,#2\right)}
\newcommand{\biasm}[1]{\operatorname{\cB}\left(#1\right)}
\newcommand{\norm}[1]{\left\lVert#1\right\rVert}
\newcommand{\pr}[1]{\operatorname{Pr}\left(#1\right)}
\newcommand{\EE}[2]{\operatorname{\mathbb{E}}_{#1}\left[#2\right]}
\newcommand{\inner}[2]{\left\langle#1,#2\right\rangle}
\newcommand{\relu}[1]{\left(#1\right)_{\geq 0}}
\newcommand{\negpart}[1]{\left(#1\right)_-}
\newcommand{\alloc}[1]{\al\left(#1\right)}
\newcommand{\blmech}[1]{\bl\left(#1\right)}
\newcommand{\posmech}[1]{\pos\left(#1\right)}
\newcommand{\talloci}[1]{\al_{#1}\left(\truedata\right)}
\newcommand{\nalloci}[1]{\al_{#1}\left(\noisydata\right)}
\newcommand{\swap}[3]{\operatorname{Swap}_{#1,#2}\left(#3\right)}
\title{Post-processing of Differentially Private Data: A Fairness Perspective}
\author{
Keyu Zhu$^1$
\and
Ferdinando Fioretto$^2$\And
Pascal Van Hentenryck$^{1}$
\affiliations
$^1$Georgia Institute of Technology\\
$^2$Syracuse University
\emails
kzhu67@gatech.edu,\,
ffiorett@syr.edu,\,
pvh@isye.gatech.edu
}
\begin{document}

\maketitle

\begin{abstract}
Post-processing immunity is a fundamental property of differential privacy: it enables arbitrary data-independent transformations to differentially private outputs without affecting their privacy guarantees. Post-processing is routinely applied in data-release applications, including census data, which are then used to make allocations with substantial societal impacts. This paper shows that post-processing causes disparate impacts on individuals or groups and analyzes two critical settings: the release of differentially private datasets and the use of such private datasets for downstream decisions, such as the allocation of funds  informed by US Census data. In the first setting, the paper proposes tight bounds on the unfairness for traditional post-processing mechanisms, giving a unique tool to decision makers to quantify the disparate impacts introduced by their release. In the second setting, this paper proposes a novel post-processing mechanism that is (approximately) optimal under different fairness metrics, either reducing fairness issues substantially or reducing the cost of privacy. The theoretical analysis is complemented with numerical simulations on Census data.
\end{abstract}

%%%%%%%%%%%%%%%%%%%%%%%%%%%%%%%%%%%%%%%%%%%%%%%%%%%%%%%%%%%%%%%%%%%%%%%%%%%%%%%%%%%%%%%%%%%%%%
\section{Introduction}
%%%%%%%%%%%%%%%%%%%%%%%%%%%%%%%%%%%%%%%%%%%%%%%%%%%%%%%%%%%%%%%%%%%%%%%%%%%%%%%%%%%%%%%%%%%%%%

Differential privacy (DP) \citep{Dwork:06} has become a fundamental technology for private data release. Private companies and federal agencies are rapidly developing their own implementations of DP. It is particularly significant to note that the U.S.~Census Bureau adopted DP for its 2020 release \citep{abowd2018us}. It is also of primary importance to observe that the released data by corporation or federal agencies are often used to make policy decisions with significant societal and economic impacts for the involved individuals. For example, U.S.~census data users rely on the decennial 
census data to apportion the 435 congressional seats, allocate the 
\$1.5 trillion budget, and distribute critical resources to U.S.~states and jurisdictions. 

Although DP provides strong privacy guarantees on the released data 
and is widely celebrated among privacy researchers, its wide adoption 
among more federal agencies and public policy makers presents a key 
challenge: without careful considerations, DP methods may 
disproportionately impact minorities in decision processes based on the private data. Specifically, to protect individuals in a dataset, typical DP data-release methods operate by adding calibrated noise onto the data and then \emph{post-process} the resulting noisy data to restore some important data invariants.  Since such a process perturbs the original data, it necessarily introduces some errors which propagate onto downstream decision tasks. In fact, this paper will show that these errors may affect various individuals differently. Although understanding the outcome of these effects is extremely important, these disproportionate impacts are poorly understood and have not received the attention they deserve given their broad impact on various population segments. 

{\em This paper addresses this gap in understanding the effect of DP, and analyzes the disproportionate effects of a family of post-processing methods commonly adopted in data release tasks.} 
% The analysis first focuses on the release of differentially private 
% datasets, it gives a tight bound on the unfairness resulting from commonly adopted post-processing steps, and shows that the presence of disparate impacts is unavoidable in these settings. 
% The paper then analyzes the disparate impacts in downstream decision processes that take as input private data, as those used by the U.S.~Census to allocate funds  and  benefits. Motivated by the negative result obtained for data-release, the paper proposes a novel post-processing mechanism which is applied on the decision problem itself, and shows that such mechanism is (approximately) optimal under different fairness metrics. 
The analysis focuses on two critical settings: the release of differentially private datasets and the use  of  such  private  datasets  in  critical  allocation tasks, as those using  U.S.~Census data to allocate funds  and  benefits. The paper makes two fundamental contributions:
\begin{enumerate}[leftmargin=*, parsep=0pt, itemsep=2pt, topsep=-4pt]
    \item In the release setting, the paper derives tight bounds on the unfairness introduced by commonly adopted post-processing mechanisms, providing a valuable tool for policy makers and information officers to understand the disproportionate impact of their DP releases. These results are complemented by numerical simulations on the census data.
    \item In the downstream decision setting, the paper proposes a novel post-processing mechanism that integrates the data invariant into the downstream decision processes. The resulting mechanism achieves near-optimal results and reduces unfairness and the cost of privacy up to an order of magnitude on practical case studies. 
\end{enumerate}
To the best of the authors' knowledge, this is the first study that analyzes the fairness impacts of DP post-processing steps. The rest of this paper presents the related work, the preliminaries, the settings considered, and the motivation. The core of the paper is in Sections
\ref{sec:census} and \ref{sec:alloc} that present the two main contributions. The last section concludes the paper. All the proofs are in the Appendices that also contain a nomenclature summary.

%%%%%%%%%%%%%%%%%%%%%%%%%%%%%%%%%%%%%%%%%%%%%%%%%%%%%%%%%%%%%%%%%%%%%%%%%%%%%%%%%%%%%%%%%%%%%%
\section{Related Work}
%%%%%%%%%%%%%%%%%%%%%%%%%%%%%%%%%%%%%%%%%%%%%%%%%%%%%%%%%%%%%%%%%%%%%%%%%%%%%%%%%%%%%%%%%%%%%%

Privacy and fairness have been studied mostly in isolation
with a few exceptions. 
\citet{cummings:19} considered the
tradeoffs arising between differential privacy and equal opportunity.
% a fairness concept that requires a classifier to produce equal true
% positive rates across different groups. They show that there exists no
% classifier that simultaneously achieves $(\epsilon,0)$-differential
% privacy, satisfies equal opportunity, and has accuracy better than a
% constant classifier.  
\citet{ekstrand:18} raised questions about the tradeoffs 
involved between privacy and fairness, and \citet{jagielski:18} 
showed two algorithms that satisfy $(\epsilon,\delta)$-differential privacy 
and equalized odds.
% Although it may sound like these algorithms contradict the
% impossibility result from \citep{cummings:19}, it is important to note
% that they are not considering an $(\epsilon, 0)$-differential privacy
% setting. Tran et al.~\citep{Tran:IJCAI21} developed a differentially private
% learning approach to enforce several group fairness notions using a Lagrangian dual method. 
In the context of data release and resource allocation, 
\citet{pujol2020fair} were seemingly first to show, empirically, 
that there might be privacy-fairness tradeoffs involved in resource 
allocation settings. In particular, for census data, 
they show that the noise added to achieve differential privacy could 
disproportionately affect some groups over others. 
\citet{DBLP:conf/ijcai/0007FHY21} formalized the ideas developed in 
\citep{pujol2020fair} and characterized the conditions for which
fairness violations can be bounded for a class of allocation 
problems. 
% \citep{zhu2021bias} studied projections provided a novel upper bound of the bias
% associated with projection (to be continued)
Finally, \citet{abowd2019economic} considered statistical accuracy 
and privacy protection as competing public goods, and designed an 
economic framework to balance the tradeoff. {\em This paper departs from these results significantly: it provides tight lower and upper bounds on the unfairness
introduced by post-processing steps that are critical for practical applications, and proposes new mechanisms that merge post-processing and the downstream resource allocation for mitigating these fairness issues.}

% \citep{pujol2020fair} demonstrated, with 3 practical case studies, 
% the disparities in accuracy of outcomes resulting from
% decision making with privacy-protected data and
% proposed novel measures of fairness.

\begin{figure}
    \centering
    \includegraphics[width=.99\linewidth]{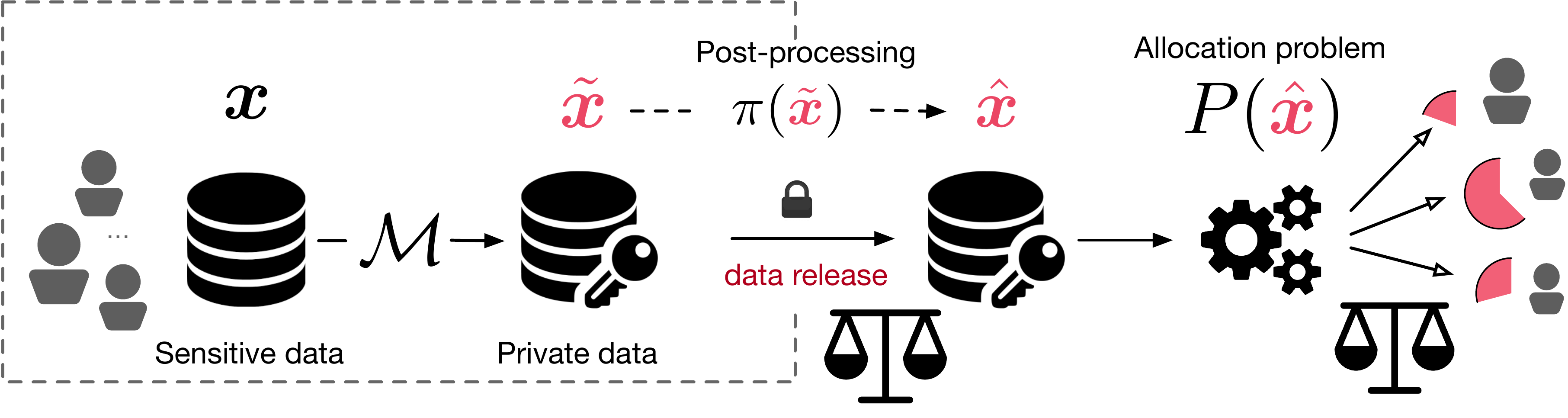}
    \caption{Schematic problem representation.}
    \label{fig:scheme}
\end{figure}

%%%%%%%%%%%%%%%%%%%%%%%%%%%%%%%%%%%%%%%%%%%%%%%%%%%%%%%%%%%%%%%%%%%%%
\section{Preliminaries: Differential Privacy}
%%%%%%%%%%%%%%%%%%%%%%%%%%%%%%%%%%%%%%%%%%%%%%%%%%%%%%%%%%%%%%%%%%%%%

\emph{Differential Privacy} \citep{Dwork:06} (DP) characterizes the amount of individual data disclosed in a computation.

\begin{definition}%[Differential Privacy \citep{Dwork:06}]
  A randomized algorithm $\cM:\cX \to \cR$ with domain $\cX$ and range $\cR$ satisfies
  $\epsilon$-\emph{differential privacy} if
  for any output $O \subseteq \cR$ and datasets $\bm{x}, \bm{x}' \in \cX$ differing by at most one entry (written
  as $\bm{x} \sim \bm{x}'$) 
  \begin{equation}
  \label{eq:dp}
    \Pr[\cM(\bm{x}) \in O] \leq \exp(\epsilon) \Pr[\cM(\bm{x}') \in O] + \delta. 
  \end{equation}
\end{definition}

\noindent 
Parameter $\epsilon \!>\! 0$ is the \emph{privacy loss}: values close 
to $0$ denote strong privacy and $\delta \geq 0$ represents a probability of failure. Intuitively, DP states that 
every event has a similar probability regardless of the participation
of any individual data to the dataset. 
% individual data is added or removed to the dataset, limiting the 
% amount of information that the output reveals about any individual.  
DP satisfies several properties including 
% \emph{composition}, which allows to bound the privacy loss derived by 
% multiple applications of DP algorithms to the same dataset, and 
\emph{immunity to post-processing}, which states that the privacy 
loss of DP outputs is not affected by arbitrary data-independent 
post-processing \citep{Dwork:13}.

A function $f$ from a dataset $\bm{x} \in \cX$ to a result set 
$R \subseteq \RR^n$ can be made differentially private by injecting 
random noise onto its output. The amount of noise relies on the notion 
of \emph{global sensitivity} %, denoted by $\Delta_f$ and defined as
\(
\Delta_f = \max_{\bm{x} \sim \bm{x}'} \| f(\bm{x}) - f(\bm{x}') \|_p
\) with $p \in \{1,2\}$. 
The \emph{Laplace mechanism} \citep{Dwork:06} that outputs $f(\bm{x}) 
+ \bm{\eta}$, where $\bm{\eta} \in \RR^n$ is drawn from the i.i.d.~Laplace distribution with $0$ mean and scale  
$\Delta_f/\epsilon$ over $n$ dimensions, achieves $\epsilon$-DP. 
The \emph{Gaussian mechanism} \citep{Dwork:13} that outputs $f(D) + \bm{\eta}$, 
where $\bm{\eta} \in \RR^n$ is drawn from the multivariate normal distribution 
$\cN(\bm{0},\sigma^2 \bm{I}_n)$ with parameter $\sigma\geq c\Delta_f/\epsilon$, 
achieves $(\epsilon,\delta)$-differential privacy, for $c^2>2\ln(1.25/\delta)$.

\section{Settings and Goals}
\label{sec:settings}
%%%%%%%%%%%%%%%%%%%%%%%%%%%%%%%%%%%%%%%%%%%%%%%%%%%%%%%%%%%%%%%%%%%%%%%%%%%%%%%%%%%%%%%%%%%%%%

The paper considers datasets $\bm{x} \!\in\! \RR$ of $n$ entities, 
whose elements $x_i$ describe some measurable quantities of entity 
$i \!\in\! [n]$, such as the number of individuals living 
in a geographical region $i$. A data-release mechanism $\cM$ is applied to the dataset $\bm{x}$ (called true data in this paper) to produce a privacy-preserving counterpart $\noisydata \sim \cM(\truedata)$ (referred to as noisy data). Given the released data, the paper considers allocation problems $P: \RR^n \to \mathbb{R}^n$ that distribute a finite set of resources to the problem entities. For example, $P$ may be used to allocate funds to school districts. 

The focus of the paper is to study the error disparities of a DP data-release mechanism $\cM$ in two contexts: {\bf (1)} data release and {\bf (2)} downstream decisions. The first context refers to the case in which the noisy data must be post-processed before being released to satisfy desired invariants. The second context refers to the case in which the noisy data is released for use in an allocation problem. Again, the release data must be post-processed to satisfy the problem-specific feasibility constraints. {\em The paper studies the disparate impacts of the error introduced by post-processing among entities in both two scenarios.} 

% \keyu{
% The paper analyzes the error disparities resulting from a DP mechanism $\cM$ in two scenarios: {\bf (1)} the noisy data $\noisydata$ gets post-processed before release;
% and {\bf (2)} the noisy data $\noisydata$ gets released directly.
% Figure \ref{fig:scheme} provides an illustrative
% diagram. 

% In the first scenario, the released data has already met feasibility requirements and thus can
% be directly used for the downstream decision processes. However, the post-processing mechanism
% applied during the data-release period might result in the disparities in accuracy for both
% the released data and the decision processes dependent on it. In this case, the paper
% mainly focuses on the post-processing method used for the data-release task.
% In the second scenario, since random noise is injected to the true data $\truedata$, the output 
% $P(\noisydata)$ might not necessarily satisfy intrinsic feasibility constraints 
% of $P(\truedata)$ (e.g., non-negativity), which makes post-processing a necessity. 
% However, even with proper post-processing, the error with respective to $P(\truedata)$
% might still be inevitable and different among entities. }

Quantitatively, this error is represented by the bias associated with a 
post-processing mechanism $\pp$, i.e.,
\begin{equation*}
    \bias{\pp,P}{\cM,\truedata}=\EE{\noisydata\sim\cM(\truedata)}{\pp(\noisydata)}-P(\truedata)\,.
\end{equation*}
The paper will often omit the last two arguments of the bias term when there is no ambiguity. 
% $\bias{\pp,P}{\cM,\truedata}=\bias{\pp}{P}$.
% \keyu{
% Let $\norm{\cdot}_\rightleftharpoons$ denote an operator, which computes the maximum difference among
% different entries of input, i.e., $\norm{\bm{v}}_{\rightleftharpoons} = \max_{i\in[n]}~v_i-\min_{i\in [n]}~v_i$.
% }
The disparate impact of the error is then characterized by the following definition.
\begin{definition}[$\alpha$-fairness]
    A post-processing mechanism $\pp$ is said $\alpha$-fair with respect to  problem
    $P$ if the maximum difference among the biases is bounded by $\alpha$, i.e.,
    \begin{equation*}
        \norm{\bias{\pp}{P}}_{\rightleftharpoons} = \max_{i\in[n]}~\bias{\pp}{P}_i-\min_{i\in[n]}~\bias{\pp}{P}_i\leq \alpha\,
    \end{equation*}
\end{definition}
\noindent
with $\alpha$ referred to as a \emph{fairness bound} that captures the fairness violation.

% The results in the paper assume that $\cM$, used to release counts, 
% is the Laplace or Gaussian mechanism with an appropriate finite sensitivity $\Delta$.
% \emph{However, the results are general and apply to any data-release
%   DP mechanism that add unbiased noise}.

\begin{figure*}[!h]
    \centering
    \includegraphics[width=0.4\linewidth]{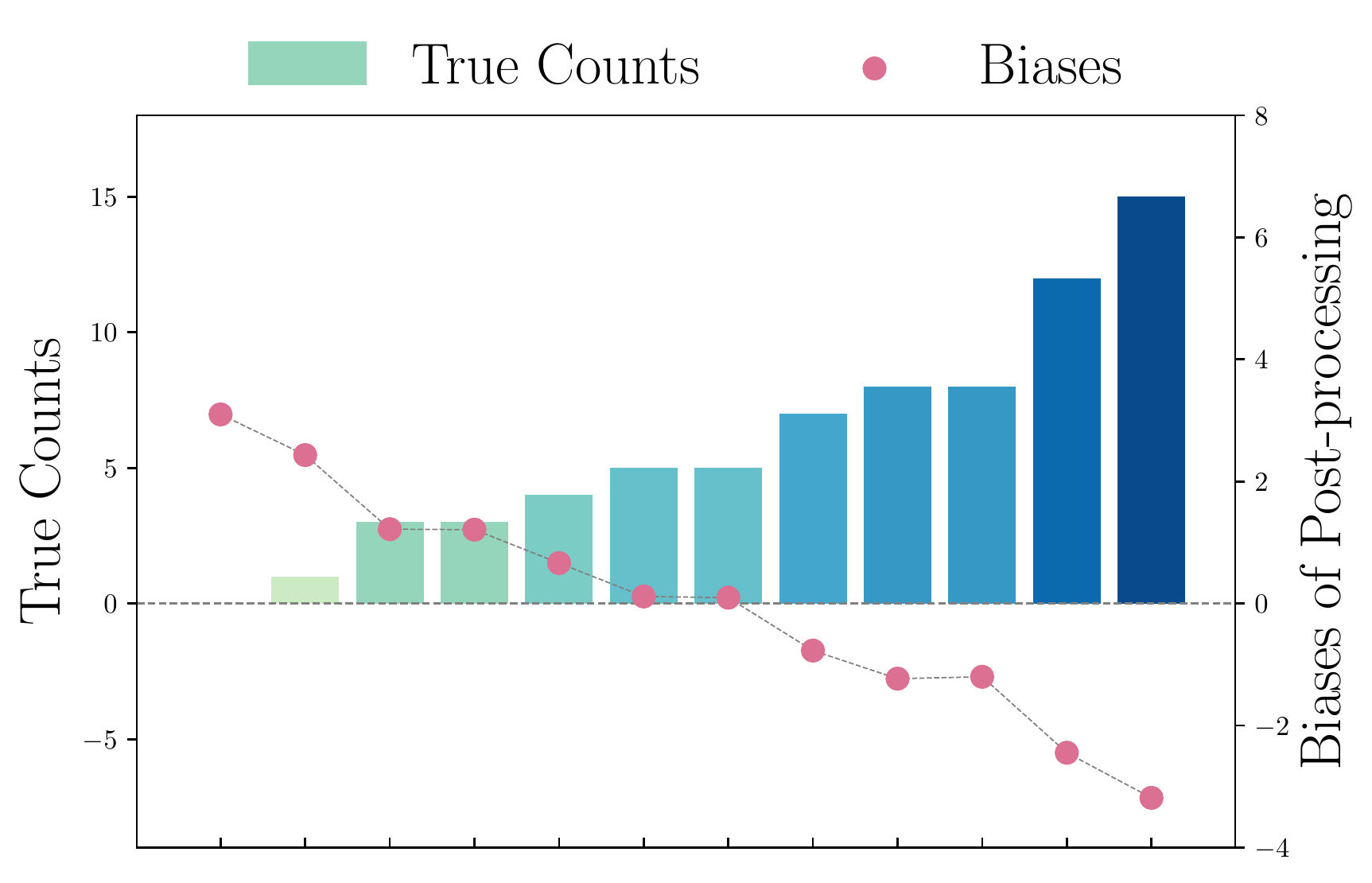} \hspace{12pt}
    \includegraphics[width=0.4\linewidth]{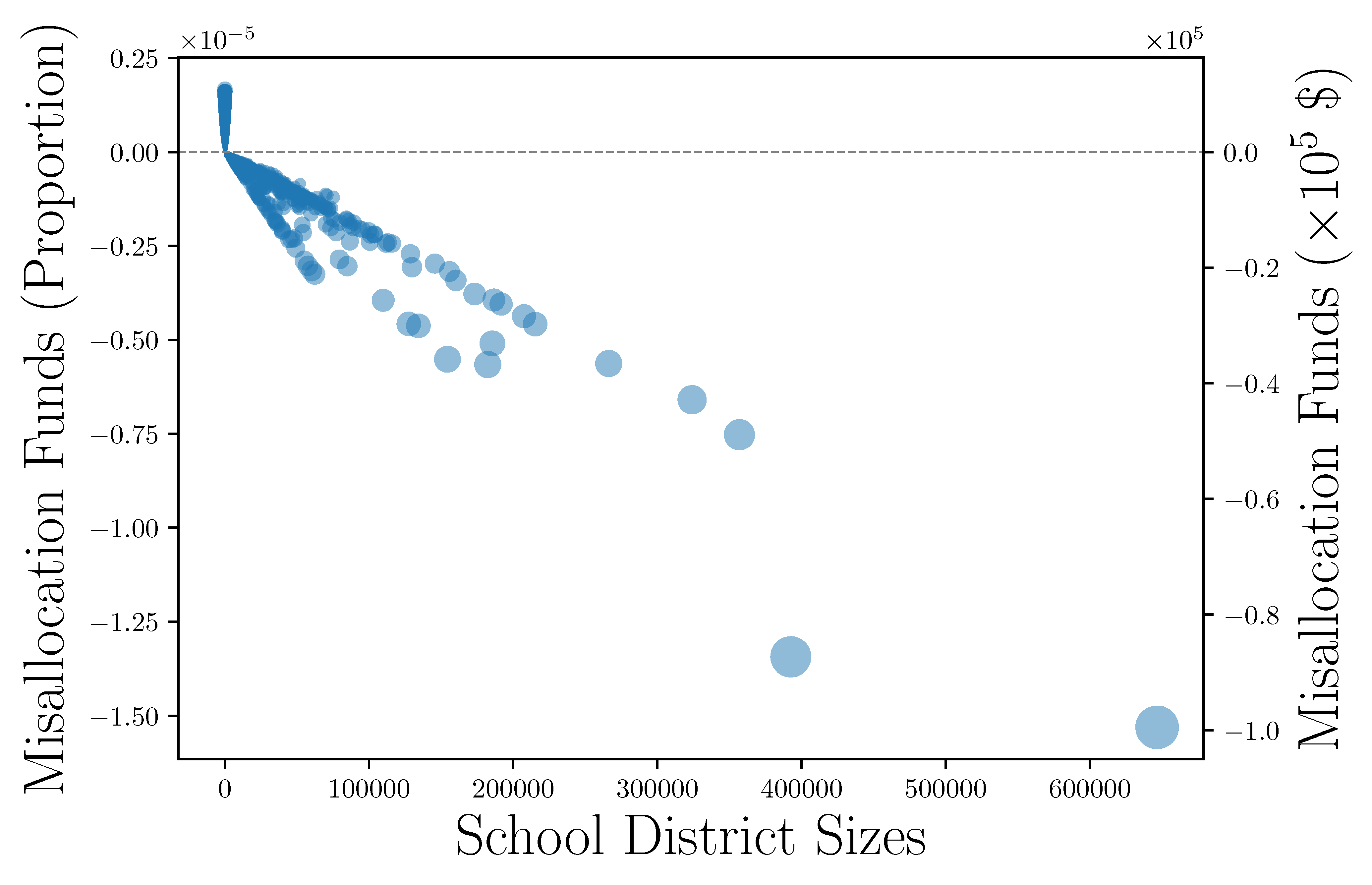}

    \caption{{\bf Data Release}: (Left) bar chart of the true counts and line chart of the empirical biases  (red dots)
    associated with the given post-processing mechanism.
    {\bf Downstream decisions}: (Right) scatter plot of the empirical biases (left y-axis) and misallocation funds (right y-axis) resulting from the given post-processing mechanism. For both two instances,
    Laplace mechanism is taken for privacy protection and 
    each experiment is repeated for 200,000 times.}
    \label{fig:bias_of_post_processing}
\end{figure*}
\section{Motivating Applications}
\label{sec:motivation}

This section reviews two settings highlighting the 
disparate impacts of DP post-processing in census releases.

\paragraph{Data Release} 
Consider a simplified version of the census data release problem. The task is to 
release counts, such as demographic information of individuals, which are required to be non-negative and summed up to a public quantity. The latter is usually used to preserve known statistics at a state or national level. 
To preserve these invariants, commonly adopted post-processing mechanisms (e.g., the one used by the Top-Down algorithm in the release of several 2020 U.S.~census data products) constrain the noisy DP outcomes with an $\ell_2$ \emph{projection} step. Such post-processing step will be described in detail and studied in the next section. Figure \ref{fig:bias_of_post_processing} (left) shows the (sorted) values of some synthetically generated true counts (bars) and the (empirical) biases (red dots), obtained by comparing the post-processed DP counts with the true counterparts. 
Notice how the resulting biases vary among entities. {\em This is significant as sub-communities may be substantially under- or over-counted affecting some important data analysis tasks.}

\paragraph{Downstream Decisions}
These disparities may also have negative socio-economic impacts. 
For instance, when agencies allocate funds and benefits according to 
differentially private data, an ill-chosen post-processing mechanism can result in huge disparities and, as a consequence, lead to significant inefficiencies of allocation. 
Consider the \emph{Title I of the Elementary and Secondary Education Act of
  1965} \citep{Sonnenberg:16}: It uses the US Census data to distribute about \$6.5 billion in basic grants to qualified school districts in proportion to the count $x_i$
of children aged 5 to 17 who
live in necessitous families in district $i$. The allocation is formalized by
\begin{equation*}
    \talloci{i} \coloneqq \frac{a_i\cdot x_i}{\sum_{j=1}^n a_j\cdot
    x_j}\,,\qquad\forall~i\in[n]\,,
\end{equation*}
where $\bm{x}=[x_1~\dots~x_n]^\top$ is the vector of the districts' true counts and $a_i$ is a positive weight factor reflecting students expenditures in district $i$. When a projection mechanism (described in more details in Section \ref{sec:alloc}) is used to guarantee non-negativity of the private data $\noisydata$, the resulting errors on the proposal of funds allocation can be consequential. 
Figure \ref{fig:bias_of_post_processing} (right) visualizes the misallocation (blue dots) for over 16,000 school districts (due to this post-processing mechanism) in 
terms of proportions (left y-axis) and funds (right y-axis). 
In this numerical simulation, which uses data based on the 2010 US census release, {\em large school districts} may be strongly penalized. For example the largest school district in Los Angeles can receive up to 99,000 dollars fewer than warranted.

The next sections analyze these effects and propose mitigating solutions. Due to space limitations, complete proofs are deferred to the Appendix.
\noindent

%%%%%%%%%%%%%%%%%%%%%%%%%%%%%%%%%%%%%%%%%%%%%%%%%%%%%%%%%%%%%%%%%%%%%%%%%%%%%%%%
\section{Unfairness in Data Release}
\label{sec:census}
%%%%%%%%%%%%%%%%%%%%%%%%%%%%%%%%%%%%%%%%%%%%%%%%%%%%%%%%%%%%%%%%%%%%%%%%%%%%%%%
This section studies the effects of post-processing in a common data-release setting, 
where the goal is to release population counts 
that must also sum up to a public constant $C$. 
The section first introduces the projection mechanisms used to restore non-negativity and other aggregate data invariants and then studies its fairness effects. 
% Consistent with the 2020 US Census data-release requirements \citep{abowd2018us}, the released private data is required to be non-negative and to sum up to a constant $C$ (to retain known invariances about aggregate population counts). 
% Since the noisy data $\noisydata$ might fail to meet these two feasibility constraints, data-release agencies use post-processing to restore feasibility. 
% \nandoside{Maybe the above is redundant given the motivating application}

\emph{Projections} are common post-processing methods central to many data-release applications, including energy
\citep{fioretto:TSG20}, transportation \citep{fioretto:AAMAS-18}, and census data
\citep{abowd2019census}. They are defined as:
\begin{equation}
    \begin{aligned}
        \nnsumpostdata(\noisydata) \coloneqq~\underset{\bm{v}\in \nnsumregion}{\arg\min}~& \norm{\bm{v}-\noisydata}_2\,,
        % \nnsumregion=\left\{\bm{v}\in\RR^n~\bigg{|}~\sum_{i=1}^n v_i = b,~\bm{v}\geq \bm{0}\right\},
    \end{aligned}\tag{$P_{\mathrm{S}+}$} \label{nnsumprogram}
\end{equation}
with feasible region defined as 
\[
\nnsumregion=\Big\{\bm{v}\mid\sum_{i=1}^n v_i = C\,,~\bm{v}\geq \bm{0}\Big\}.
\]
Notice that $P_{S+}$ is a convex program, and its unique optimal solution $\pi_{S+}(\tilde{\bm{x}})$ guarantees the desired data invariants by definition.
For the analysis, it is also useful to consider a modified version $P_S$ of \ref{nnsumprogram}, which differs from the latter only in that it ignores the non-negativity constraint $\bm{v} \geq \bm{0}$. Its feasible region and optimal solution are denoted, respectively, $\sumregion$ and  $\sumpostdata(\noisydata)$. 

This section provides tight upper and lower bounds of the unfairness  arising from projection operators. Lemma \ref{thm:bias_ineq} and \ref{thm:bias_ineq_upper} are critical components to derive the $\alpha$-fairness bounds developed in Theorem \ref{thm:bounds}.
The tightness of the proposed bounds is demonstrated in Example \ref{eg:centroid} and the existence of inherent unfairness in
Example \ref{cor:existence_of_unfairness}. Proposition \ref{thm:lower_explicit} then presents an efficient method to evaluate the $\alpha$-fairness bounds under the Gaussian mechanism, giving a uniquely valuable tool to decision makers to evaluate the impact of post-processing the data in their applications. To ease notation, the section omits the second argument $P$ of the bias term $\cB$ (as the $P$ is an identity function for data-release settings). Additionally, unless otherwise specified, it assumes that the noisy data $\noisydata$ is an output of the Laplace mechanism with parameter $\lambda$ or the Gaussian mechanism with parameter $\sigma$. 

\begin{lemma}\citep{zhu2021bias} \label{cor:err_dist}
    For any noisy data $\noisydata\in \RR^n$, the closed-form solution $\sumpostdata(\noisydata)$ to program
    ($P_S$) can be expressed as,
    \begin{equation*}
        \sumpostdata(\noisydata)_i = 
         x_i+\eta_i-\frac{\sum_{j=1}^n\eta_j}{n}
         =\noisydataelem_i + \frac{C-\sum_{j=1}^n
         \noisydataelem_j}{n}\,,
    \end{equation*}
    for any $i\in[n]$, with injected noise $\noise=\noisydata-\truedata$.
\end{lemma}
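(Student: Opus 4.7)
The plan is to solve the convex program $(P_S)$ directly via Lagrangian stationarity. Because minimizing $\|\bm{v}-\noisydata\|_2$ over an affine constraint set is equivalent to minimizing $\tfrac{1}{2}\|\bm{v}-\noisydata\|_2^2$ over the same set (the two objectives are strictly increasing transforms of each other and share the same minimizer), I would work with the smooth strictly convex quadratic. Since Slater's condition trivially holds (e.g., $\bm{v}=(C/n)\bm{1}$ is feasible) and the objective is strictly convex, the KKT conditions are both necessary and sufficient, and they determine a unique global optimum.

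I would introduce a single scalar multiplier $\mu$ for the equality $\sum_i v_i = C$ and form
\[
L(\bm{v},\mu) \;=\; \tfrac{1}{2}\sum_{i=1}^n (v_i-\tilde{x}_i)^2 \;+\; \mu\Bigl(\sum_{i=1}^n v_i - C\Bigr).
\]
Stationarity $\partial L/\partial v_i = 0$ yields $v_i = \tilde{x}_i - \mu$ for every $i$. Substituting back into $\sum_i v_i = C$ gives $\mu = \tfrac{1}{n}\bigl(\sum_{j=1}^n \tilde{x}_j - C\bigr)$, and re-inserting this value of $\mu$ into $v_i = \tilde{x}_i - \mu$ immediately produces the second expression in the lemma,
\[
\sumpostdata(\noisydata)_i \;=\; \tilde{x}_i + \frac{C - \sum_{j=1}^n \tilde{x}_j}{n}.
\]

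To obtain the first expression, I would substitute $\tilde{x}_j = x_j + \eta_j$ and invoke the data-release premise $\sum_{j=1}^n x_j = C$ (the public invariant that the released counts are required to honor, as stated at the top of this section). Then $C - \sum_j \tilde{x}_j = -\sum_j \eta_j$ and $\tilde{x}_i = x_i + \eta_i$, so term-by-term rewriting delivers $\sumpostdata(\noisydata)_i = x_i + \eta_i - \tfrac{1}{n}\sum_{j=1}^n \eta_j$, as claimed.

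There is no genuine obstacle here: the argument is a single-multiplier KKT computation followed by an algebraic substitution. The only point worth stating carefully is that the obtained stationary point is indeed the global minimizer, which is a direct consequence of strict convexity of the quadratic objective combined with the linearity (hence convexity) of the feasible set $\sumregion$. No case analysis or regularity argument beyond this is needed, which is consistent with the observation that $(P_S)$ differs from the full projection $(P_{S+})$ precisely by dropping the inequality constraints that would otherwise require a more careful KKT analysis.
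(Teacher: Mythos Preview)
Your proposal is correct and follows essentially the same KKT/Lagrangian route as the paper: the paper establishes the closed form via stationarity of the quadratic Lagrangian (in fact for a general linear constraint $\bm{A}\bm{v}=\bm{b}$, yielding $\bm{v}^*=\truedata+(\bm{I}_n-\bm{A}^\top(\bm{A}\bm{A}^\top)^{-1}\bm{A})(\noisydata-\truedata)$, from which the single-constraint case here is the specialization $\bm{A}=\bm{1}^\top$), and then uses the feasibility of $\truedata$ exactly as you do to pass to the $\eta$-form. Your direct single-multiplier computation is simply the unpacked special case of that argument.
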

\noindent
%It provides a close-form expression of the post-processed count $\sumpostdata(\noisydata)$. 
% Additionally, note that the post-processing mechanism $\sumpostdata$ itself does not introduce bias, i.e., $\biasm{\sumpostdata}=\EE{\noisydata}{\sumpostdata(\noisydata)}-\truedata=\bm{0}$
% \citep{zhu2021bias}.

Unlike $\sumpostdata(\noisydata)$, the post-processed count
$\nnsumpostdata(\noisydata)$ does not have a close-form expression. However,
the following lemma introduces an implicit expression of 
$\nnsumpostdata(\noisydata)$ on the basis of $\sumpostdata(\noisydata)$, establishing the foundation for the fairness analysis of the post-processing mechanism $\nnsumpostdata$.
\begin{lemma}\label{lem:kkt_ps}
    For any noisy data $\noisydata\in \RR^n$, the solution $\nnsumpostdata(\noisydata)$ to program \eqref{nnsumprogram}
    can be expressed as
    \begin{equation*}
         \nnsumpostdata(\noisydata) = \relu{\sumpostdata(\noisydata) - T(\sumpostdata(\noisydata))\cdot\bm{1} }\,,
    \end{equation*}
    where 
    $\relu{\cdot} \!=\! \max\{\cdot, 0\}$, and
    $T(\sumpostdata(\tilde{\bm{x}}))$ is the non-negative scalar that is the unique solution to the following equation
    \begin{equation*}
        \sum_{i=1}^n \relu{\sumpostdata(\noisydata)_i - T(\sumpostdata(\noisydata))} = C\,.
    \end{equation*}
\end{lemma}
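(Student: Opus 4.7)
The plan is to obtain the expression by writing down the KKT conditions of the convex program \eqref{nnsumprogram} and then translating the resulting formula from $\noisydata$ to $\sumpostdata(\noisydata)$ by means of Lemma~\ref{cor:err_dist}.

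First, I would note that \eqref{nnsumprogram} is a strictly convex quadratic program with affine constraints satisfying Slater's condition (the point $(C/n)\bm{1}$ lies strictly in the feasible set when $C>0$), so its unique optimum $\nnsumpostdata(\noisydata)$ is fully characterised by KKT. Forming the Lagrangian
\begin{equation*}
L(\bm{v},\mu,\bm{\nu}) = \tfrac{1}{2}\norm{\bm{v}-\noisydata}_2^2 + \mu\Bigl(\sum_{i=1}^n v_i - C\Bigr) - \sum_{i=1}^n \nu_i v_i,
\end{equation*}
with $\mu \in \RR$ and $\bm{\nu}\geq\bm{0}$, stationarity gives $v_i = \noisydataelem_i - \mu + \nu_i$. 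A short case split using complementary slackness $\nu_i v_i = 0$ shows that if $v_i>0$ then $\nu_i=0$ and $v_i = \noisydataelem_i - \mu$, whereas if $v_i=0$ then $\nu_i = \mu - \noisydataelem_i \geq 0$. The two cases collapse into the compact form $v_i = \relu{\noisydataelem_i - \mu}$ for every $i$.

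Next I would invoke Lemma~\ref{cor:err_dist} to write $\noisydataelem_i = \sumpostdata(\noisydata)_i - c$ where $c \coloneqq (C - \sum_j \noisydataelem_j)/n$ is an index-independent scalar. Setting $T \coloneqq \mu + c$, this substitutes directly to $v_i = \relu{\sumpostdata(\noisydata)_i - T}$, which is exactly the claimed form, and the primal equality $\sum_i v_i = C$ turns into the defining equation for $T$. To see $T \geq 0$, I would exploit $\sum_i \sumpostdata(\noisydata)_i = C$ (which holds by construction of $\sumpostdata$ via Lemma~\ref{cor:err_dist}) together with the elementary bound $\relu{a}\geq a$: summing the defining equation yields $C = \sum_i \relu{\sumpostdata(\noisydata)_i - T} \geq \sum_i(\sumpostdata(\noisydata)_i - T) = C - nT$, hence $T \geq 0$.

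Uniqueness of $T$ follows by monotonicity: the function $g(T) \coloneqq \sum_i \relu{\sumpostdata(\noisydata)_i - T}$ is continuous, piecewise-linear and non-increasing on $[0,\infty)$, strictly decreasing on the interval where at least one summand is positive, with $g(0)\geq C$ and $g(T)\to 0$ as $T\to\infty$; hence for $C>0$ it meets the horizontal level $C$ at exactly one point. The only mildly delicate step is the KKT case analysis that collapses the expression $\noisydataelem_i - \mu + \nu_i$ into a single $\relu{\cdot}$; once that is in hand, the rest of the argument is essentially a mechanical substitution using Lemma~\ref{cor:err_dist}.
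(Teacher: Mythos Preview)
Your proposal is correct and follows essentially the same route as the paper: derive the KKT conditions for \eqref{nnsumprogram}, collapse the complementary-slackness cases into $v_i=\relu{\noisydataelem_i-\mu}$, and then use Lemma~\ref{cor:err_dist} to rewrite in terms of $\sumpostdata(\noisydata)$, with non-negativity and uniqueness of $T$ obtained from $\relu{a}\ge a$ and monotonicity of the thresholded sum. The only cosmetic difference is that the paper first defines $T$ as a function of $\noisydata$ and then invokes an auxiliary projection lemma to identify $T(\noisydata)=T(\sumpostdata(\noisydata))$, whereas your direct substitution $T=\mu+c$ lands on the $\sumpostdata$-parametrisation immediately and avoids that extra step.
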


    \begin{figure}[t]
    \centering
    %  \resizebox{0.75\linewidth}{!}{%
     \begin{tikzpicture}[scale=.95]
        \draw[->] (-4,0) -- (2,0) node[right] {\normalsize $v_1$};
        \draw[->] (0,-0.4) -- (0,4) node[above] {\normalsize $v_2$};
        \draw[thick] (-2.6, 3.6) -- (1.3, -0.3);
        
        \coordinate (truedata) at (0.7, 0.3);
        \coordinate (noisydata) at (-0.5, 3.5);
        \coordinate (postdata) at (-1.5, 2.5);
        \coordinate (nnpostdata) at (0, 1);
        \coordinate (T) at (-1.5, 1);
        \node at (-2.3,1.75) [rectangle,draw] (v) {\normalsize $T$};
        
        \fill (truedata) circle (2pt) node[right, xshift=.5ex, yshift=.5ex] {\normalsize $\truedata$};
        \fill (noisydata) circle (2pt) node[left,  xshift=-.3ex, yshift=.5ex] {\normalsize $\noisydata$};
        \fill (nnpostdata) circle (2pt) node[right, xshift=0.5ex, yshift=1ex] {\normalsize $\nnsumpostdata(\noisydata)$};
        \fill (postdata) circle (2pt) node[right, xshift=0.5ex, yshift=-0.1ex] {\normalsize $\sumpostdata(\noisydata)$} ;
        \fill (T) circle (0pt) node[left, xshift=-.5ex, yshift=-.75ex] {};
        
        % \draw[dashed] (noisydata) -- (postdata);
        % \draw[dashed] (refpostdata) -- (postdata);
        \draw[dashed] (T) -- (nnpostdata);
        \draw[dashed] (T) -- (postdata);
        \draw[dashed] (noisydata) -- (postdata);
        
        \draw [decorate,decoration={brace,amplitude=5pt},xshift=0pt,yshift=0pt]
        (T) -- (postdata) node [black,midway,xshift=-.5cm] 
        {};
                
    \end{tikzpicture}
    % }
    \caption{Illustration of different post-processed counts of $\noisydata$.
    The solid line represents the feasible region $\sumregion$ of program ($P_S$).}
    \label{fig:err_sum_0}
    \end{figure}

\noindent
Figure \ref{fig:err_sum_0} provides an illustrative example relating the two post-processing mechanisms and the role of $T(\sumpostdata(\tilde{\bm{x}}))$. Given the noisy data $\noisydata$ (top of the figure)  $\sumpostdata$ first projects it onto the solid line, which is the feasible region $\sumregion$. Then,
$T$ needs to be deducted from both two entries of $\sumpostdata(\noisydata)$
such that the positive part of $\sumpostdata(\noisydata)-T\cdot \bm{1}$ equals $\nnsumpostdata(\noisydata)$.

The following lemma provides lower and upper bounds for the bias difference 
of post-processing $\nnsumpostdata$  and plays a critical role in establishing the main results.
\begin{lemma}\label{thm:bias_ineq}
    % Suppose that the noisy data $\noisydata$ is an output of the Laplace mechanism
    % with parameter $\lambda$ or the Gaussian mechanism with parameter $\sigma$.
    For any pair $(i,j)$ such that $x_i\leq x_j$,
    the following relations hold,
    {\small
    \begin{subequations}
    \begin{align}
        \label{eq:bias_ineq}
        \biasm{\nnsumpostdata}_i-\biasm{\nnsumpostdata}_j &\geq
        \biasm{\relu{\sumpostdata}}_i-
        \biasm{\relu{\sumpostdata}}_j\\
        \label{eq:2b}
         \biasm{\nnsumpostdata}_i-\biasm{\nnsumpostdata}_j &\leq 
         \biasm{\relu{\sumpostdata}}_i-
        \biasm{\relu{\sumpostdata}}_j \\
        &\quad +\EE{\sumpostdata(\noisydata)}{T(\sumpostdata(\noisydata))}\notag
    \end{align}
    \end{subequations}
    }
    with $T$ defined as in Lemma \ref{lem:kkt_ps} and $\biasm{\relu{\sumpostdata}}$ used as shorthand for
    $\EE{\noisydata}{\relu{\sumpostdata(\noisydata)}}-\truedata$.
\end{lemma}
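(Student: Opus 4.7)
The plan is to rewrite the bias difference as a single expectation and then bound it above and below. Using Lemma~\ref{lem:kkt_ps}, $\nnsumpostdata(\noisydata)_i = \relu{\sumpostdata(\noisydata)_i - T}$ with $T = T(\sumpostdata(\noisydata)) \geq 0$. Adding and subtracting $\relu{\sumpostdata(\noisydata)_i}$ inside the expectation gives
\begin{equation*}
\biasm{\nnsumpostdata}_i = \biasm{\relu{\sumpostdata}}_i + \EE{\noisydata}{f_T(\sumpostdata(\noisydata)_i)},\quad f_T(y) := \relu{y-T} - \relu{y}.
\end{equation*}
Subtracting the identity for $j$ reduces both \eqref{eq:bias_ineq} and \eqref{eq:2b} to sandwiching $\EE{\noisydata}{f_T(\sumpostdata(\noisydata)_i) - f_T(\sumpostdata(\noisydata)_j)}$ between $0$ and $\EE{\noisydata}{T}$.

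The upper bound \eqref{eq:2b} is immediate: a three-case check on the position of $y$ relative to $0$ and $T \geq 0$ gives $f_T(y) \in [-T, 0]$, so $f_T(\sumpostdata(\noisydata)_i) - f_T(\sumpostdata(\noisydata)_j) \leq T$ pointwise, and taking expectations yields \eqref{eq:2b}.

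The lower bound \eqref{eq:bias_ineq} is the substantive part; I would prove it by a continuous deformation in the true data. Fix $x_i$, set $x_j = x_i + \delta$ with $\delta \geq 0$, and define $\Phi(\delta) := \EE{\noisydata}{f_T(\sumpostdata(\noisydata)_i) - f_T(\sumpostdata(\noisydata)_j)}$. At $\delta = 0$, the coordinate swap $\noise \mapsto \noise^{i,j}$ preserves the i.i.d.\ noise law, exchanges $\sumpostdata(\noisydata)_i \leftrightarrow \sumpostdata(\noisydata)_j$ (by Lemma~\ref{cor:err_dist}, since $x_i = x_j$), and fixes $T$ (symmetric in the entries of $\sumpostdata(\noisydata)$ by Lemma~\ref{lem:kkt_ps}), so $\Phi(0) = 0$. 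For $\delta > 0$, I would differentiate under the expectation: Lemma~\ref{cor:err_dist} gives $\partial \sumpostdata(\noisydata)_i/\partial\delta = 0$ and $\partial \sumpostdata(\noisydata)_j/\partial\delta = 1$, while implicit differentiation of $\sum_k \relu{\sumpostdata(\noisydata)_k - T} = C$ yields $\partial T/\partial\delta = \mathbf{1}\{\sumpostdata(\noisydata)_j > T\}/|S|$ with $S = \{k : \sumpostdata(\noisydata)_k > T\}$. A three-way case analysis on whether $\sumpostdata(\noisydata)_j \leq 0$, $0 < \sumpostdata(\noisydata)_j \leq T$, or $\sumpostdata(\noisydata)_j > T$ collapses to the pointwise identity
\begin{equation*}
\frac{d}{d\delta}\bigl[f_T(\sumpostdata(\noisydata)_i) - f_T(\sumpostdata(\noisydata)_j)\bigr] = \mathbf{1}\{0 < \sumpostdata(\noisydata)_j \leq T\} + \frac{\mathbf{1}\{\sumpostdata(\noisydata)_j > T,\; \sumpostdata(\noisydata)_i \leq T\}}{|S|} \geq 0.
\end{equation*}
Since $|f_T| \leq T$ is integrable and the kink set $\{\sumpostdata(\noisydata)_k \in \{0,T\}\text{ for some }k\}$ is null under the absolutely continuous Laplace/Gaussian noise, dominated convergence gives $\Phi'(\delta) \geq 0$, hence $\Phi(\delta) \geq \Phi(0) = 0$, which is \eqref{eq:bias_ineq}.

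The main obstacle is precisely this lower bound: a bare swap $\eta_i \leftrightarrow \eta_j$ preserves the noise law but, when $x_i \neq x_j$, alters the multiset $\{\sumpostdata(\noisydata)_k\}$ and therefore perturbs $T$, so no clean point-by-point comparison between $f_T(\sumpostdata(\noisydata)_i)$ and $f_T(\sumpostdata(\noisydata)_j)$ is available. The deformation trick sidesteps this by isolating the symmetric case $x_i = x_j$, where the swap works cleanly, and converts the asymmetric case into a pointwise monotonicity in $\delta$ that can be checked directly from the KKT description of $T$.
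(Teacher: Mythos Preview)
Your upper bound argument is exactly the paper's: $f_T(y)\in[-T,0]$ pointwise, hence the difference is at most $T$.

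Your lower bound argument is a genuinely different route from the paper's, and it is salvageable, but it contains a concrete error. When you deform $x_j=x_i+\delta$, the total $C=\sum_k x_k$ moves with $\delta$ as well, so the right-hand side of the defining relation $\sum_k\relu{\sumpostdata(\noisydata)_k-T}=C$ has $\partial C/\partial\delta=1$. The correct implicit differentiation therefore gives
\[
\frac{\partial T}{\partial\delta}=\frac{\mathbf{1}\{\sumpostdata(\noisydata)_j>T\}-1}{|S|}\ \le 0,
\]
not $\mathbf{1}\{\sumpostdata(\noisydata)_j>T\}/|S|$. Redoing your three-way case split with this correction still yields a nonnegative pointwise derivative (equal to $0$ when $\sumpostdata(\noisydata)_j>T$, to $1+\mathbf{1}\{\sumpostdata(\noisydata)_i>T\}/|S|$ when $0<\sumpostdata(\noisydata)_j\le T$, and to $\mathbf{1}\{\sumpostdata(\noisydata)_i>T\}/|S|$ when $\sumpostdata(\noisydata)_j\le 0$), so $\Phi'(\delta)\ge0$ survives and your scheme goes through.

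Your closing paragraph, however, misdiagnoses the obstacle. The paper does \emph{not} swap the noise; it swaps the coordinates of the projected vector $\bm y=\sumpostdata(\noisydata)$ itself. That swap preserves the multiset $\{y_k\}$ and therefore leaves $T(\bm y)$ exactly invariant (Lemma~\ref{lem:swap_invariance}), so on the half-region $S_{i,j}^-=\{\bm y\in\sumregion:y_i\le y_j\}$ one has $\delta(\bm y):=f_T(y_i)-f_T(y_j)\ge 0$ and $\delta(\swap{i}{j}{\bm y})=-\delta(\bm y)$. The asymmetry coming from $x_i<x_j$ is absorbed entirely into the \emph{density} of $\sumpostdata(\noisydata)$: Corollary~\ref{cor:pdf_diff} and Lemma~\ref{lem:lap_pdf_diff} show $f_{\sumpostdata(\noisydata)}(\bm y)\ge f_{\sumpostdata(\noisydata)}(\swap{i}{j}{\bm y})$ on $S_{i,j}^-$. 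A single change of variables then gives
\[
\EE{}{\delta(\sumpostdata(\noisydata))}=\int_{S_{i,j}^-}\delta(\bm y)\bigl(f_{\sumpostdata(\noisydata)}(\bm y)-f_{\sumpostdata(\noisydata)}(\swap{i}{j}{\bm y})\bigr)\,d\bm y\ \ge 0,
\]
with no calculus in $\delta$ and no tracking of how $C$ moves. Your deformation approach trades this density-comparison lemma for an implicit-differentiation computation; both reach the same inequality, but the paper's symmetrization is shorter and avoids the bookkeeping that tripped you up.
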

\noindent 
While important, the upper bound \eqref{eq:2b} is dependent on function $T$, which does not have a close-form expression; this makes it difficult to evaluate it. The following proposition provides an
upper bound of $T$ using $\sumpostdata(\noisydata)$.
\begin{proposition}\label{prop:bound_T}
    For any noisy data $\noisydata\in\RR^n$, $T(\sumpostdata(\noisydata))$ is upper bounded by the sum of negative parts in $\sumpostdata(\noisydata)$:
    \begin{equation*}
        T(\sumpostdata(\noisydata))\leq \sum_{i=1}^n \negpart{\sumpostdata(\noisydata)_i}\,,
    \end{equation*}
    where $\negpart{\cdot}\!=\!-\min\{\cdot, 0\}$ takes the negative part of the input.
\end{proposition}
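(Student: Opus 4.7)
The plan is to exploit the two identities that characterize $\sumpostdata(\noisydata)$ and the scalar $T$: the hyperplane membership $\sum_{i=1}^n \sumpostdata(\noisydata)_i = C$, which follows from the closed form in Lemma~\ref{cor:err_dist}, together with the KKT equation $\sum_{i=1}^n \relu{\sumpostdata(\noisydata)_i - T} = C$ from Lemma~\ref{lem:kkt_ps}. Abbreviating $\bm{y} = \sumpostdata(\noisydata)$ and $S = \sum_{i=1}^n \negpart{y_i}$, subtracting these two identities yields $\sum_{i=1}^n \bigl[\relu{y_i - T} - y_i\bigr] = 0$. Applying the elementary identity $\relu{a} = a + \negpart{a}$ with $a = y_i - T$ then reduces this to $\sum_{i=1}^n \negpart{y_i - T} = nT$, or equivalently $\sum_{i=1}^n \min(y_i, T) = 0$.

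Next, I would partition the indices into $A = \{i : y_i \leq T\}$ and $B = \{i : y_i > T\}$, so the derived identity becomes $\sum_{i \in A} y_i + |B|\, T = 0$. Because $T \geq 0$, every index with $y_i < 0$ automatically lies in $A$, and hence
\[
    -\sum_{i \in A} y_i \;=\; -\sum_{i \in A,\, y_i < 0} y_i \;-\; \sum_{i \in A,\, y_i \geq 0} y_i \;=\; S \;-\; \sum_{i \in A,\, y_i \geq 0} y_i \;\leq\; S.
\]
This yields $|B|\, T \leq S$, so whenever $|B| \geq 1$ the desired bound $T \leq S/|B| \leq S$ follows.

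The only subtlety is ruling out the degenerate case $|B| = 0$. If $B$ were empty then $\sum_i \min(y_i, T) = \sum_i y_i$, which under the derived identity forces $C = 0$; this does not occur in the data-release setting, where $C$ is a positive public total, and the trivial sub-case $C = 0$ is easily handled separately by noting that $T = 0$ then works. I anticipate this edge-case bookkeeping to be the only technical wrinkle, as the main algebraic argument reduces cleanly to the identity $\sum_i \min(y_i, T) = 0$ and the observation that all strictly negative coordinates of $\bm{y}$ are absorbed into $A$.
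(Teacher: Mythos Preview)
Your proof is correct and follows essentially the same strategy as the paper: subtract the two sum constraints $\sum_i y_i = C$ and $\sum_i \relu{y_i - T} = C$, partition the indices, and invoke $C>0$ to rule out the degenerate case. The only cosmetic difference is that the paper partitions by the sign of $y_i$ and isolates a single index $j$ with $y_j>T$ to read off $T = y_j - \nnsumpostdata(\noisydata)_j$, whereas you partition by the comparison $y_i \lessgtr T$ and obtain the (slightly sharper) intermediate bound $|B|\,T \le S$.
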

\noindent
The following lemma presents an upper bound of the difference between biases: unlike the bound developed in Lemma \ref{thm:bias_ineq}, this new bound is independent of the injected noise.
\begin{lemma}\label{thm:bias_ineq_upper}
    %  Suppose that the noisy data $\noisydata$ is an output of the Laplace mechanism
    % with parameter $\lambda$ or the Gaussian mechanism with parameter $\sigma$. 
    For any pair $(i,j)$ such that $x_i\leq x_j$,
     the following relation holds.
     \begin{equation}\label{eq:bias_ineq_upper}
         \biasm{\nnsumpostdata}_i-\biasm{\nnsumpostdata}_j
         \leq x_j - x_i\,.
     \end{equation}
\end{lemma}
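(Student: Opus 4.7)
}
Unfolding the definition of $\cB$, the desired inequality is equivalent to
\[
\EE{\noisydata}{\nnsumpostdata(\noisydata)_i} \;\leq\; \EE{\noisydata}{\nnsumpostdata(\noisydata)_j},
\]
so the plan is to prove this monotonicity: coordinates with smaller true value have, in expectation, a smaller post-processed value. I would do this via a coupling/symmetrization argument in three steps.

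\textbf{Step 1 (symmetrization).} Let $\bar{\truedata}$ denote $\truedata$ with entries $i$ and $j$ swapped. Since the feasible region $\nnsumregion$ and the Euclidean objective in \eqref{nnsumprogram} are invariant under coordinate permutations, the projection $\nnsumpostdata$ is equivariant: for the transposition $\sigma=(i\,j)$, $\nnsumpostdata(\sigma\bm{z})=\sigma\,\nnsumpostdata(\bm{z})$. Applying this to $\bm{z}=\truedata+\noise$ gives $\nnsumpostdata(\bar{\truedata}+\bar{\noise})_i=\nnsumpostdata(\truedata+\noise)_j$, where $\bar{\noise}$ is $\noise$ with its $i$-th and $j$-th entries swapped. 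Because $\noise$ has i.i.d.\ (hence exchangeable) coordinates under both the Laplace and Gaussian mechanisms, $\bar{\noise}$ has the same distribution as $\noise$, so
\[
\EE{\noisydata}{\nnsumpostdata(\noisydata)_j}=\EE{\noise}{\nnsumpostdata(\bar{\truedata}+\noise)_i}.
\]

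\textbf{Step 2 (pointwise monotonicity).} It then suffices to show that for every realization of $\noise$, $\nnsumpostdata(\bar{\truedata}+\noise)_i\geq \nnsumpostdata(\truedata+\noise)_i$. Set $\bm{z}=\truedata+\noise$ and $\bm{z}'=\bar{\truedata}+\noise$, so $\bm{z}'_k=\bm{z}_k$ for $k\neq i,j$, $\bm{z}'_i=\bm{z}_i+\Delta$ and $\bm{z}'_j=\bm{z}_j-\Delta$ with $\Delta=x_j-x_i\geq 0$. Combining Lemmas \ref{cor:err_dist} and \ref{lem:kkt_ps}, the projection onto $\nnsumregion$ admits the water-filling form $\nnsumpostdata(\bm{z})=\relu{\bm{z}-\tau(\bm{z})\cdot\bm{1}}$ for the unique threshold $\tau(\bm{z})\in\RR$ satisfying $\sum_k \relu{z_k-\tau(\bm{z})}=C$.

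\textbf{Step 3 (bounding the threshold shift).} The crux is to show $\tau(\bm{z}')\leq \tau(\bm{z})+\Delta$. Setting $\tau^\star=\tau(\bm{z})+\Delta$, a direct comparison coordinate by coordinate gives $\relu{z'_i-\tau^\star}=\relu{z_i-\tau(\bm{z})}$, $\relu{z'_j-\tau^\star}\leq \relu{z_j-\tau(\bm{z})}$, and $\relu{z'_k-\tau^\star}\leq \relu{z_k-\tau(\bm{z})}$ for $k\neq i,j$. Summing yields $\sum_k \relu{z'_k-\tau^\star}\leq C$, and since $t\mapsto \sum_k \relu{z'_k-t}$ is non-increasing, $\tau(\bm{z}')\leq\tau^\star$. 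Therefore
\[
\nnsumpostdata(\bm{z}')_i=\relu{z_i+\Delta-\tau(\bm{z}')}\geq \relu{z_i-\tau(\bm{z})}=\nnsumpostdata(\bm{z})_i.
\]
Taking expectation over $\noise$ and combining with Step 1 gives $\EE{\noisydata}{\nnsumpostdata(\noisydata)_j}\geq \EE{\noisydata}{\nnsumpostdata(\noisydata)_i}$, which rearranges to \eqref{eq:bias_ineq_upper}.

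\textbf{Expected difficulty.} The main obstacle is Step 3: one must rule out the possibility that the threshold $\tau$ jumps by more than $\Delta$, which could otherwise cause $\nnsumpostdata(\bm{z}')_i$ to drop below $\nnsumpostdata(\bm{z})_i$. The monotonicity of the map $t\mapsto\sum_k\relu{z_k-t}$, together with the carefully chosen test value $\tau^\star=\tau(\bm{z})+\Delta$, is what makes the argument go through; the equivariance and noise-exchangeability steps are essentially bookkeeping once this bound is in hand.
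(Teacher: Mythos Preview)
Your proof is correct and takes a genuinely different route from the paper's. The paper works on the distribution of $\sumpostdata(\noisydata)$: it writes
\[
\biasm{\nnsumpostdata}_i-\biasm{\nnsumpostdata}_j-(x_j-x_i)=\EE{\sumpostdata(\noisydata)}{\rho(\sumpostdata(\noisydata))},
\qquad \rho(\bm y)=\relu{y_i-T(\bm y)}-\relu{y_j-T(\bm y)},
\]
partitions $\sumregion$ into $S_{i,j}^-=\{y_i\le y_j\}$ and $S_{i,j}^+=\{y_i\ge y_j\}$, uses the swap bijection to fold the integral onto $S_{i,j}^-$, and then invokes the density comparison $f_{\sumpostdata(\noisydata)}(\bm y)\ge f_{\sumpostdata(\noisydata)}(\swap{i}{j}{\bm y})$ (proved separately for the Gaussian and Laplace mechanisms) together with $\rho(\bm y)\le 0$ on $S_{i,j}^-$. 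Your argument instead couples the two data points $\truedata$ and $\bar{\truedata}$ through the \emph{same} noise realization, reducing the claim to a deterministic monotonicity of the water-filling projection in a single coordinate; the threshold bound $\tau(\bm z')\le\tau(\bm z)+\Delta$ is the heart of it, and your test-value argument for it is clean. Your approach is more elementary (no density computations) and more general: it only uses exchangeability of the noise coordinates, so it covers any i.i.d.\ additive mechanism, not just Laplace and Gaussian. The paper's approach, on the other hand, reuses the density-ratio machinery already developed for Lemma~\ref{thm:bias_ineq}, so it fits more uniformly into the surrounding proofs. One minor remark: in Step~3 you should note that $t\mapsto\sum_k\relu{z'_k-t}$ is \emph{strictly} decreasing wherever it is positive (hence at the level $C>0$), which is what makes the implication ``$\le C$ at $\tau^\star$ $\Rightarrow$ $\tau(\bm z')\le\tau^\star$'' valid.
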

\noindent

The next theorem is the main result of this section: it bounds the unfairness resulting from the projection mechanism $\nnsumpostdata$. 
Without loss of generality, the true data $\truedata$ is assumed to be sorted in an increasing order, i.e., $x_i\leq x_j$, for any $i<j$.
\begin{theorem}\label{thm:bounds}
 The fairness bound $\alpha$ associated with
    the post-processed mechanism $\nnsumpostdata$
    is bounded from the below by
    \begin{equation*}
        \alpha\geq \biasm{\relu{\sumpostdata}}_1-\biasm{\relu{\sumpostdata}}_n\,,
    \end{equation*}
    and bounded from the above by
    \begin{multline*}
        \alpha\leq \min\{x_n-x_1,\biasm{\relu{\sumpostdata}}_1-\\\biasm{\relu{\sumpostdata}}_n
        +\sum_{i=1}^n\EE{\sumpostdata(\noisydata)}{\negpart{\sumpostdata(\noisydata)_i}}\}\,.
    \end{multline*}
\end{theorem}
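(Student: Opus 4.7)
My plan is to reduce the two-sided optimization $\alpha = \max_i \biasm{\nnsumpostdata}_i - \min_j \biasm{\nnsumpostdata}_j$ to the single pair $(1,n)$ by first establishing that $i \mapsto \biasm{\nnsumpostdata}_i$ is non-increasing along the sorted order of $\truedata$. With this reduction in hand, each of the three inequalities in the statement becomes a direct application of one of the cited results at $(i,j)=(1,n)$.

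\textbf{Monotonicity step.} The key ingredient is monotonicity of the auxiliary bias $\biasm{\relu{\sumpostdata}}_i$. Using Lemma \ref{cor:err_dist} I write $\sumpostdata(\noisydata)_i = x_i + Z_i$ with $Z_i = \eta_i - \tfrac{1}{n}\sum_{j=1}^n \eta_j$, so that $\EE{\noisydata}{\sumpostdata(\noisydata)_i} = x_i$ and therefore, using $\relu{y} = y + \negpart{y}$,
\begin{equation*}
\biasm{\relu{\sumpostdata}}_i \;=\; \EE{\noisydata}{\negpart{x_i + Z_i}}.
\end{equation*}
Since the coordinates $\eta_j$ are i.i.d., the distribution of $Z_i$ does not depend on $i$, so setting $g(x) = \EE{Z}{\negpart{x+Z}}$ I obtain $\biasm{\relu{\sumpostdata}}_i = g(x_i)$ with $g$ non-increasing (because $x \mapsto \negpart{x+z}$ is non-increasing for every fixed $z$, and expectation preserves that inequality). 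Feeding this into inequality \eqref{eq:bias_ineq} of Lemma \ref{thm:bias_ineq} yields, for any $i<j$,
\begin{equation*}
\biasm{\nnsumpostdata}_i - \biasm{\nnsumpostdata}_j \;\geq\; g(x_i) - g(x_j) \;\geq\; 0,
\end{equation*}
so $\biasm{\nnsumpostdata}$ is non-increasing in the sorted index and therefore $\alpha = \biasm{\nnsumpostdata}_1 - \biasm{\nnsumpostdata}_n$.

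\textbf{Assembling the three bounds.} With $\alpha$ pinned to the pair $(1,n)$, the lower bound is immediate by applying \eqref{eq:bias_ineq} to $(1,n)$. The first upper bound $\alpha \leq x_n - x_1$ follows from Lemma \ref{thm:bias_ineq_upper} applied to the same pair. For the second upper bound I apply \eqref{eq:2b} to $(1,n)$ and then replace $\EE{\sumpostdata(\noisydata)}{T(\sumpostdata(\noisydata))}$ by $\sum_{i=1}^n \EE{\sumpostdata(\noisydata)}{\negpart{\sumpostdata(\noisydata)_i}}$ via Proposition \ref{prop:bound_T} together with linearity of expectation. The claimed upper bound is then the minimum of these two.

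\textbf{Main obstacle.} The delicate part is the monotonicity step, because none of the cited lemmas asserts monotonicity of $\biasm{\relu{\sumpostdata}}$ (or of $\biasm{\nnsumpostdata}$) in the sorted order. Without it the pair $(i^{*},j^{*})$ attaining $\alpha$ could satisfy $x_{i^{*}} > x_{j^{*}}$, in which case Lemma \ref{thm:bias_ineq_upper} is applied in the useless direction and the clean bound $x_n - x_1$ cannot be recovered. Reducing $\biasm{\relu{\sumpostdata}}_i$ to a single scalar function $g(x_i)$ relies on exchangeability of the noise coordinates, and this is the one spot where the specific structure of the noise mechanism (beyond the already stated lemmas) is actually used.
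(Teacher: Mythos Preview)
Your proposal is correct and follows essentially the same route as the paper's proof sketch: reduce $\alpha$ to the pair $(1,n)$ via monotonicity of $i\mapsto\biasm{\nnsumpostdata}_i$, then invoke \eqref{eq:bias_ineq}, \eqref{eq:2b} with Proposition~\ref{prop:bound_T}, and Lemma~\ref{thm:bias_ineq_upper} at that pair. In fact your write-up is more complete than the paper's sketch: the paper simply asserts ``By Equation~\eqref{eq:bias_ineq}, notice that $\biasm{\nnsumpostdata}_1$ is the largest entry\ldots'' without justifying why the right-hand side of \eqref{eq:bias_ineq} is nonnegative; your exchangeability argument (that $\biasm{\relu{\sumpostdata}}_i = g(x_i)$ for a single nonincreasing $g$) supplies exactly that missing step.
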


\begin{proof}[Proof Sketch]
    By Equation \eqref{eq:bias_ineq} in Lemma \ref{thm:bias_ineq}, notice that $\biasm{\nnsumpostdata}_1$ is the largest entry while $\biasm{\nnsumpostdata}_n$
    is the smallest one among all the biases. The lower bound of the fairness bound $\alpha$ can then be derived in the following way.
    \begin{align*}
        \alpha  &=\max_{j\in[n]}~\biasm{\nnsumpostdata}_j - \min_{j\in[n]}~\biasm{\nnsumpostdata}_j\\
        &= \biasm{\nnsumpostdata}_1-\biasm{\nnsumpostdata}_n\geq \biasm{\relu{\sumpostdata}}_1 - \biasm{\relu{\sumpostdata}}_n\,.
    \end{align*}
    Likewise, Lemma \ref{thm:bias_ineq} and \ref{thm:bias_ineq_upper}, along with Proposition \ref{prop:bound_T},
    make the joint effort to generate the upper bound.
\end{proof}
\noindent
The tightness of the derived bounds follows from the following instance.
\begin{example}[Centroid]\label{eg:centroid}
    The lower and upper bounds proposed in Theorem \ref{thm:bounds} hold with equality
    when the true data $\truedata$ is exactly the centroid of the feasible region $\nnsumregion$
    of program \eqref{nnsumprogram}, i.e., $\truedata=[C/n~\dots~C/n]\in\RR^n$, and
    the noisy data $\noisydata$ is an output of either Laplace or Gaussian mechanism. In this case,
    the fairness bound $\alpha$ and its bounds in Theorem \ref{thm:bounds} happen to be $0$, which also means that there is no fairness violation.
    % which demonstrates the \emph{tightness} of
    % the proposed bounds.
\end{example}
\noindent
The next example shows that post-processing definitely introduces unfairness when the true is not at the centroid.

\begin{example}[Non-centroid]\label{cor:existence_of_unfairness}
    Suppose that the true data $\truedata$ is not the centroid of the feasible region $\nnsumregion$, i.e.,
    $x_n > x_1$. The fairness bound $\alpha$ associated with the post-processing mechanism $\nnsumpostdata$
    is strictly positive, i.e.,
    \begin{equation*}
        \alpha\geq \biasm{\relu{\sumpostdata}}_1-\biasm{\relu{\sumpostdata}}_n>0\,.
    \end{equation*}
\end{example}

\noindent
This negative result motivates the development of novel post-processing mechanisms in downstream decision processes, which are topics of the next section. The last result of this section provides an efficient evaluation of the proposed bounds via numerical integration methods.

\begin{proposition}\label{thm:lower_explicit}
    Let $\noisydata$ be the output of the Gaussian mechanism with parameter $\sigma$. 
    The key component of both lower and upper bounds in Theorem \ref{thm:bounds} can be written as
    \begin{multline*}
        \biasm{\relu{\sumpostdata}}_1-\biasm{\relu{\sumpostdata}}_n = \int_{-x_n}^{-x_1} \Phi(at)dt \\
        \in\left[\Phi\left(- a x_n\right)(x_n-x_1),\Phi\left(- a x_1\right)(x_n-x_1)\right] \,,
    \end{multline*}
    where $a = \frac{1}{\sigma}\sqrt{\frac{n}{n-1}}$, and $\Phi(\cdot)$ is the 
    standard Gaussian cumulative distribution function.
\end{proposition}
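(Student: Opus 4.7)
The plan is to reduce everything to a one-dimensional Gaussian calculation and then express the answer as an integral of a standard normal cdf.

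First, I would use Lemma \ref{cor:err_dist} to write $\sumpostdata(\noisydata)_i = x_i + \eta_i - \frac{1}{n}\sum_{j=1}^n \eta_j$. Under the Gaussian mechanism each $\eta_j \sim \cN(0,\sigma^2)$ independently, so $\sumpostdata(\noisydata)_i$ is a linear combination of independent Gaussians and hence Gaussian itself. A direct variance computation gives
\begin{equation*}
\sumpostdata(\noisydata)_i \sim \cN\!\left(x_i,\, \sigma^2\tfrac{n-1}{n}\right),
\end{equation*}
so, setting $\tau=\sigma\sqrt{(n-1)/n}=1/a$, each coordinate is $\cN(x_i,\tau^2)$.

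Second, I would invoke the standard identity for the expected positive part of a Gaussian: if $Y\sim\cN(\mu,\tau^2)$ then $g(\mu):=\EE{}{\relu{Y}} = \mu\,\Phi(\mu/\tau)+\tau\,\phi(\mu/\tau)$, where $\phi$ is the standard normal pdf. A quick differentiation (the $\mu\phi(\mu/\tau)/\tau$ and $\tau\cdot(\mu/\tau^2)\phi(\mu/\tau)$ terms cancel) gives the clean identity $g'(\mu)=\Phi(\mu/\tau)$. Applying this to coordinates $1$ and $n$,
\begin{equation*}
\EE{}{\relu{\sumpostdata(\noisydata)_1}} - \EE{}{\relu{\sumpostdata(\noisydata)_n}} = g(x_1)-g(x_n) = -\int_{x_1}^{x_n}\Phi(\mu/\tau)\,d\mu .
\end{equation*}

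Third, I would combine this with the definition of the bias: $\biasm{\relu{\sumpostdata}}_i = \EE{}{\relu{\sumpostdata(\noisydata)_i}} - x_i$. Subtracting the two entries gives
\begin{equation*}
\biasm{\relu{\sumpostdata}}_1 - \biasm{\relu{\sumpostdata}}_n = (x_n-x_1) - \int_{x_1}^{x_n}\Phi(\mu/\tau)\,d\mu = \int_{x_1}^{x_n}\!\bigl(1-\Phi(\mu/\tau)\bigr)d\mu .
\end{equation*}
Using $1-\Phi(u)=\Phi(-u)$ and the substitution $t=-\mu$ rewrites this as $\int_{-x_n}^{-x_1}\Phi(at)\,dt$, which is the claimed closed form.

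Finally, for the two-sided bound, $\Phi(at)$ is strictly increasing in $t$, so on $[-x_n,-x_1]$ it is bounded between $\Phi(-ax_n)$ and $\Phi(-ax_1)$; multiplying by the interval length $x_n-x_1$ yields the enclosing interval $[\Phi(-ax_n)(x_n-x_1),\,\Phi(-ax_1)(x_n-x_1)]$. There is no serious obstacle here — the only mildly technical step is recognizing that the antiderivative of $\Phi(\mu/\tau)$ is precisely $g(\mu)$, which is what turns a seemingly messy difference of expected positive parts into a single integral of the standard normal cdf.
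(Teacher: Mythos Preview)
Your argument is correct and follows the same overall strategy as the paper: reduce $\sumpostdata(\noisydata)_i$ to a one-dimensional Gaussian $\cN(x_i,\tau^2)$ with $\tau=1/a$, express the bias difference as $\int_{-x_n}^{-x_1}\Phi(at)\,dt$, and bound the integral by monotonicity of $\Phi$. The only real difference is in how the integral identity is obtained: the paper writes $\biasm{\relu{\sumpostdata}}_i=\int_{-\infty}^{-x_i}F_{\nu^*}(t)\,dt$ via a direct integration by parts and then subtracts, whereas you invoke the closed form $g(\mu)=\mu\Phi(\mu/\tau)+\tau\phi(\mu/\tau)$ and its derivative $g'(\mu)=\Phi(\mu/\tau)$ together with the fundamental theorem of calculus. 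These are two presentations of the same computation; your packaging via $g'(\mu)=\Phi(\mu/\tau)$ is arguably tidier, while the paper's version gives an explicit formula for each individual bias term along the way. For the enclosing interval, the paper phrases the bound via the mean value theorem rather than bare monotonicity, but the content is identical.
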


\noindent
It is interesting to demonstrate the tightness of these bounds using the US Census households counts at the county level for the state of Hawaii. 

\begin{example}[Hawaii]
     The state of Hawaii has a total number of $C=453,558$
    households distributed in $n=5$ counties. The experiments use the Laplace mechanism with parameter $\lambda=10$ and the Gaussian mechanism with parameter
    $\sigma=25$. 
    The empirical studies of $\alpha$-fairness and its bounds in Theorem \ref{thm:bounds} associated with
    the post-processing mechanism $\nnsumpostdata$ over $1,000,000$ independent
    runs are reported in Table \ref{tab:fair_and_bounds}. The bounds of Gaussian mechanism use Proposition \ref{thm:lower_explicit};
    those of Laplace mechanism are generated by
    the empirical means. 
\end{example}

The derived lower and upper bounds are really tight and provide decision makers a uniquely valuable tool to assess the unfairness introduced by post-processing.

\begin{table}
  \centering
  \resizebox{.75\columnwidth}{!}{
    \begin{tabular}{r|rrr}
    \toprule
    \multicolumn{1}{r|}{Mechanism $\cM$} & $\alpha$-fairness & Lower & Upper \\
    \midrule
    Laplace & 0.0245  & 0.0242  & 0.0288  \\
    \midrule
    Gaussian & 0.0910  & 0.0897  & 0.1085  \\
    \bottomrule
    \end{tabular}%
    }
    \caption{Case study of Hawaii.}
  \label{tab:fair_and_bounds}%
\end{table}%

%%%%%%%%%%%%%%%%%%%%%%%%%%%%%%%%%%%%%%%%%%%%%%%%%%%%%%%%%%%%%%%%%%%%%%%%%%%%%%%%%
\section{Mechanisms for Downstream Decisions}
\label{sec:alloc}
%%%%%%%%%%%%%%%%%%%%%%%%%%%%%%%%%%%%%%%%%%%%%%%%%%%%%%%%%%%%%%%%%%%%%%%%%%%%%%%%%

Having shown that unfairness is unavoidable in common data-release settings, %this section notice that as shown in \citep{} these disparate errors can be further exacerbated when processed as input to downstream decision problems. Therefore, 
this section aims at designing  post-processing mechanisms for decision processes that minimize their fairness impact on the resulting decisions. The mechanisms studied are tailored for the allocation problem $\al$ described in Section \ref{sec:motivation}, which captures a wide class of resource allocation problems. 

A natural baseline, currently adopted in census data-release tasks, is to first post-process the noisy data to meet the feasibility requirement (i.e., non-negativity) and then apply the allocation formula $\al$ to the post-processed counts. To restore feasibility, it suffices to take the positive part of $\noisydata$ to obtain $\relu{\noisydata}$, or equivalently, project $\noisydata$ onto the non-negative orthant $\RR^n_+$. 

\begin{definition}[Baseline Mechanism (BL)]\label{def:bl}
     The \emph{baseline mechanism} outputs, for each $i\in [n]$,
     \begin{equation*}
         \blmech{\noisydata}_i \coloneqq \frac{a_i\cdot \relu{\tilde{x}_i}}{\sum_{j=1}^n a_j\cdot \relu{\tilde{x}_j}}\,.
     \end{equation*}
 \end{definition}

\noindent
It is possible to derive results similar to Example 
\ref{cor:existence_of_unfairness} for $\relu{\cdot}$ when the  baseline mechanism is used to produce feasible released data.
Additionally, as shown in \citep{DBLP:conf/ijcai/0007FHY21}, the disparate errors resulting from $\relu{\cdot}$ can be further exacerbated when they are used as inputs to downstream decision problems. It suggests that the baseline mechanism might not be a good candidate for mitigating unfairness in this allocation problem. To address this limitation, consider the optimal post-processing mechanism in this context, i.e.,
\begin{equation}\label{eq:opt}
    \pp^*\coloneqq\min_{\pp \in\Pi_{\Delta_n}} ~ \norm{\EE{\noisydata}{\pp(\noisydata)-\truealloc}}_{\rightleftharpoons}\,,
\end{equation}
where $\Pi_{\Delta_n}=\left\{\pp\mid \pp:\RR^n\mapsto \Delta_n\right\}$ represents a class of post-processing mechanisms whose images belong to the probability simplex $\Delta_n$. The optimization problem in Equation \eqref{eq:opt} is intractable in its direct form, since $\truealloc$ is not available to the mechanism,  motivating the need to approximate the objective function. Consider the following proxy
$\EE{\noisydata}{\norm{\pp(\noisydata)-\noisyalloc}_{\rightleftharpoons}}$, which first exchanges the order of expectation and $\norm{\cdot}_{\rightleftharpoons}$ and
then replaces the true allocation $\truealloc$ with its noisy variant $\noisyalloc$.
Then, the optimal post-processing mechanism $\pp_\alpha^*$ associated with this new proxy function becomes:
\begin{equation}
    \begin{aligned}
    \pp_{\alpha}^*(\noisydata)\coloneqq\underset{\bm{v}\in\Delta_n}{\arg\min}  ~
    \norm{\bm{v}-\noisyalloc}_{\rightleftharpoons}
    % \mathrm{s.t.}&~\sum_{i=1}^n v_i=1,~\bm{v}\geq \bm{0}\,.
\end{aligned}\tag{$P_{\alpha}$}\label{prog:s}
\end{equation}
\noindent
A mechanism, which is closely related to program \eqref{prog:s}, is presented as follows.
 \begin{definition}[Projection onto Simplex Mechanism (PoS)]\label{def:pos} 
     The \emph{projection onto simplex mechanism} outputs the allocation as follows.
     \begin{equation*}
        \begin{aligned}
            \posmech{\noisydata} \coloneqq~\underset{\bm{v}\in \Delta^n}{\arg\min}~& \norm{\bm{v}-\noisyalloc}_2
            % \qquad
            % \mathrm{s.t. }~& \sum_{i=1}^n v_i = 1\,,~\bm{v}\geq \bm{0}\,.
            % \nnsumregion=\left\{\bm{v}\in\RR^n~\bigg{|}~\sum_{i=1}^n v_i = b,~\bm{v}\geq \bm{0}\right\},
        \end{aligned}\tag{$P_{\mathrm{PoS}}$}\label{prog:pos}
\end{equation*}
\end{definition}

Program \eqref{prog:pos} projects $\noisyalloc$, which is not necessarily an allocation since it may violate non-negativity constraints, onto the closest feasible allocation. The next theorem establishes the equivalence between program \eqref{prog:s} and
program \eqref{prog:pos}: It leads to a near-optimal post-processing mechanism. (The missing proofs of the rest of this paper can be found in Appendix \ref{app:proof_of_sec_alloc}).

\begin{theorem}\label{thm:a-fair}
    For any noisy data $\noisydata$, the mechanism $\posmech{\noisydata}$ generates
    the unique optimal solution to program \eqref{prog:s}.
\end{theorem}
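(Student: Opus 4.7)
The plan is to exploit the closed-form water-filling structure of the Euclidean projection onto the probability simplex. Writing $a=\noisyalloc$ for brevity, KKT conditions for \eqref{prog:pos} imply that $v^{\star}=\posmech{\noisydata}$ takes the form $v^{\star}_i=\relu{a_i-\tau}$, where $\tau\in\RR$ is the unique scalar with $\sum_{i=1}^n \relu{a_i-\tau}=1$. Partition the coordinates into $I=\{i:a_i>\tau\}$ and $J=\{j:a_j\leq\tau\}$, and set $s^{\star}=v^{\star}-a$, so that $s^{\star}_i=-\tau$ for $i\in I$ and $s^{\star}_j=-a_j\geq-\tau$ for $j\in J$. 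This immediately yields $M^{\star}\coloneqq\min_i s^{\star}_i=-\tau$ and, provided $J\neq\emptyset$, $L^{\star}\coloneqq\max_i s^{\star}_i=-\min_{j\in J}a_j$.

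The heart of the proof is a sandwich argument: for every $v\in\Delta_n$ with residual $s\coloneqq v-a$, it suffices to show $\max_i s_i\geq L^{\star}$ and $\min_i s_i\leq M^{\star}$, from which $\norm{s}_{\rightleftharpoons}\geq L^{\star}-M^{\star}=\norm{s^{\star}}_{\rightleftharpoons}$. The lower bound on the maximum is direct: taking any $j^{\star}\in\arg\min_{j\in J}a_j$, feasibility ($v_{j^{\star}}\geq 0$) forces $s_{j^{\star}}\geq-a_{j^{\star}}=L^{\star}$. For the upper bound on the minimum, use the simplex identity $\sum_i s_i=1-\sum_i a_i$, split the sum across $I$ and $J$, and bound $s_i\geq M\coloneqq\min_k s_k$ on $I$ together with $s_j\geq-a_j$ on $J$. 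This gives $1-\sum_i a_i\geq |I|\cdot M-\sum_{j\in J}a_j$; combining with the identity $\tau=(\sum_{i\in I}a_i-1)/|I|$, which follows from $\sum_{i\in I}(a_i-\tau)=1$, produces exactly $M\leq-\tau=M^{\star}$.

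Thus $v^{\star}$ is optimal for \eqref{prog:s}. Uniqueness follows by chasing equality cases: $\norm{s}_{\rightleftharpoons}=\norm{s^{\star}}_{\rightleftharpoons}$ together with the two inequalities above forces both $\max_i s_i=L^{\star}$ and $\min_i s_i=M^{\star}$, and attaining the bound on the minimum requires equality in every slack inequality used above, so $s_i=-\tau$ on $I$ and $s_j=-a_j$ on $J$, pinning $v=v^{\star}$ coordinate by coordinate. The main technical obstacle is the degenerate regime $J=\emptyset$, where $L^{\star}=M^{\star}=-\tau$ and the range is zero; this has to be handled separately by noting that $v^{\star}=a-\tau\bm{1}\in\Delta_n$ already, and any competing $v\in\Delta_n$ with $s=v-a$ summing to $-n\tau$ and range zero must coincide with $v^{\star}$. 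Ties at the threshold $a_i=\tau$ are harmless since $s^{\star}_i=-\tau=-a_i$ agrees whether $i$ is assigned to $I$ or $J$.
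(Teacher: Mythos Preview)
Your argument is correct, and in one respect more complete than the paper's own: you actually establish \emph{uniqueness}, whereas the paper's proof (despite the theorem's wording) only rules out a strictly better $\bm{u}$ by contradiction and so proves optimality alone.

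The two proofs rest on the same water-filling structure $v^{\star}_i=\relu{a_i-\tau}$, but proceed differently. The paper partitions by the sign of $a_i$ (the set $\cI^-$), assumes a competitor $\bm{u}$ with $\tilde f(\bm u)<\tilde f(v^\star)$, and deduces $\sum_j u_j>1$, a contradiction. You instead partition at the threshold $\tau$ (your $I,J$) and prove the two-sided sandwich $\max_i s_i\geq L^\star$, $\min_i s_i\leq M^\star$ directly, which gives optimality without contradiction and lets you read off uniqueness from the equality cases. Your route is a bit cleaner and yields the stronger conclusion; the paper's route is slightly shorter for optimality alone. Two minor remarks: your symbol $a$ collides with the paper's weight vector $\bm a$ from the allocation formula, so pick another letter if you splice this in; and since $\sum_i P^F_i(\noisydata)=1$, your displayed identity $\sum_i s_i=1-\sum_i a_i$ is just $0$, which you might state explicitly.
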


\begin{figure}
    \centering
    \includegraphics[width=\linewidth]{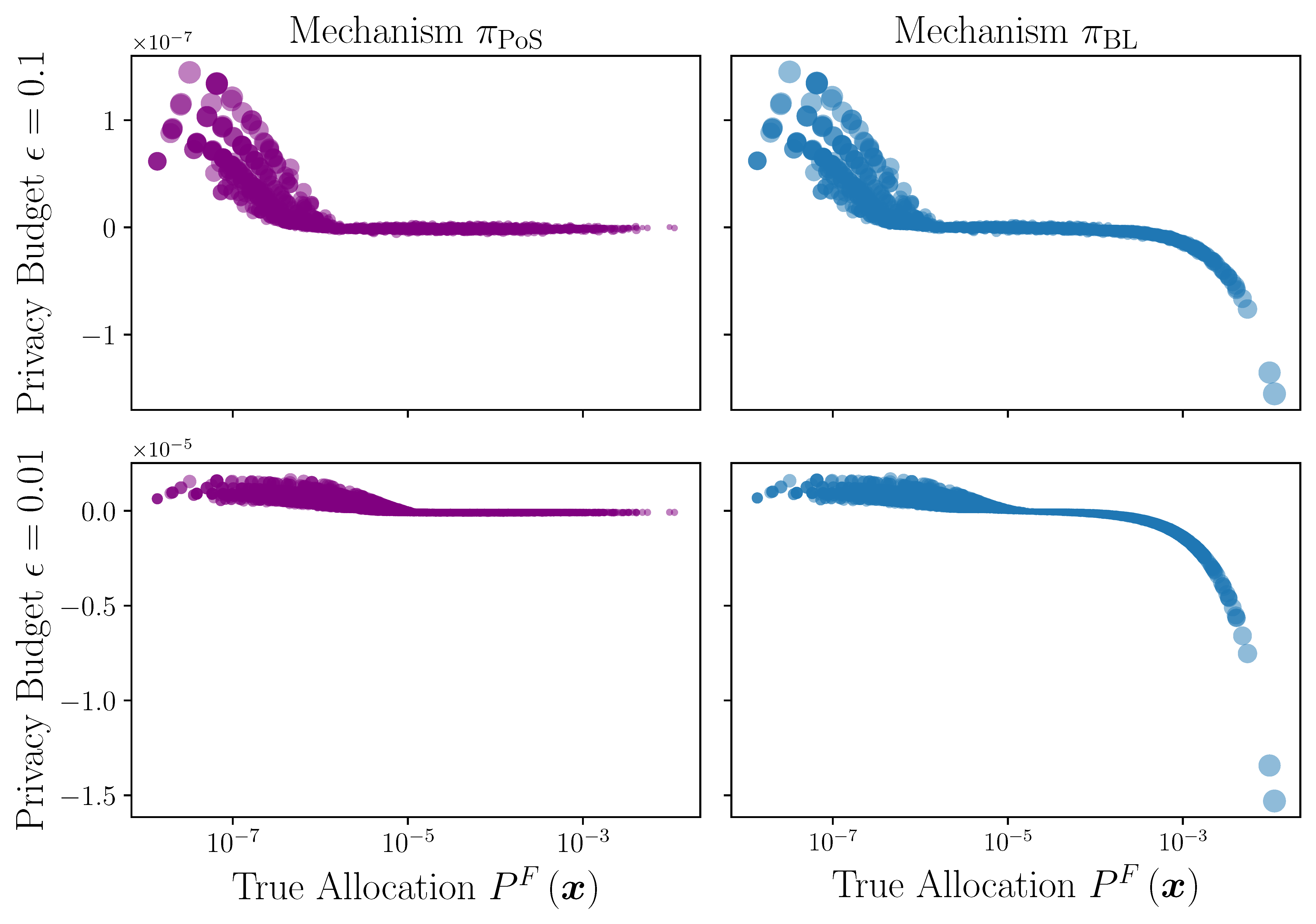}
    \caption{Illustration of the empirical biases ($y$-axis) associated with the two mechanisms $\pos$ and $\bl$
    (columns) for different privacy budgets (rows)
    versus the portions of education funds ($x$-axis) schools districts are guaranteed in the allocation with the
    true data. The Laplace mechanism is used for privacy protection and each experiment is repeated for
    200,000 times.}
    \label{fig:bias}
\end{figure}

\begin{table*}[!t]
  \centering
  \resizebox{0.65\linewidth}{!}{
    \begin{tabular}{l|rr|rr|rr}
    \toprule
    Privacy Budgets & \multicolumn{2}{c|}{$\epsilon=0.1$} & \multicolumn{2}{c|}{$\epsilon=0.01$} & \multicolumn{2}{c}{$\epsilon=0.001$} \\
    \midrule
    Mechanisms & $\bl$    & $\pos$   &  $\bl$     & $\pos$   &  $\bl$     
    & $\pos$ \\
    \midrule
    $\alpha$-fairness & 3.00E-07 & \textbf{1.50E-07} & 1.70E-05 & \textbf{1.75E-06} & 8.06E-04 & \textbf{2.23E-05} \\
    Cost of Privacy & 1.62E-05 & \textbf{1.41E-05} & 1.33E-03 & \textbf{1.04E-03} & 5.90E-02 & \textbf{3.49E-02} \\
    \bottomrule
    \end{tabular}%
    }
  \caption{Comparison between the two post-processing mechanisms in terms of 
  two fairness metrics for different privacy budgets. This work takes Laplace mechanism and 
  $200,000$ independent runs for numerical evaluation.}
  \label{tab:fairness_metrics}%
\end{table*}%

\noindent
Figure \ref{fig:bias}  visualizes the resulting biases
of the Title I allocation associated with these two mechanisms, $\pos$ and $\bl$. It is noteworthy that these two mechanisms achieve roughly same performance for the school districts that are allocated small amounts. However, under the baseline mechanism, the school districts that account for a significant portion of total budget receive much less funding than what they are supposed to receive when no differential privacy is applied. This is not the case for mechanism
$\pos$, which reduces unfairness significantly. Recall that the notion of $\alpha$-fairness measures the maximum difference among
biases associated with different entities. Pictorially, the biases associated with $\pos$ do not vary as drastically as the baseline mechanism. Table \ref{tab:fairness_metrics} quantifies the benefits 
of  $\pos$ over $\bl$.

%%%%%%%%%%%%%%%%%%%%%%%%%%%%%%%%%%%%%%%%%%%%%%%%
\section{Generalizations}
\label{sec:generalization}
%%%%%%%%%%%%%%%%%%%%%%%%%%%%%%%%%%%%%%%%%%%%%%%%

The results in Section \ref{sec:alloc} can be generalized to
other fairness metrics. This section discusses an important metric that quantifies the extra budget needed to ensure that all of the problem entities receive the resources (e.g., amounts of funds) they are warranted by law. 

\begin{definition}[Cost of privacy \citep{DBLP:conf/ijcai/0007FHY21}]
    Given the mechanism $\pp$, the total budget  $B$ to distribute and the true data
    $\truedata$,
    the \emph{cost of privacy} is defined as
    \begin{equation*}
        \cop\coloneqq \sum_{j\in \cJ^-} \vert \bias{\pp}{\al}_j\vert \cdot B\,,
    \end{equation*}
    with the index set $\cJ^-\coloneqq \left\{j\mid\bias{\pp}{\al}_j<0\right\}$.
\end{definition}

\noindent The next proposition establishes the equivalence between the cost of privacy and the 
$\ell_1$ norm of the bias when the image of the mechanism $\pp$ is restricted to
be the probability simplex $\Delta_n$.
\begin{proposition}[Cost of privacy as a $\ell_1$-norm]
\label{prop:cop_as_l1_norm}
    Suppose that $\pp$ is a post-processing mechanism, which belongs to the class $\Pi_{\Delta_n}$.
    The cost of privacy is a multiplier of the $\ell_1$-norm
    of its bias, i.e.,
    \begin{equation*}
        \cop = \frac{B}{2}\cdot\norm{\bias{\pp}{\al}}_1\,.
    \end{equation*}
\end{proposition}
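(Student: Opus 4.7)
The plan is to exploit the fact that both $\pi(\noisydata)$ and $\truealloc$ lie in the probability simplex $\Delta_n$, which forces the bias vector to have entries summing to zero, and then use this to relate the sum over the negative indices to the full $\ell_1$ norm.

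First I would observe that since $\pi \in \Pi_{\Delta_n}$, the random vector $\pi(\noisydata)$ always lies in $\Delta_n$ (so its entries sum to $1$), and consequently $\EE{\noisydata}{\pi(\noisydata)}$ also has entries summing to $1$ by linearity of expectation. Similarly, by definition of $\al$ as a proportional allocation, $\truealloc \in \Delta_n$. Subtracting componentwise gives
\begin{equation*}
    \sum_{i=1}^n \bias{\pi}{\al}_i \;=\; \sum_{i=1}^n \EE{\noisydata}{\pi(\noisydata)_i} - \sum_{i=1}^n \al_i(\truedata) \;=\; 1 - 1 \;=\; 0.
\end{equation*}

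Next I would split the index set into $\cJ^- = \{j : \bias{\pi}{\al}_j < 0\}$ and its complement $\cJ^+ = \{j : \bias{\pi}{\al}_j \geq 0\}$. The zero-sum identity rearranges to
\begin{equation*}
    \sum_{j \in \cJ^+} \bias{\pi}{\al}_j \;=\; -\sum_{j \in \cJ^-} \bias{\pi}{\al}_j \;=\; \sum_{j \in \cJ^-} \bigl|\bias{\pi}{\al}_j\bigr|.
\end{equation*}
The $\ell_1$ norm then decomposes as $\norm{\bias{\pi}{\al}}_1 = \sum_{j\in\cJ^+}\bias{\pi}{\al}_j + \sum_{j\in\cJ^-}\bigl|\bias{\pi}{\al}_j\bigr| = 2\sum_{j\in\cJ^-}\bigl|\bias{\pi}{\al}_j\bigr|$.

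Finally, multiplying both sides by $B/2$ and comparing with the definition of the cost of privacy yields $\cop = \sum_{j\in\cJ^-}\bigl|\bias{\pi}{\al}_j\bigr|\cdot B = \tfrac{B}{2}\norm{\bias{\pi}{\al}}_1$, which is the claimed equality. There is no real obstacle here; the only subtlety is making sure the simplex constraint is invoked for \emph{both} $\pi(\noisydata)$ (under the expectation) and $\truealloc$, since it is precisely this joint membership that makes the bias a zero-sum vector.
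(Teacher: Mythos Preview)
Your proof is correct and follows essentially the same argument as the paper: both use the simplex membership of $\pi(\noisydata)$ and $\truealloc$ to show the bias vector has entries summing to zero, then split into $\cJ^-$ and its complement to conclude that the $\ell_1$ norm equals twice the sum over $\cJ^-$. The only cosmetic difference is that you make the linearity-of-expectation step explicit, which is fine.
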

\noindent
Since the optimal post-processing is again intractable in its direct form, i.e., it cannot be solved as an optimization problem, its objective can be replaced by the proxy $B/2\cdot\EE{\noisydata}{\norm{\pp(\noisydata)-\noisyalloc}_1}$.
Then, the optimal post-processing mechanism $\pp_{\mathrm{CoP}}^*$ associated with this proxy function is given by
\begin{equation}
    \begin{aligned}
    \pp_{\mathrm{CoP}}^*(\noisydata)\coloneqq\underset{\bm{v}\in\Delta_n}{\arg\min}  ~
    \frac{B}{2}\cdot\norm{\bm{v}-\noisyalloc}_1\,.
    % \mathrm{s.t.}&~\sum_{i=1}^n v_i=1,~\bm{v}\geq \bm{0}\,;
\end{aligned}\tag{$P_{\mathrm{CoP}}$}\label{prog:s-cop}
\end{equation}
The next theorem depicts the connection between $P_{\mathrm{CoP}}$
and the two post-processing mechanisms proposed in Section \ref{sec:alloc}.
\begin{theorem}\label{thm:cop}
    For any noisy data $\noisydata$, the mechanism $\posmech{\noisydata}$ generates
    an optimal solution to program \eqref{prog:s-cop}. For any noisy data $\noisydata$ 
    such that $\sum_{j=1}^n a_j\cdot \tilde{x}_j >0$, 
    mechanism $\blmech{\noisydata}$ generates
    an optimal solution to program \eqref{prog:s-cop} as well.
\end{theorem}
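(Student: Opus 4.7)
The plan is to characterize all minimizers of program \eqref{prog:s-cop} in closed form and then verify that $\posmech{\noisydata}$ and (under the extra hypothesis) $\blmech{\noisydata}$ satisfy that characterization. Write $\bm{y} = \noisyalloc$. Whenever $\sum_j a_j \tilde{x}_j \neq 0$, the entries of $\bm{y}$ share this common denominator, so $\sum_i y_i = 1$. Thus for any $\bm{v}\in\Delta_n$, the deviation $\bm{\delta} = \bm{v} - \bm{y}$ satisfies $\sum_i \delta_i = 0$, and hence $\|\bm{v}-\bm{y}\|_1 = 2 \sum_{i:\delta_i>0}\delta_i$. The first step is to rewrite \eqref{prog:s-cop} (up to the fixed factor $B/2$) as a minimum-positive-mass problem of this form.

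The second step lower bounds this positive mass. For every index $i$ with $y_i < 0$, feasibility $v_i \geq 0$ forces $\delta_i \geq -y_i > 0$, contributing at least $|y_i|$ to the sum; no other index is forced to contribute. Hence $\tfrac{1}{2}\|\bm{v}-\bm{y}\|_1 \geq \sum_{i: y_i < 0}|y_i|$, with equality if and only if $v_i = 0$ on every index with $y_i \leq 0$ and $v_i \leq y_i$ on every index with $y_i > 0$. Compatibility with $\sum_i v_i = 1$ holds because the positive part of $\bm{y}$ sums to $1 + \sum_{i: y_i<0}|y_i| \geq 1$, so the optimality set is nonempty.

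The third step is verification. For PoS, the KKT conditions for Euclidean projection onto the simplex yield $\posmech{\noisydata}_i = \relu{y_i - \nu^{*}}$ for a Lagrange multiplier $\nu^{*}$ satisfying $\sum_i \relu{y_i - \nu^{*}} = 1$; since the positive mass of $\bm{y}$ already meets or exceeds $1$, this forces $\nu^{*} \geq 0$. Consequently $\posmech{\noisydata}_i = 0$ when $y_i \leq 0$ and $\posmech{\noisydata}_i \leq y_i$ when $y_i > 0$, placing $\posmech{\noisydata}$ in the optimality set. For BL, under $\sum_j a_j \tilde{x}_j > 0$ the sign of $y_i$ matches that of $\tilde{x}_i$. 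Whenever $\tilde{x}_i \leq 0$, we have $\relu{\tilde{x}_i} = 0$ and hence $\blmech{\noisydata}_i = 0$; whenever $\tilde{x}_i > 0$, the denominator $\sum_{j: \tilde{x}_j > 0} a_j \tilde{x}_j$ is at least as large as $\sum_j a_j \tilde{x}_j$, yielding $\blmech{\noisydata}_i \leq y_i$. Both mechanisms therefore satisfy the characterization.

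The main obstacle is getting step two right, in particular spotting the upper-bound condition $v_i \leq y_i$ on the positive-support indices, which is easy to overlook when focusing only on non-negativity; once this characterization is in hand, the verification for PoS reduces to reading off the KKT formula, and the verification for BL reduces to the one-line denominator comparison above.
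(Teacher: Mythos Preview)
Your argument is correct and follows essentially the same strategy as the paper: establish the lower bound $\tfrac{1}{2}\norm{\bm v-\noisyalloc}_1\ge\sum_{i:\,\nalloci{i}<0}|\nalloci{i}|=\tfrac{1}{2}(\norm{\noisyalloc}_1-1)$ and then check that $\pos$ attains it via the KKT formula $\posmech{\noisydata}_i=\relu{\nalloci{i}-T}$ with $T\ge 0$. The differences are presentational rather than substantive: you use the identity $\sum_i\delta_i=0\Rightarrow\norm{\bm\delta}_1=2\sum_{\delta_i>0}\delta_i$ and extract a full characterization of the optimal set, which makes the verification step modular; the paper instead computes the lower bound directly with the triangle inequality and then evaluates $\tilde h(\posmech{\noisydata})$. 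Your route also carries out the $\bl$ case explicitly (the denominator comparison $\sum_j a_j\tilde x_j\le\sum_{j:\tilde x_j>0}a_j\tilde x_j$), whereas the paper only says this follows by similar techniques, so on that half your write-up is actually more complete.
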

\noindent
This theorem demonstrates that mechanism $\pos$ always produces
an optimal solution to program \eqref{prog:s-cop} while the baseline mechanism achieves
optimality with high probability. Table \ref{tab:fairness_metrics} shows that $\pos$ may significantly outperform the baseline mechanism, providing substantial reductions in the cost of privacy.

\section{Discussion and Conclusion}
This paper was motivated by the recognition that the disparate error impacts of post-processing of differentially private outputs are poorly understood. Motivated by Census applications, it took a first step toward understanding how and why post-processing may produce disparate errors in data release and  downstream allocation tasks. 
The paper showed that a popular class of post-processing mechanisms commonly adopted to restore invariants during the release of population statistics are inherently unfair. It proposed a tight bound on the unfairness and discussed an efficient method to evaluate the disparate impacts. 
Motivated by these negative results, the paper studied how post-processed data affects downstream decisions under a fairness lens and how to contrast such effects. In this context, the paper proposed to release the noisy, non-post-processed data, and post-processing the output of the downstream decisions instead.  It focused on an important class of resource allocation problems used to allot funds or benefits and proposed a novel (approximately) optimal post-processing mechanism that is effective in mitigating unfairness under different fairness metrics. 
The analysis was complemented with numerical simulation on funds allocation based on private Census data showing up to an order magnitude improvements on different accuracy disparity metrics. 

These results may have strong implications with respect to fairness in downstream decisions and should inform statistical agencies about the advantage of releasing private non-post-processed data, in favor of designing post-processing methods directly applicable in the downstream decision tasks of interest. 

%% The file named.bst is a bibliography style file for BibTeX 0.99c
\section*{Acknowledgement}
This research is partially supported by the National Science Foundation (NSF 2133169 and NSF 2133284). The opinions expressed are solely those of the authors.

\bibliographystyle{named}
\bibliography{ijcai22}

%%%%%%%%%%%%%%%%%%%%%%%%%%%%%%%%%%%%%%%%%%%%%%%%%%%%%%%%%%%%%%%%%%%%%%
\appendix
%%%%%%%%%%%%%%%%%%%%%%%%%%%%%%%%%%%%%%%%%%%%%%%%%%%%%%%%%%%%%%%%%%%%%%
\begin{table*}[!th]
     \centering
    %\resizebox{\linewidth}{!} 
    % \vspace{2pt}
     {%%
     \begin{tabular}{cl}
     \toprule
        $\truedata$ & True data\\
      $\noisydata$ & Noisy data with noise injected: $\noisydata=\truedata+\bm{\eta}$ \\
    %   $\weight$ & The weight vector of the $n$ districts associated with students expenditures\\
      $n$  & Dimension of the true data $\truedata$ \\
    %   $\postdata$ & Post-processed solution of program \eqref{program} with linear constraints $\bm{A}\bm{v}=\bm{b}$.\\
    %   $\nnpostdata$ & Post-processed solution of program \eqref{nnprogram} with linear constraints $\bm{A}\bm{v}=\bm{b}$ and \\
    %   & non-negativity constraints $\bm{v}\geq \bm{0}$.\\
    $\relu{}$ & Operator taking the positive part of the input, i.e., $\relu{\bm{v}}=\max\{\bm{v},\bm{0}\}$\\
    $\negpart{}$ & Operator taking the negative part of the input, i.e., $\relu{\bm{v}}=-\min\{\bm{v},\bm{0}\}$\\
    $\norm{\cdot}_{\rightleftharpoons}$ & Operator taking the difference between maximum and minimum value of the input, \\
    &i.e.,
    $\norm{\bm{v}}_{\rightleftharpoons}=\max_{i\in[n]}v_i - \min_{i\in[n]}v_i$.\\
      $\sumpostdata$ & Post-processed solution of program $P_S$ with summation constraint $\sum_{i=1}^n\bm{v}_i = C$\\
      $\sumregion$ & Feasible region of program $P_S$  $\left\{\bm{v}\mid\sum_{i=1}^n\bm{v}_i = C\right\}$\\
      $\relu{\sumpostdata}$ & Post-processed solution of program $P_S$, taking the non-negative part only\\
      $\nnsumpostdata$ & Post-processed solution of program \eqref{nnsumprogram} with summation constraint $\sum_{i=1}^n\bm{v}_i = C$ and \\
      $\nnsumregion$ & Feasible region of program \eqref{nnsumprogram}  $\left\{\bm{v}\mid\sum_{i=1}^n\bm{v}_i = C, \bm{v}\geq \bm{0}\right\}$\\
      $\cM$ & Differentially private mechanism for noise addition \\
      $\al$ & allocation formula.\\
      $\talloci{i}$ & The true allocation for entity $i$\\
      $\nalloci{i}$ & The noisy allocation for entity $i$\\
      $B$ & Total budget in allocation problem\\
      \bottomrule
     \end{tabular}
     }
     \caption{Important notations adopted in this paper}
     \label{tab:symbols}
\end{table*}
\noindent
The commonly adopted notation and symbols throughout the paper and the appendix are reported in Table \ref{tab:symbols}.

%%%%%%%%%%%%%%%%%%%%%%%%%%%%%%%%%%%%%%%%%%%%%%%%%%%%%%%%%%%%%%%%%%%%%%%%%%%%%%%%%%%%%%%%%%%%%%
\section{Missing Proofs}
%%%%%%%%%%%%%%%%%%%%%%%%%%%%%%%%%%%%%%%%%%%%%%%%%%%%%%%%%%%%%%%%%%%%%%%%%%%%%%%%%%%%%%%%%%%%%%

\subsection{Projection-related Results}
\label{app:proj}

In the first place, this section introduces a general equality-constrained post-processing
mechanism \eqref{program}
\begin{equation}
     \begin{aligned}
        \postdata(\noisydata)\coloneqq \underset{\bm{v}\in\RR^n}{\arg\min}&~\norm{\bm{v}-\noisydata}_2\\
        \mathrm{s.t.}&~
    \bm{A}\bm{v}=\bm{b}\,.
     \end{aligned}\tag{$P$}\label{program}
\end{equation}
A general version of Lemma \ref{cor:err_dist} associated with program \eqref{program} is presented
as follows.
\begin{lemma}\label{lem:kkt}
    For any noisy data $\noisydata\in \RR^n$, the closed-form solution $\postdata(\noisydata)$ to program \eqref{program}
    can be expressed as
    \begin{equation*}
         \postdata(\noisydata) = \truedata + (\bm{I}_n - \bm{A}^\top(\bm{A}\bm{A}^\top)^{-1}\bm{A})(\noisydata-\truedata)\,,
    \end{equation*}
    where $\bm{I}_n$ is the identity matrix of size $n$.
\end{lemma}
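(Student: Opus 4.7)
The plan is to solve program \eqref{program} directly via Lagrangian duality and the KKT conditions, and then rewrite the resulting expression so that it is centered at the true data $\truedata$ rather than at the origin. Since the objective $\tfrac12\norm{\bm v - \noisydata}_2^2$ is strictly convex and the constraint set $\{\bm v : \bm A \bm v = \bm b\}$ is an affine subspace (assumed nonempty with $\bm A$ having full row rank, so that $\bm A \bm A^\top$ is invertible), the program has a unique optimum characterized by stationarity and primal feasibility.

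First I would form the Lagrangian $L(\bm v, \bm\mu) = \tfrac12 \norm{\bm v - \noisydata}_2^2 + \bm\mu^\top(\bm A \bm v - \bm b)$ and take $\nabla_{\bm v} L = \bm 0$, obtaining $\bm v = \noisydata - \bm A^\top \bm\mu$. Substituting this into $\bm A \bm v = \bm b$ gives $\bm A \bm A^\top \bm\mu = \bm A \noisydata - \bm b$, hence $\bm\mu = (\bm A \bm A^\top)^{-1}(\bm A \noisydata - \bm b)$. Plugging back yields the closed form
\begin{equation*}
    \postdata(\noisydata) = \noisydata - \bm A^\top (\bm A \bm A^\top)^{-1} (\bm A \noisydata - \bm b).
\end{equation*}

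The second step is to recenter this expression at the true data. Since the data invariants encoded by $\bm A \bm v = \bm b$ are, by construction, satisfied by $\truedata$ (this is the whole point of imposing them as a post-processing constraint), one has $\bm b = \bm A \truedata$. Substituting this identity, writing $\noisydata = \truedata + (\noisydata - \truedata)$, and collecting terms gives
\begin{equation*}
    \postdata(\noisydata) = \truedata + \bigl(\bm I_n - \bm A^\top (\bm A \bm A^\top)^{-1} \bm A\bigr)(\noisydata - \truedata),
\end{equation*}
which is the desired form. Uniqueness of the minimizer follows from strict convexity of $\norm{\cdot - \noisydata}_2^2$ on the affine feasible region.

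I do not expect a genuine obstacle here: the argument is textbook convex optimization applied to a quadratic program with linear equality constraints. The only point requiring mild care is the implicit assumption that $\bm A$ has full row rank, so that $(\bm A \bm A^\top)^{-1}$ is well defined; this is the standard setting in which a projection onto the affine space $\{\bm v : \bm A \bm v = \bm b\}$ is given by the orthogonal projector $\bm I_n - \bm A^\top(\bm A \bm A^\top)^{-1}\bm A$ applied to the displacement $\noisydata - \truedata$, and it is immediate to verify that in the data-release instantiations used later (e.g., the single summation constraint underlying $\sumpostdata$ in Lemma \ref{cor:err_dist}) this hypothesis holds.
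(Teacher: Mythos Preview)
Your proof is correct and follows essentially the same route as the paper: form the Lagrangian for the equivalent squared-norm problem, solve the KKT system to obtain $\postdata(\noisydata)=\noisydata-\bm A^\top(\bm A\bm A^\top)^{-1}(\bm A\noisydata-\bm b)$, and then recenter at $\truedata$ using $\bm A\truedata=\bm b$. The paper likewise invokes full row rank of $\bm A$ for invertibility of $\bm A\bm A^\top$ and the feasibility of $\truedata$ for the recentering step.
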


\begin{proof}
    Consider the following convex optimization problem, which is equivalent to program \eqref{program}.
    \begin{equation*}
        \begin{aligned}
            \postdata(\noisydata)=\underset{\bm{v}\in\RR^n}{\arg\min}~
            \frac{1}{2}\norm{\bm{v}-\noisydata}_2^2 \qquad
            \text{s.t.}~  \bm{A}\bm{v}=\bm{b}\,.
        \end{aligned}
    \end{equation*}
    The Lagrange function is then given by
    \begin{equation*}
        L(\bm{v},\bm{\mu})=\frac{1}{2}\inner{\bm{v}-\noisydata}{\bm{v}-\noisydata}+\inner{\bm{\mu}}{\bm{A}\bm{v}-\bm{b}}\,.
    \end{equation*}
    To solve program \eqref{program} in exact form, it suffices to find a feasible solution $(\bm{v}^*, \bm{\mu}^*)$ to 
    the Karush-Kuhn-Tucker (KKT) conditions shown as follows.
    \begin{align*}
        \bm{v}^*-\noisydata +  \bm{A}^\top\bm{\mu}^* =&~~\bm{0}\,, & &(\textrm{Stationarity})\\
        \bm{A}\bm{v}^*=&~~\bm{b}\,. & &(\textrm{Primal feasibility})
    \end{align*}
    It follows that
    \begin{equation*}
        \bm{A}\bm{v}^*=\bm{A}(\noisydata- \bm{A}^\top\bm{\mu}^*)=\bm{A}\noisydata- \bm{A}\bm{A}^\top\bm{\mu}^*=\bm{b}\,.
    \end{equation*}
    Since $\bm{A}$ is of full row rank, the $m\times m$ matrix $\bm{A}\bm{A}^\top$ is invertible, which implies that
    \begin{align}
        \bm{\mu}^* &= (\bm{A}\bm{A}^\top)^{-1}(\bm{A}\noisydata - \bm{b})\,,\nonumber\\
        \bm{v}^* &= \noisydata - \bm{A}^\top \bm{\mu}^*=\noisydata - \bm{A}^\top(\bm{A}\bm{A}^\top)^{-1}(\bm{A}\noisydata - \bm{b})\nonumber\\
        &=\truedata + (\noisydata-\truedata) - \bm{A}^\top(\bm{A}\bm{A}^\top)^{-1}(\bm{A}\noisydata - A\truedata)\label{eq:feasible_x}\\
        &=\truedata + (\bm{I}_n - \bm{A}^\top(\bm{A}\bm{A}^\top)^{-1}\bm{A})(\noisydata-\truedata)\,,\nonumber
    \end{align}
    where Equation \eqref{eq:feasible_x} comes from the assumption that the true data $\truedata$ belongs to
    $\region$. Therefore, $\bm{v}^*$ is an optimal solution to program \eqref{program} and equal to
    $\postdata(\noisydata)$ due to uniqueness of $\postdata(\noisydata)$.
\end{proof}
\noindent
Likewise, a more general non-negative equality-constrained post-processing mechanism
\eqref{nnprogram} is defined as follows.
\begin{equation}
          \begin{aligned}
             \nnpostdata(\noisydata)\coloneqq \underset{\bm{v}\in\RR^n}{\arg\min}&~\norm{\bm{v}-\noisydata}_2\\
             \mathrm{s.t.}&~
         \bm{A}\bm{v}=\bm{b}\,,\\
         &~\bm{v}\geq \bm{0}\,.
          \end{aligned}\tag{$P_+$}\label{nnprogram}
\end{equation}
Let $\cL=\{\bm{v}\mid \bm{A}\bm{v}=\bm{0}\}$ represent the linear subspace associated with the affine subspace
$\region=\{\bm{v}\mid \bm{A}\bm{v}=\bm{b}\}$ and $\cL^{\bot}$ the orthogonal complement of $\cL$. 
It follows that the $n$-dimensional linear space $\RR^n$ is the direct sum of the linear subspace $\cL$
and its orthogonal complement $\cL^\bot$, i.e., $\RR^n=\cL\oplus \cL^\bot$. For notational convenience, let $n'$ denote the dimension
of the linear subspace $\cL$, $\dim(\cL)$. Since the matrix $\bm{A}$ is of full row rank, $\dim(\cL)=n'=n-m$ and $\dim\left(\cL^\bot\right)=m$.

\begin{lemma}\label{lem:how_to_project}
    For any $\bm{y}\in \region$,
    $\noisydata \in \bm{y}+\cL^\bot$ if and only if $\postdata(\noisydata) = \bm{y}$, i.e.,
    \begin{equation*}
        \bm{y}+\cL^\bot = \left\{\noisydata\mid \postdata(\noisydata)=\bm{y}\right\}\,.
    \end{equation*}
\end{lemma}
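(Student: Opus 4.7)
The plan is to prove this set equality by using the classical characterization of orthogonal projection onto an affine subspace: a point $\bm{y}\in\region$ is the projection of $\noisydata$ onto $\region$ if and only if the residual $\noisydata-\bm{y}$ lies in $\cL^\bot$. Since $\region$ is the affine translate of $\cL$, vectors of the form $\bm{v}-\bm{y}$ with $\bm{v}\in\region$ sweep out exactly $\cL$, so orthogonality of the residual to $\region$ amounts to orthogonality to $\cL$.

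First, for the forward inclusion, I would take any $\noisydata\in\bm{y}+\cL^\bot$ and write $\noisydata=\bm{y}+\bm{z}$ with $\bm{z}\in\cL^\bot$. For any competing feasible point $\bm{v}\in\region$, the difference $\bm{v}-\bm{y}$ lies in $\cL$, and the Pythagorean identity gives
\begin{equation*}
    \norm{\bm{v}-\noisydata}_2^2=\norm{(\bm{v}-\bm{y})-\bm{z}}_2^2=\norm{\bm{v}-\bm{y}}_2^2+\norm{\bm{z}}_2^2,
\end{equation*}
using $\langle\bm{v}-\bm{y},\bm{z}\rangle=0$. The right-hand side is uniquely minimized over $\region$ by $\bm{v}=\bm{y}$, so $\postdata(\noisydata)=\bm{y}$.

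For the reverse inclusion, suppose $\postdata(\noisydata)=\bm{y}$. I would invoke the KKT stationarity condition established in the proof of Lemma \ref{lem:kkt}, namely $\bm{y}-\noisydata+\bm{A}^\top\bm{\mu}^*=\bm{0}$ for some dual variable $\bm{\mu}^*$. This gives $\noisydata-\bm{y}=\bm{A}^\top\bm{\mu}^*\in\mathrm{range}(\bm{A}^\top)$. Since $\mathrm{range}(\bm{A}^\top)=\cL^\bot$ (the standard fundamental-subspaces identity, using that $\cL=\ker(\bm{A})$), we conclude $\noisydata\in\bm{y}+\cL^\bot$.

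The argument is essentially a careful bookkeeping of standard linear-algebra facts, so I do not expect any serious obstacle. The only subtlety worth stating explicitly is the identification $\mathrm{range}(\bm{A}^\top)=\cL^\bot$, which relies on $\bm{A}$ having full row rank (already assumed when $\bm{A}\bm{A}^\top$ was inverted earlier); everything else follows either from the Pythagorean decomposition or from the KKT conditions already derived.
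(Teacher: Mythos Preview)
Your proof is correct but takes a genuinely different route from the paper. For the forward direction ($\noisydata\in\bm{y}+\cL^\bot\Rightarrow\postdata(\noisydata)=\bm{y}$), the paper plugs $\noisydata=\bm{y}+\bm{A}^\top\bm{\gamma}$ into the closed-form solution $\postdata(\noisydata)=\truedata+(\bm{I}_n-\bm{A}^\top(\bm{A}\bm{A}^\top)^{-1}\bm{A})(\noisydata-\truedata)$ from Lemma~\ref{lem:kkt} and simplifies algebraically; you bypass the formula entirely and argue directly on the objective via the Pythagorean decomposition. For the reverse direction, the paper appeals to the direct-sum decomposition $\RR^n=\cL\oplus\cL^\bot$ to write $\noisydata=\bm{u}+\bm{l}$ uniquely with $\bm{u}\in\region$, $\bm{l}\in\cL^\bot$, and then invokes the forward direction to force $\bm{u}=\bm{y}$; you instead read the conclusion straight off the KKT stationarity equation. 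Your argument is slightly more self-contained and elementary (it never touches the explicit projection matrix), while the paper's version reuses the heavier machinery already developed in Lemma~\ref{lem:kkt}. One small remark: the identity $\mathrm{range}(\bm{A}^\top)=(\ker\bm{A})^\bot=\cL^\bot$ is the standard four-subspaces theorem and holds for any matrix $\bm{A}$, so the full-row-rank hypothesis is not actually needed at the point where you flag it.
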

\begin{proof}
    (``only if" part) Suppose that there exists some $\bm{l}\in\cL^\bot$ and
    $\bm{\gamma}\in\RR^m$ such that $\noisydata=\bm{y}+\bm{l}$ and $\bm{l}=\bm{A}^\top \bm{\gamma}$. By Lemma \ref{lem:kkt}, it follows that 
    \begin{align}
        \postdata(\noisydata)&=\truedata + (\bm{I}_n - \bm{A}^\top(\bm{A}\bm{A}^\top)^{-1}\bm{A})(\noisydata-\truedata)
        \nonumber\\
        &=\truedata + (\bm{I}_n -
        \bm{A}^\top(\bm{A}\bm{A}^\top)^{-1}\bm{A})\left[\left(\bm{y}-\truedata\right)+\bm{A}^\top\bm{\gamma}\right]
        \nonumber\\
        &=\truedata+\left(\bm{y}-\truedata\right)+(\bm{I}_n -
        \bm{A}^\top(\bm{A}\bm{A}^\top)^{-1}\bm{A})\bm{A}^\top\bm{\gamma}\label{eq:how_to_project_1}\\
        &=\bm{y}+(\bm{A}^\top - \bm{A}^\top(\bm{A}\bm{A}^\top)^{-1}\bm{A}\bm{A}^\top)\bm{\gamma}\nonumber\\
        &=\bm{y}+(\bm{A}^\top - \bm{A}^\top )\bm{\gamma}\nonumber\\
        &=\bm{y}\nonumber\,,
    \end{align}
    where Equation \eqref{eq:how_to_project_1} is due to the fact that $\bm{y}-\truedata\in \cL$ such that $\bm{A}(\bm{y}-\truedata)=\bm{0}$ and thus
    \begin{align*}
        &(\bm{I}_n - \bm{A}^\top(\bm{A}\bm{A}^\top)^{-1}\bm{A})(\bm{y}-\truedata)\\
        =~&(\bm{y}-\truedata)+\bm{A}^\top(\bm{A}\bm{A}^\top)^{-1}\left[\bm{A}(\bm{y}-\truedata)\right]\\
        =~&\bm{y}-\truedata\,.
    \end{align*}
    (``if" part) Suppose that the noisy data $\noisydata$ does not belong to $\bm{y}+\cL^\bot$. Since $\region$
    is a shift of $\cL$ and $\RR^n=\cL\oplus \cL^\bot$, there exists a unique
    representation of $\noisydata$, which is  $\noisydata=\bm{u}+\bm{l}\in \bm{u}+\cL^\bot$
    where $\bm{u}\in \region$ and $\bm{l}\in\cL^\bot$. It follows that
    $\postdata(\noisydata)=\bm{u}=\bm{y}$ and, thus, $\noisydata\in\bm{y}+\cL^\bot$,
    which is contradictory to the assumption.
\end{proof}
\noindent
The following corollary is a direct consequence of Lemma \ref{lem:how_to_project} because, for 
program $(P_S)$, the affine subspace $\cL$ is $\{\bm{v}\mid \bm{1}^\top\bm{v}=0\}$ and
its orthogonal component $\cL^\bot$ is $\{k\cdot \bm{1}\mid k\in\RR\}$.
\begin{corollary}\label{cor:how_to_project}
    For any $\bm{y}\in \sumregion$,
    $\sumpostdata(\noisydata) = \bm{y}$ if and only if $\noisydata \in \bm{y}+\cL^\bot$, i.e.,
    \begin{equation*}
        \left\{\noisydata\mid \sumpostdata(\noisydata)=\bm{y}\right\}=\bm{y}+\cL^\bot =\left\{\bm{y}+k\cdot \bm{1}\mid k\in\RR\right\}\,.
    \end{equation*}
\end{corollary}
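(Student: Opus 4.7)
The plan is to derive Corollary \ref{cor:how_to_project} as a direct specialization of Lemma \ref{lem:how_to_project} to the particular equality constraint of program $(P_S)$. First I would observe that $(P_S)$ is precisely the instance of \eqref{program} obtained by setting $\bm{A} = \bm{1}^\top \in \RR^{1\times n}$ and $\bm{b} = C \in \RR$. This $\bm{A}$ has full row rank ($m=1$), so the hypotheses underpinning Lemma \ref{lem:how_to_project} are satisfied, and the projection mechanism $\postdata$ specializes to $\sumpostdata$ with feasible region $\region = \sumregion$.

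Next I would identify the linear subspace $\cL$ and its orthogonal complement. By definition, $\cL = \{\bm{v} \in \RR^n \mid \bm{1}^\top \bm{v} = 0\}$, a hyperplane of dimension $n-1$. Its orthogonal complement $\cL^\bot$ is therefore one-dimensional, and since $\bm{1}$ is a normal vector to $\cL$, we have $\cL^\bot = \mathrm{span}(\bm{1}) = \{k\cdot \bm{1} \mid k \in \RR\}$. This identifies both sides of the desired equality precisely.

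Finally, I would simply invoke Lemma \ref{lem:how_to_project} with the above substitutions: for any $\bm{y} \in \sumregion$, the lemma gives $\bm{y} + \cL^\bot = \{\noisydata \mid \sumpostdata(\noisydata) = \bm{y}\}$, and substituting the explicit form of $\cL^\bot$ yields the stated equality. There is no real obstacle here; the corollary is essentially a translation of notation. The only nontrivial conceptual point to verify is that the orthogonal complement of $\ker(\bm{1}^\top)$ equals $\mathrm{span}(\bm{1})$, which is immediate from the standard identity $(\ker \bm{A})^\bot = \mathrm{range}(\bm{A}^\top)$ for any matrix $\bm{A}$.
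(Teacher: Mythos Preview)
Your proposal is correct and matches the paper's own argument essentially verbatim: the paper states that the corollary is a direct consequence of Lemma~\ref{lem:how_to_project} once one observes that for program $(P_S)$ the subspace $\cL$ is $\{\bm{v}\mid \bm{1}^\top\bm{v}=0\}$ and hence $\cL^\bot=\{k\cdot\bm{1}\mid k\in\RR\}$. There is nothing to add.
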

\begin{lemma}\label{lem:proj_equal_proj_proj}
    For any $\noisydata\in\RR^n$, program \eqref{nnprogram} yields the same solution for both $\noisydata$
    and $\postdata(\noisydata)$, i.e.,
    \begin{equation*}
        \nnpostdata(\noisydata)=\nnpostdata(\postdata(\noisydata))\,.
    \end{equation*}
\end{lemma}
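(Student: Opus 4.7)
The plan is to prove this via a Pythagorean-style decomposition of the objective of program \eqref{nnprogram}, exploiting the fact that $\postdata(\noisydata)$ is the orthogonal projection of $\noisydata$ onto the affine subspace $\region=\{\bm{v}\mid \bm{A}\bm{v}=\bm{b}\}$, and that the feasible region of \eqref{nnprogram}, namely $\region\cap\RR^n_+$, lies entirely within $\region$.

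First, I would invoke Lemma \ref{lem:how_to_project} (with $\bm{y}=\postdata(\noisydata)\in\region$) to conclude that $\noisydata\in\postdata(\noisydata)+\cL^\bot$, i.e., the residual vector $\noisydata-\postdata(\noisydata)$ belongs to the orthogonal complement $\cL^\bot$ of $\cL=\{\bm{v}\mid \bm{A}\bm{v}=\bm{0}\}$. Next, for any candidate $\bm{v}\in\region\cap\RR^n_+$, the difference $\bm{v}-\postdata(\noisydata)$ lies in $\cL$, since both $\bm{v}$ and $\postdata(\noisydata)$ satisfy $\bm{A}(\cdot)=\bm{b}$. Combining these two observations gives the orthogonality $\langle \bm{v}-\postdata(\noisydata),\,\postdata(\noisydata)-\noisydata\rangle = 0$.

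With the orthogonality established, I would expand the squared objective of program \eqref{nnprogram} for $\noisydata$ as
\begin{equation*}
    \norm{\bm{v}-\noisydata}_2^2 = \norm{\bm{v}-\postdata(\noisydata)}_2^2 + \norm{\postdata(\noisydata)-\noisydata}_2^2,
\end{equation*}
valid for every $\bm{v}\in \region\cap\RR^n_+$. Since the second term is a constant independent of $\bm{v}$, minimizing $\norm{\bm{v}-\noisydata}_2^2$ over $\region\cap\RR^n_+$ is equivalent to minimizing $\norm{\bm{v}-\postdata(\noisydata)}_2^2$ over the same set, which is exactly program \eqref{nnprogram} for input $\postdata(\noisydata)$. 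Because the objective in \eqref{nnprogram} is strictly convex over a nonempty convex feasible region, the minimizer is unique, so the two argmins coincide and $\nnpostdata(\noisydata)=\nnpostdata(\postdata(\noisydata))$.

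I do not anticipate a major obstacle here: the work is entirely in setting up the orthogonal decomposition cleanly and justifying that $\noisydata-\postdata(\noisydata)\in\cL^\bot$ via the already-proven Lemma \ref{lem:how_to_project}. The only subtle point is ensuring the argument uses \emph{squared} $\ell_2$ distance (which monotonically corresponds to $\ell_2$ distance for nonnegative values) so that the Pythagorean identity can be applied, and noting that $\region\cap\RR^n_+\subseteq\region$ so the orthogonality premise covers every feasible $\bm{v}$.
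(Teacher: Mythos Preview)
Your proposal is correct and follows essentially the same approach as the paper: both invoke Lemma~\ref{lem:how_to_project} to obtain $\noisydata-\postdata(\noisydata)\in\cL^\bot$, observe that any feasible $\bm{v}\in\nnregion$ satisfies $\bm{v}-\postdata(\noisydata)\in\cL$, and then use the resulting Pythagorean identity $\norm{\bm{v}-\noisydata}_2^2=\norm{\bm{v}-\postdata(\noisydata)}_2^2+\norm{\postdata(\noisydata)-\noisydata}_2^2$ together with uniqueness of the projection. Your presentation is in fact slightly more direct than the paper's, which routes the same identity through an inequality chain (showing $\norm{\noisydata-\nnpostdata(\postdata(\noisydata))}_2\leq\norm{\noisydata-\nnpostdata(\noisydata)}_2$ and then appealing to uniqueness) rather than immediately noting that the two objectives differ by an additive constant.
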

\begin{proof}
    It suffices to show the following inequality.
    \begin{equation*}
        \norm{\noisydata-\nnpostdata(\postdata(\noisydata))}_2\leq \norm{\noisydata-\nnpostdata(\noisydata)}_2\,.
    \end{equation*}
    Note that, for any $\bm{u}\in \nnregion=\left\{\bm{v}\mid \bm{A}\bm{v}=\bm{b},\bm{v}\geq \bm{0}\right\}$,
    \begin{equation}\label{eq:lem_d3_1}
        \norm{\postdata(\noisydata)-\nnpostdata(\postdata(\noisydata))}_2\leq \norm{\postdata(\noisydata)-\bm{u}}_2\,,
    \end{equation}
    and $\postdata(\noisydata)-\bm{u}\in \cL$ because both $\postdata(\noisydata)$ and
    $\bm{u}$ belong to $\region$. Also, by Lemma \ref{lem:how_to_project}, there exists some $\bm{l}\in\cL^\bot$
    such that $\noisydata=\postdata(\noisydata)+\bm{l}$. Then, it follows that, for any $\bm{u}\in\nnregion$,
    \begin{align}
        \norm{\noisydata-\bm{u}}_2^2&=\norm{\left(\noisydata-\postdata(\noisydata)\right)+\left(\postdata(\noisydata)-\bm{u}\right)}_2^2\nonumber\\
        &=\norm{\bm{l}+\left(\postdata(\noisydata)-\bm{u}\right)}_2^2\nonumber\\
        &=\norm{\bm{l}}_2^2+\norm{\postdata(\noisydata)-\bm{u}}_2^2+2\inner{\bm{l}}{\postdata(\noisydata)-\bm{u}}\nonumber\\
        &=\norm{\bm{l}}_2^2+\norm{\postdata(\noisydata)-\bm{u}}_2^2\label{eq:lem_d3_2}\\
        &\geq \norm{\bm{l}}_2^2 + \norm{\postdata(\noisydata)-\nnpostdata(\postdata(\noisydata))}_2^2\label{eq:lem_d3_3}\\
        &= \norm{\noisydata-\postdata(\noisydata)}_2^2 + \norm{\postdata(\noisydata)-\nnpostdata(\postdata(\noisydata))}_2^2\nonumber\\
        &=\norm{\noisydata-\nnpostdata(\postdata(\noisydata))}_2^2\,,\label{eq:lem_d3_4}
    \end{align}
    where Equation \eqref{eq:lem_d3_2} and \eqref{eq:lem_d3_4} come from the fact that
    $\left(\noisydata-\postdata(\noisydata)\right)~\bot~ 
    \left(\postdata(\noisydata)-\bm{u}\right)$ for any $\bm{u}\in\nnregion$ and
    $\nnpostdata(\postdata(\noisydata))\in\nnregion$. Inequality 
    \eqref{eq:lem_d3_3} is a direct consequence of \eqref{eq:lem_d3_1}. Since
    $\nnpostdata(\noisydata)\in \nnregion$,
    by taking $\bm{u}=\nnpostdata(\noisydata)$, Equation \eqref{eq:lem_d3_4} becomes
    \begin{equation*}
        \norm{\noisydata-\nnpostdata(\postdata(\noisydata))}_2\leq
        \norm{\noisydata-\nnpostdata(\noisydata)}_2\,,
    \end{equation*}
    which completes the proof here.
\end{proof}

\subsection{Auxiliary Lemmas }
Let $\swap{i}{j}{\bm{v}}$ denote the swapping operator, which swaps the values of
the $i$-th and $j$-th entries of the vector $\bm{v}$
while leaving the rest unchanged. Algebraically, $\swap{i}{j}{\bm{v}}$
is to multiply a permutation matrix $\bm{P}_{(ij)}$ times the vector $\bm{v}$, i.e., $\swap{i}{j}{\bm{v}}=\bm{P}_{(ij)}\bm{v}$.

\begin{lemma}\label{lem:swap_invariance}
   For any noisy data $\noisydata\in\RR^n$ and any pair $(i,j)$ such that $i\neq j$,
   \begin{align*}
       \swap{i}{j}{\sumpostdata(\noisydata)}&=\sumpostdata\left(\swap{i}{j}{\noisydata}\right)\,,\\
       \swap{i}{j}{\nnsumpostdata(\noisydata)}&=
       \nnsumpostdata\left(\swap{i}{j}{\noisydata}\right)\,,\\
      T(\sumpostdata(\noisydata))&=T\left(\sumpostdata\left(\swap{i}{j}{\noisydata}\right)\right)\,.
   \end{align*}
\end{lemma}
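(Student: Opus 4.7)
The plan is to derive all three swap identities directly from the closed-form and implicit characterizations of the post-processing operators given in Lemma \ref{cor:err_dist} and Lemma \ref{lem:kkt_ps}, exploiting the fact that every object in sight is built from symmetric functions of the entries of $\noisydata$.

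First I would handle $\sumpostdata$. By Lemma \ref{cor:err_dist}, $\sumpostdata(\noisydata)_k = \noisydataelem_k + (C - \sum_{l=1}^n \noisydataelem_l)/n$ for every $k \in [n]$. Since the sum $\sum_l \noisydataelem_l$ is invariant under any permutation of the entries of $\noisydata$, applying $\swap{i}{j}{\cdot}$ to $\noisydata$ only swaps the two indices in the first summand of this formula. Hence coordinate $i$ of $\sumpostdata(\swap{i}{j}{\noisydata})$ equals the old coordinate $j$ of $\sumpostdata(\noisydata)$, and vice versa, while all other coordinates are unchanged. This is precisely $\swap{i}{j}{\sumpostdata(\noisydata)} = \sumpostdata(\swap{i}{j}{\noisydata})$.

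Next I would handle $T$. By Lemma \ref{lem:kkt_ps}, $T(\sumpostdata(\noisydata))$ is the unique non-negative scalar solving $\sum_{k=1}^n \relu{\sumpostdata(\noisydata)_k - T} = C$. The left-hand side is a symmetric function of the multiset of entries of $\sumpostdata(\noisydata)$, so replacing $\sumpostdata(\noisydata)$ by $\swap{i}{j}{\sumpostdata(\noisydata)}$ does not change the defining equation, and by uniqueness $T$ is unchanged. Combined with the first step, this gives $T(\sumpostdata(\noisydata)) = T(\sumpostdata(\swap{i}{j}{\noisydata}))$.

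Finally, for $\nnsumpostdata$, Lemma \ref{lem:kkt_ps} gives $\nnsumpostdata(\noisydata) = \relu{\sumpostdata(\noisydata) - T(\sumpostdata(\noisydata)) \cdot \bm{1}}$. Since $\relu{\cdot}$ is applied coordinatewise and $T \cdot \bm{1}$ is invariant under any permutation, the swap commutes with both operations. Therefore $\swap{i}{j}{\nnsumpostdata(\noisydata)} = \relu{\swap{i}{j}{\sumpostdata(\noisydata)} - T \cdot \bm{1}} = \relu{\sumpostdata(\swap{i}{j}{\noisydata}) - T(\sumpostdata(\swap{i}{j}{\noisydata})) \cdot \bm{1}} = \nnsumpostdata(\swap{i}{j}{\noisydata})$. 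There is no real obstacle here; the only mild subtlety is invoking uniqueness of $T$ when transferring symmetry from the defining equation to the value itself, which is immediate since the left-hand side is strictly decreasing in $T$ on the relevant range.
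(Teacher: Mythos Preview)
Your proof is correct. The paper actually states this lemma without proof in the appendix, so there is no argument to compare against; the route you take---reading off the coordinatewise formula for $\sumpostdata$ from Lemma~\ref{cor:err_dist}, then using the symmetry and uniqueness of the defining equation for $T$ in Lemma~\ref{lem:kkt_ps}, and finally pushing the swap through the coordinatewise $\relu{\cdot}$---is exactly the natural one and would be what any reader reconstructs.
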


 \begin{lemma}\label{lem:prob_comp_E}
    Suppose that the noisy data $\noisydata$ follows a multivariate normal distribution 
    $\cN(\truedata,\sigma^2 \bm{I}_n)$.
    For any vector $\bm{y}\in \region$, the probability density of the post-processed solution
    $\postdata(\noisydata)$ at $\bm{y}$ is given by
    \begin{equation*}
        f_{\postdata(\noisydata)}(\bm{y}) = 
        \frac{1}{\sqrt{(2\pi)^{n'}n'}}\exp\left(-\frac{1}{2}\norm{\bm{y}-\truedata}_2^2\right)\,,
    \end{equation*}
    where $n'$ is the affine dimension of $\region$.
\end{lemma}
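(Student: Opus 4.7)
The plan is to compute the pushforward distribution of $\noisydata$ under the projection map $\noisydata \mapsto \postdata(\noisydata)$ using the explicit formula from Lemma \ref{lem:kkt} and then evaluate its density on the affine subspace $\region$.

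First, I would invoke Lemma \ref{lem:kkt} to write
$$\postdata(\noisydata) = \truedata + \bm{M}(\noisydata - \truedata),$$
where $\bm{M} = \bm{I}_n - \bm{A}^\top(\bm{A}\bm{A}^\top)^{-1}\bm{A}$ is the orthogonal projector onto the linear subspace $\cL$. Note that $\bm{M}$ is symmetric and idempotent with $\operatorname{rank}(\bm{M}) = \dim(\cL) = n'$, and its image is exactly $\cL$. Since $\noisydata - \truedata \sim \cN(\bm{0}, \sigma^2 \bm{I}_n)$, the random vector $\postdata(\noisydata) - \truedata = \bm{M}(\noisydata - \truedata)$ is a (degenerate) centered Gaussian supported on $\cL$, so the distribution of $\postdata(\noisydata)$ is supported on the affine subspace $\truedata + \cL = \region$, which is consistent with the statement asking for the density at $\bm{y} \in \region$.

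Next, I would set up coordinates on $\region$ to compute the density. Fix an orthonormal basis $\bm{u}_1,\dots,\bm{u}_{n'}$ of $\cL$ and stack the basis vectors as columns of an $n \times n'$ matrix $\bm{U}$, so that $\bm{U}^\top \bm{U} = \bm{I}_{n'}$ and $\bm{M} = \bm{U}\bm{U}^\top$. Writing $\bm{y} - \truedata = \bm{U}\bm{c}$ for a unique $\bm{c}\in\RR^{n'}$ gives a coordinate identification of $\region$ with $\RR^{n'}$. Under this identification, $\postdata(\noisydata) - \truedata$ corresponds to $\bm{U}^\top(\noisydata - \truedata) \sim \cN(\bm{0}, \sigma^2 \bm{U}^\top \bm{U}) = \cN(\bm{0}, \sigma^2 \bm{I}_{n'})$ by the rotational invariance of the isotropic Gaussian. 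Thus $\bm{c}$ has density $(2\pi\sigma^2)^{-n'/2}\exp(-\|\bm{c}\|_2^2/(2\sigma^2))$ on $\RR^{n'}$, and since $\bm{U}$ is an isometry, $\|\bm{c}\|_2 = \|\bm{y} - \truedata\|_2$, which yields the claimed density formula (with the appropriate $\sigma$ normalization) with respect to the $n'$-dimensional Lebesgue measure on $\region$.

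The main subtlety is not a calculational one but a conceptual one: the density in the statement is the density of a lower-dimensional distribution relative to the Hausdorff/Lebesgue measure on the affine subspace $\region$, not relative to Lebesgue measure on $\RR^n$ (where the pushforward would be singular). The key step to make this rigorous is choosing the orthonormal basis $\{\bm{u}_k\}$ so that the change-of-variables Jacobian is $1$, which exploits precisely that $\bm{M}$ is an orthogonal projector onto $\cL$. Once this normalization is pinned down, the result follows immediately from the fact that an isotropic Gaussian projected onto any subspace remains isotropic on that subspace.
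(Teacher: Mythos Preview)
Your approach is correct and standard: write $\postdata(\noisydata)=\truedata+\bm{M}(\noisydata-\truedata)$ with $\bm{M}$ the orthogonal projector onto $\cL$, pass to orthonormal coordinates on $\cL$ via $\bm{U}$, observe that $\bm{U}^\top(\noisydata-\truedata)\sim\cN(\bm{0},\sigma^2\bm{I}_{n'})$, and read off the density with respect to the $n'$-dimensional Lebesgue measure on $\region$. The conceptual point you highlight---that this is a density on the affine subspace, not on $\RR^n$---is exactly the right thing to make explicit, and the choice of an orthonormal $\bm{U}$ is what makes the Jacobian trivial.

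The paper does not supply a proof of this lemma; it is stated among the auxiliary results and only \emph{used} (in Corollary~\ref{cor:pdf_diff}) to compare $f_{\sumpostdata(\noisydata)}(\bm{y})$ with $f_{\sumpostdata(\noisydata)}(\swap{i}{j}{\bm{y}})$, where only the exponential factor matters. So there is nothing to compare against, and your argument fills the gap.

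One remark: your derivation gives the density $(2\pi\sigma^2)^{-n'/2}\exp\bigl(-\|\bm{y}-\truedata\|_2^2/(2\sigma^2)\bigr)$, which does not literally match the constant $\frac{1}{\sqrt{(2\pi)^{n'}n'}}$ or the $\sigma$-free exponent printed in the statement. You already flag this (``with the appropriate $\sigma$ normalization''); it is worth stating plainly that the printed formula appears to carry typos in the normalizing constant and omits $\sigma$, and that the downstream use in Corollary~\ref{cor:pdf_diff} is insensitive to both, since it only relies on the fact that $f_{\sumpostdata(\noisydata)}(\bm{y})\propto\exp\bigl(-c\|\bm{y}-\truedata\|_2^2\bigr)$ for some $c>0$.
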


\begin{corollary}\label{cor:pdf_diff}
    Suppose that the noisy data $\noisydata$ is generated by the Gaussian mechanism with
    parameter $\sigma$.
    For any pair $(i,j)$ and vector $\bm{y}\in \sumregion$ such that $x_i\leq x_j$ and 
    $  y_i\leq   y_j$, 
    the probability density of the post-processed solution
    $\sumpostdata(\noisydata)$ at $\bm{y}$ is no less than that at its swapping counterpart $\swap{i}{j}{\bm{y}}$,
    i.e.,
    \begin{equation*}
        f_{\sumpostdata(\noisydata)}(\bm{y})\geq f_{\sumpostdata(\noisydata)}(\swap{i}{j}{\bm{y}})\,.
    \end{equation*}
\end{corollary}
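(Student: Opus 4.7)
The plan is to reduce the claim directly to Lemma~\ref{lem:prob_comp_E}, which expresses the density $f_{\sumpostdata(\noisydata)}$ as a monotonically decreasing function of $\|\bm{y}-\truedata\|_2^2$ on $\sumregion$. First I would observe that $\swap{i}{j}{\bm{y}}$ still lies in $\sumregion$: the swap permutes two entries and so preserves the sum constraint $\sum_k v_k = C$. Hence both points are valid inputs to the density formula given by Lemma~\ref{lem:prob_comp_E}, and comparing the two densities reduces to comparing $\|\bm{y}-\truedata\|_2^2$ with $\|\swap{i}{j}{\bm{y}}-\truedata\|_2^2$.

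Next I would perform the key algebraic identity. Since the swap only changes coordinates $i$ and $j$, all other terms cancel, leaving
\begin{align*}
\|\swap{i}{j}{\bm{y}}-\truedata\|_2^2 - \|\bm{y}-\truedata\|_2^2
&= (y_j-x_i)^2 + (y_i-x_j)^2 - (y_i-x_i)^2 - (y_j-x_j)^2\\
&= 2(y_j-y_i)(x_j-x_i).
\end{align*}
Under the hypotheses $y_i \leq y_j$ and $x_i \leq x_j$, both factors are non-negative, so the right-hand side is non-negative. Consequently $\|\bm{y}-\truedata\|_2^2 \leq \|\swap{i}{j}{\bm{y}}-\truedata\|_2^2$.

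Finally, since the density in Lemma~\ref{lem:prob_comp_E} is of the form $c\cdot\exp\bigl(-\tfrac{1}{2}\|\bm{y}-\truedata\|_2^2\bigr)$ with a constant $c$ that depends only on $n'$ (and on $\sigma$ in the underlying normalisation), and the exponential is a decreasing function of its squared-distance argument, the inequality between distances immediately flips to give
\[
f_{\sumpostdata(\noisydata)}(\bm{y}) \;\geq\; f_{\sumpostdata(\noisydata)}(\swap{i}{j}{\bm{y}}),
\]
which is exactly the claim. There is no serious obstacle: the result is essentially a two-variable rearrangement inequality for the Gaussian density, and the only thing to verify is that projecting onto the sum-constrained affine subspace still yields a density that is monotone in the Euclidean distance to $\truedata$, which is precisely what Lemma~\ref{lem:prob_comp_E} supplies.
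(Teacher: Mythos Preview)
Your argument is correct and essentially identical to the paper's: both invoke Lemma~\ref{lem:prob_comp_E} to reduce the density comparison to the sign of $\|\swap{i}{j}{\bm{y}}-\truedata\|_2^2-\|\bm{y}-\truedata\|_2^2$, expand to obtain $2(x_j-x_i)(y_j-y_i)\geq 0$, and conclude via monotonicity of the exponential. The only cosmetic difference is that the paper phrases the final step as a density \emph{ratio} $\exp\bigl((x_i-x_j)(y_i-y_j)\bigr)\geq 1$ rather than a direct distance comparison.
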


\begin{proof}
    By Lemma \ref{lem:prob_comp_E}, the probability density of the post-processed solution
    $\sumpostdata(\noisydata)$ at $\bm{y}$ and $\swap{i}{j}{\bm{y}}$ is given by the following formulas respectively
    \begin{equation*}
        f_{\sumpostdata(\noisydata)}(\bm{y}) = 
        \frac{1}{\sqrt{(2\pi)^{n'}n'}}\exp\left(-\frac{1}{2}\norm{\bm{y}-\truedata}_2^2\right)>0\,,
    \end{equation*}
    and
    \begin{multline*}
        f_{\sumpostdata(\noisydata)}(\swap{i}{j}{\bm{y}}) =\\ 
        \frac{1}{\sqrt{(2\pi)^{n'}n'}}\exp\left(-\frac{1}{2}\norm{\swap{i}{j}{\bm{y}}-\truedata}_2^2\right)>0\,.
    \end{multline*}
    Notice that the ratio of $f_{\sumpostdata(\noisydata)}(\bm{y})$ to
    $f_{\sumpostdata(\noisydata)}(\swap{i}{j}{\bm{y}})$
    can be expressed as
    \begin{align}
        &\frac{f_{\sumpostdata(\noisydata)}(\bm{y})}{f_{\sumpostdata(\noisydata)}(\swap{i}{j}{\bm{y}})}\nonumber\\
        =~&
        \exp\left(\frac{1}{2}\left(\norm{\swap{i}{j}{\bm{y}}-\truedata}_2^2-\norm{\bm{y}-\truedata}_2^2 \right)\right)\nonumber\\
        % =~&\exp\left(\frac{1}{2}\left((  y_j-  x_i)^2+(  y_i-  x_j)^2-(  y_i-  x_i)^2
        % -(  y_j-  x_j)^2\right)\right)\nonumber\\
        =~&\exp\left(\frac{1}{2}\left(-2  y_j  x_i-2  y_i  x_j+2  y_i  x_i+2  y_j
          x_j\right)\right)\nonumber\\
        =~&\exp\left((  x_i-  x_j)(  y_i-  y_j)\right)\label{eq:cor_pdf_diff_1}\\
        \geq~& 1\,,\nonumber
    \end{align}
    where Equation \eqref{eq:cor_pdf_diff_1} comes from the assumption that $  x_i$ is no greater than
    $  x_j$ for any $x_i\leq x_j$. It implies the following, for any $\bm{y}\in\sumregion$ such that $y_i\leq y_j$,
    \begin{equation*}
        f_{\sumpostdata(\noisydata)}(\bm{y})\geq f_{\sumpostdata(\noisydata)}(\swap{i}{j}{\bm{y}})\,.
    \end{equation*}
\end{proof}

\begin{lemma}\label{lem:lap_pdf_diff}
   Suppose that the noisy data $\noisydata$ is generated by the Laplace mechanism
   with parameter $\lambda$.
    For any pair $(i,j)$ and vector $\bm{y}\in \sumregion$ such that $x_i\leq x_j$ and $y_i\leq y_j$, 
    the probability density of the post-processed solution
    $\sumpostdata(\noisydata)$ at $\bm{y}$ is no less than that at its swapping counterpart $\swap{i}{j}{\bm{y}}$,
    i.e.,
    \begin{equation*}
         f_{\sumpostdata(\noisydata)}(\bm{y})\geq f_{\sumpostdata(\noisydata)}(\swap{i}{j}{\bm{y}})\,.
    \end{equation*}
\end{lemma}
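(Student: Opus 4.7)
The plan is to mimic the Gaussian argument in Corollary \ref{cor:pdf_diff}, but since the Laplace post-processed density has no clean closed form analogous to Lemma \ref{lem:prob_comp_E}, I would instead express $f_{\sumpostdata(\noisydata)}(\bm{y})$ as a fiber integral and reduce the claim to a pointwise rearrangement inequality along each fiber. By Corollary \ref{cor:how_to_project}, the preimage $\sumpostdata^{-1}(\{\bm{y}\})$ equals the 1-dimensional line $\bm{y}+\cL^\bot = \{\bm{y}+k\bm{1}\mid k\in\RR\}$. A standard co-area/Fubini argument (using that the projection is orthogonal and that $\bm{1}$ has norm $\sqrt{n}$) then gives, for every $\bm{y}\in\sumregion$,
\begin{equation*}
    f_{\sumpostdata(\noisydata)}(\bm{y}) \;=\; \sqrt{n}\int_{-\infty}^{\infty} f_{\noisydata}(\bm{y}+k\bm{1})\,dk.
\end{equation*}
The pre-factor $\sqrt{n}$ depends only on the geometry of $\sumregion$ and not on $\bm{y}$, so it will cancel when comparing $\bm{y}$ with $\swap{i}{j}{\bm{y}}$ (both lie in $\sumregion$).

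Next, I would use that the swap commutes with adding a scalar multiple of $\bm{1}$, i.e.\ $\swap{i}{j}{\bm{y}}+k\bm{1} = \swap{i}{j}{\bm{y}+k\bm{1}}$. This reduces the inequality to showing, for every $k\in\RR$,
\begin{equation*}
    f_{\noisydata}(\bm{y}+k\bm{1}) \;\geq\; f_{\noisydata}\left(\swap{i}{j}{\bm{y}+k\bm{1}}\right),
\end{equation*}
after which integration over $k$ concludes the proof. Using the product form of the Laplace density $f_{\noisydata}(\bm{z}) = \prod_{l=1}^{n} \frac{1}{2\lambda}e^{-|z_l-x_l|/\lambda}$, all coordinates other than $i,j$ contribute identically on both sides, so after taking logarithms the pointwise inequality collapses to
\begin{equation*}
    |y_i+k-x_i| + |y_j+k-x_j| \;\leq\; |y_j+k-x_i| + |y_i+k-x_j|.
\end{equation*}

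Setting $u=y_i+k$, $v=y_j+k$ with $u\leq v$ (since $y_i\leq y_j$) and $a=x_i\leq x_j=b$, this is the classical one-dimensional $\ell_1$ rearrangement inequality $|u-a|+|v-b|\leq |u-b|+|v-a|$, which is straightforward to verify by a short case split on the relative order of $a,b,u,v$ (equivalently, it follows from the fact that the monotone coupling minimizes the $\ell_1$ transportation cost between two-point distributions).

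The only delicate step is the fiber-integration representation, because one has to justify that the pushforward density under the orthogonal projection onto the affine subspace $\sumregion$ is given exactly by integrating $f_{\noisydata}$ along the perpendicular line, with a $\bm{y}$-independent Jacobian. Once that is in place, everything downstream is the product structure of the Laplace density together with a rearrangement inequality — no explicit manipulation of the convolved distribution is required, which is precisely what makes the argument work uniformly for Laplace (whereas the Gaussian proof got to use the closed-form marginal in Lemma \ref{lem:prob_comp_E}).
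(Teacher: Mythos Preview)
Your proposal is correct and follows essentially the same route as the paper: the paper also writes $f_{\sumpostdata(\noisydata)}(\bm{y})$ as an integral of $f_{\noisydata}$ over the fiber $\bm{y}+\cL^\bot=\{\bm{y}+k\bm{1}\}$ (using Corollary~\ref{cor:how_to_project}), reduces to a pointwise comparison along the fiber, and then verifies the resulting inequality $|y_i+k-x_i|+|y_j+k-x_j|\le |y_j+k-x_i|+|y_i+k-x_j|$ by a short case split. The only cosmetic differences are that the paper does not isolate the Jacobian prefactor (it cancels anyway) and proves the final inequality by the explicit two-case analysis rather than naming it as the $\ell_1$ rearrangement inequality.
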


\begin{proof}
    Note that the probability density function of the noisy data $\noisydata$ 
    can be expressed as
    \begin{equation*}
        f_{\noisydata}(\bm{v}) = \frac{\lambda^n}{2^n}\exp\left(-\lambda\norm{\bm{v}-\truedata}_1\right)\,,
        \qquad\forall~\bm{v}\in\RR^n\,.
    \end{equation*}
    It follows that the probability density of the post-processed solution $\sumpostdata(\noisydata)$ at
    $\bm{y}$ and $\swap{i}{j}{\noisydata}$ can be given by
    \begin{equation*}
        f_{\sumpostdata(\noisydata)}(\bm{y})=\int_{\bm{v}\in
        \bm{y}+\cL^\bot}\frac{\lambda^n}{2^n}\exp\left(-\lambda\norm{\bm{v}-\truedata}_1\right)d\bm{v}\,,
    \end{equation*}
    and
    \begin{align*}
       &f_{\sumpostdata(\noisydata)}(\swap{i}{j}{\bm{y}})\\
       =~&\int_{\bm{v}\in
        \swap{i}{j}{\bm{y}}+\cL^\bot}\frac{\lambda^n}{2^n}\exp\left(-\lambda\norm{\bm{v}-\truedata}_1\right)d\bm{v}\\
        =~&\int_{\bm{v}\in
        \bm{y}+\cL^\bot}\frac{\lambda^n}{2^n}\exp\left(-\lambda\norm{\bm{v}-\bm{y}+\swap{i}{j}{\bm{y}}-\truedata}_1\right)d\bm{v}\,.
    \end{align*}
    Thus, the difference between these two probability densities, $f_{\sumpostdata(\noisydata)}(\bm{y})$ and
    $f_{\sumpostdata(\noisydata)}(\swap{i}{j}{\bm{y}})$,  is
    \begin{equation}\label{eq:lap_pdf_diff}
        \begin{aligned}
            &\int_{\bm{v}\in
        \bm{y}+\cL^\bot}\frac{\lambda^n}{2^n}\exp\left(-\lambda\norm{\bm{v}-\truedata}_1\right)d\bm{v}-\\
        &
        \int_{\bm{v}\in
        \bm{y}+\cL^\bot}\frac{\lambda^n}{2^n}\exp\left(-\lambda\norm{\bm{v}-\bm{y}+\swap{i}{j}{\bm{y}}-\truedata}_1\right)d\bm{v}\\
        =~&\frac{\lambda^n}{2^n}\int_{\bm{v}\in\bm{y}+\cL^\bot}
        w(\bm{v})d\bm{v}\,,
        \end{aligned}
    \end{equation}
    where 
    \begin{multline*}
        w(\bm{v})=\exp\left(-\lambda\norm{\bm{v}-\truedata}_1\right)\cdot\\\left[
        1-\exp\left(\lambda\norm{\bm{v}-\truedata}_1-
        \lambda\norm{\bm{v}-\bm{y}+\swap{i}{j}{\bm{y}}-\truedata}_1\right)\right]\,.
    \end{multline*}
    In order to conclude the proof here, it suffices to demonstrate that the function $w(\bm{v})$ is non-negative for
    any $\bm{v}\in \bm{y}+\cL^\bot$. By Corollary \ref{cor:how_to_project}, for any $\bm{v}\in\bm{y}+\cL^\bot$,
    there exists $k\in\RR$ such that $\bm{v}=\bm{y}+k\cdot\bm{1}$. A
    critical component of $w(\bm{v})$ is then presented in the following.
    \begin{equation}\label{eq:lap_component}\tag{$*$}
        \begin{aligned}
            &\norm{\bm{v}-\truedata}_1-
            \norm{\bm{v}-\bm{y}+\swap{i}{j}{\bm{y}}-\truedata}_1\\
            =~&\norm{\bm{y}+k\cdot \bm{1}-\truedata}_1-
            \norm{\swap{i}{j}{\bm{y}}+k\cdot \bm{1}-\truedata}_1\\
            =~&\vert y_i+k-  x_i\vert+\vert y_j+k-  x_j\vert-\\
            &\vert y_j+k-  x_i\vert-
            \vert y_i+k-  x_j\vert\,.
        \end{aligned}
    \end{equation}
    Observe that $y_i+k$ is no greater than $y_j+k$ while $  x_i$ is assumed to be no greater than
    $  x_j$ as well. Consider the following two scenarios.
    \begin{enumerate}
        \item If $  x_i\leq y_i+k$, 
        \begin{multline*}
             x_i\leq y_i+k\leq y_j+k\implies \\\vert y_i+k-  x_i\vert-\vert y_j+k-  x_i\vert=-\vert y_i-y_j\vert\,, 
        \end{multline*}
        which implies that
        \begin{multline*}
            \eqref{eq:lap_component}=\vert  y_j+k-  x_j\vert-\vert y_i- y_j\vert-\vert  y_i+k-  x_j\vert\leq\\ \vert  y_j+k-  x_j\vert-\vert  y_j+k-  x_j\vert=0\,.
        \end{multline*}
        \item If $  x_i> y_i+k$, 
        \begin{multline*}
             y_i+k<  x_i\leq x_j\implies\\ \vert  y_i+k-  x_i\vert-\vert  y_i+k-  x_j\vert=-\vert  x_j-  x_i\vert\,,
        \end{multline*}
        which implies that
        \begin{multline*}
            \eqref{eq:lap_component}=\vert  y_j+k-  x_j\vert-\vert  x_j-  x_i\vert-\vert  y_j+k-  x_i\vert\leq \\\vert  y_j+k-  x_j\vert-\vert  y_j+k-  x_j\vert=0\,.
        \end{multline*}
    \end{enumerate}
It follows that, for any $\bm{v}\in\bm{y}+\cL^\bot$,
\begin{equation*}
    w(\bm{v})=\exp\left(-\lambda\norm{\bm{v}-\truedata}_1\right)\left[
        1-\exp\left(
        \lambda\cdot\eqref{eq:lap_component}\right)\right]\geq 0\,.
\end{equation*}
By Equation \eqref{eq:lap_pdf_diff}, the difference between $f_{\sumpostdata(\noisydata)}(\bm{y})$ and $f_{\sumpostdata(\noisydata)}(\swap{i}{j}{\bm{y}})$ turns out to be non-negative, i.e.,
\begin{equation*}
    f_{\sumpostdata(\noisydata)}(\bm{y})-f_{\sumpostdata(\noisydata)}
    (\swap{i}{j}{\bm{y}})=\frac{\lambda^n}{2^n}\int_{\bm{v}\in\bm{y}+\cL^\bot}
    w(\bm{v})d\bm{v}\geq 0\,,
\end{equation*}
which implies that
\begin{equation*}
    f_{\sumpostdata(\noisydata)}(\bm{y})\geq
    f_{\sumpostdata(\noisydata)}(\swap{i}{j}{\bm{y}})\,.
\end{equation*}
\end{proof}

\begin{proposition}\label{prop:func_d}
    The function $d(y,t)$ is decreasing in its first argument.
\end{proposition}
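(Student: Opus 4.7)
The plan is to establish monotonicity analytically by writing $d(y,t)$ as an expectation over the noise and then computing its partial derivative with respect to $y$. Based on the preceding development (in particular Lemma~\ref{lem:kkt_ps} and Lemma~\ref{cor:err_dist}, where $T$ functions as a thresholding scalar subtracted before taking the positive part), the natural form for $d(y,t)$ is an expression of the type $\EE{\eta}{\relu{y + \eta - t}} - y$ (or a close variant), where the $\relu{\cdot}$ operator enforces non-negativity and the linear $-y$ correction reflects the bias definition $\biasm{\cdot} = \EE{}{\cdot} - \truedata$. My first step is therefore to unwind $d(y,t)$ into an explicit integral against the noise density.

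Next, I would differentiate under the integral sign. This interchange is justified by dominated convergence: the Laplace and Gaussian densities decay exponentially, and the integrand's $y$-derivative is bounded pointwise by $1$ since $\relu{\cdot}$ is $1$-Lipschitz. The derivative then collapses to an expression of the form $\Pr(y + \eta \geq t) - 1$, which is manifestly non-positive because a probability cannot exceed $1$. Integrating this bound over any interval $[y_1, y_2]$ with $y_1 \leq y_2$ yields $d(y_2, t) \leq d(y_1, t)$, which is precisely the desired monotonicity.

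The main obstacle I anticipate is pinning down the precise definition of $d(y,t)$ and handling any implicit coupling between $y$ and additional parameters (such as the threshold $T$ from Lemma~\ref{lem:kkt_ps}) that could introduce extra $y$-dependence beyond the argument of $\relu{\cdot}$. Since the second argument $t$ is treated as a free scalar parameter rather than being tied to $T(\sumpostdata(\noisydata))$, this implicit coupling can be avoided: the proposition is a purely pointwise statement about a scalar-valued function, and the full random behavior of $T$ is not involved. If $d$ turns out to aggregate contributions over multiple coordinates, I would apply the same argument coordinate-wise, since each coordinate contributes a term of the form ``probability minus one'' to $\partial d/\partial y$, and a sum of non-positive quantities remains non-positive. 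This modular structure makes the argument robust to minor variations in the definition of $d$.
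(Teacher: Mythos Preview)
Your proposal misidentifies the function $d$. In the paper, $d$ is introduced inside the proof of Lemma~\ref{thm:bias_ineq} as the purely deterministic scalar function
\[
d(y,t)\;=\;\relu{y-t}-\relu{y},
\]
with $t\geq 0$ playing the role of the threshold $T$. There is no expectation over noise, no bias correction $-y$, and no coupling to $\sumpostdata$; the proposition is an elementary calculus fact about this two-variable map. Your guessed form $\EE{\eta}{\relu{y+\eta-t}}-y$ is a different object, so the differentiation-under-the-integral argument you outline does not address the statement actually being proved.

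The paper's proof is a one-line case split: for $t\geq 0$, $d(y,t)$ equals $0$ when $y<0$, equals $-y$ when $0\leq y\leq t$, and equals $-t$ when $y>t$, which is visibly non-increasing in $y$. Had you started from the correct definition, your derivative idea would also work (the weak derivative is $\mathbb{1}\{y\geq t\}-\mathbb{1}\{y\geq 0\}\leq 0$ for $t\geq 0$), but it is unnecessary machinery for what is a three-case piecewise-linear check.
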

\begin{proof}
    Note that, for any $t\in\RR_+$,
    \begin{equation*}
        d(y,t)=\relu{y-t}-\relu{y}=\begin{cases}
            0\,, & y<0\,,\\
            -y\,, & 0\leq y\leq t\,,\\
            -t\,, & \text{otherwise}\,.
        \end{cases}
    \end{equation*}
    With the formula above, it is straightforward to verify its monotonicity.
\end{proof}

\subsection{Missing Proofs of Results in the Main Text}
\label{app:proof_in_main}

    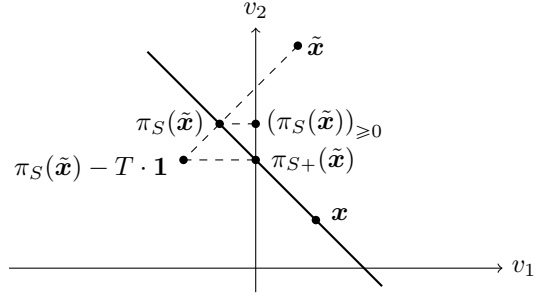
\begin{figure}[t]
    \centering
    %  \resizebox{0.75\linewidth}{!}{%
     \begin{tikzpicture}[scale=.8]
        \draw[->] (-4.1,0) -- (4.1,0) node[right] {\normalsize $v_1$};
        \draw[->] (0,-0.4) -- (0,4) node[above] {\normalsize $v_2$};
        \draw[thick] (-1.8, 3.6) -- (2.1, -0.3);
        
        \coordinate (truedata) at (1, 0.8);
        \coordinate (noisydata) at (0.7, 3.7);
        \coordinate (refpostdata) at (0, 2.4);
        \coordinate (postdata) at (-0.6, 2.4);
        \coordinate (nnpostdata) at (0, 1.8);
        \coordinate (T) at (-1.2, 1.8);
        
        \fill (truedata) circle (2pt) node[right, xshift=.5ex, yshift=.5ex] {\normalsize $\truedata$};
        \fill (noisydata) circle (2pt) node[right] {\normalsize $\noisydata$};
        \fill (nnpostdata) circle (2pt) node[right, xshift=0.5ex, yshift=0ex] {\normalsize $\nnsumpostdata(\noisydata)$};
        \fill (postdata) circle (2pt) node[left, xshift=-0.5ex, yshift=0ex] {\normalsize $\sumpostdata(\noisydata)$} ;
        \fill (refpostdata) circle (2pt) node[right, xshift=.1ex] {\normalsize $\relu{\sumpostdata(\noisydata)}$} ;
        \fill (T) circle (2pt) node[left, xshift=-.5ex, yshift=-.75ex] {\normalsize
        $\sumpostdata(\noisydata)-T\cdot\bm{1}$};
        
        % \draw[dashed] (noisydata) -- (postdata);
        \draw[dashed] (refpostdata) -- (postdata);
        \draw[dashed] (T) -- (nnpostdata);
        \draw[dashed] (noisydata) -- (T);
        
    \end{tikzpicture}
    % }
    \caption{A complete version of Figure \ref{fig:err_sum_0}.}
    \label{fig:full}
    \end{figure}

\subsubsection*{Proof of Lemma \ref{lem:kkt_ps}}
\begin{proof}[Proof of Lemma \ref{lem:kkt_ps}]
    Consider the following convex optimization problem, which is equivalent to program \eqref{nnsumprogram}.
    \begin{equation*}
        \begin{aligned}
            \nnsumpostdata(\noisydata)=\underset{\bm{v}\in\RR^n}{\arg\min}~&
            \frac{1}{2}\norm{\bm{v}-\noisydata}_2^2\\
            \text{s.t.}~  &\inner{\bm{1}}{\bm{v}} =C\,,~\bm{v}\geq \bm{0}\,.
        \end{aligned}
    \end{equation*}
    The Lagrange function is then given by
    \begin{equation*}
        L(\bm{v},\mu,\bm{\omega})=\frac{1}{2}\inner{\bm{v}-\noisydata}{\bm{v}-\noisydata}+\mu\left(\inner{\bm{1}}{\bm{v}}-C\right)-\inner{\bm{\omega}}{\bm{v}}\,.
    \end{equation*}
    To solve program \eqref{nnsumprogram} in exact form, it suffices to find a feasible solution $(\bm{v}^*,\mu^*, \bm{\omega}^*)$ to 
    the Karush-Kuhn-Tucker (KKT) conditions shown as follows.
    \begin{align*}
        \bm{v}^*-\noisydata +  \mu^*\cdot\bm{1} -\bm{\omega}^*=&~~\bm{0}\,, & &(\textrm{Stationarity})\\
        \inner{\bm{1}}{\bm{v}^*}=&~~C\,, & &(\textrm{Primal feasibility})\\
        \bm{v}^*\geq&~~\bm{0}\,, & &(\textrm{Primal feasibility})\\
        \bm{\omega}^*\geq&~~\bm{0}\,, & &(\textrm{Dual feasibility})\\
        \bm{\omega}^*\circ \bm{v}^*=&~~\bm{0}\,. & &(\textrm{Complementary Slackness})
    \end{align*}
    It follows that, for any $i\in [n]$,
    \begin{enumerate}
        \item if $\noisydataelem_i -\mu^* \geq 0$,
        \begin{equation*}
            v^*_i=\noisydataelem_i- \mu^* +\omega^*_i=\noisydataelem_i -\mu^*\,,\qquad \omega^*_i = 0\,;
        \end{equation*}
        \item if $\noisydataelem_i -\mu^* < 0$,
        \begin{equation*}
            v^*_i=\noisydataelem_i- \mu^* +\omega^*_i=0\,,\qquad \omega^*_i = -\noisydataelem_i+\mu^*\,.
        \end{equation*}
    \end{enumerate}
    As a consequence, 
    $\bm{v}^*$ equals the positive part of the difference between $\noisydata$ and $\mu^*\cdot \bm{1}$, i.e.,
    \begin{equation*}
        \bm{v}^*=\relu{\noisydata - \mu^*\cdot \bm{1}}\,.
    \end{equation*}
    Then, according to one of the primal feasibility constraints, 
    \begin{equation*}
        h(\mu^*)\coloneqq\sum_{i=1}^n \relu{\noisydataelem_i - \mu^*}=\inner{\bm{1}}{\bm{v}^*}=C\,,
    \end{equation*}
    the optimal solution $\mu^*$ does exist and turns out to be unique because the function $h:\RR\mapsto\RR_+$
    is decreasing and the constant $C$ is assumed to be positive. Although $\mu^*$ is only expressed in an
    implicit form, it can at least be empirically evaluated via linear search by leveraging  the monotonic
    nature of the function $h$.\\
    Notice that the post-processed solution $\sumpostdata(\noisydata)$ is summed up to the constant $C$
    for any noisy data $\noisydata\in\RR^n$. By Lemma \ref{cor:err_dist}, the following relation holds
    \begin{align*}
        h\left(\frac{C-\sum_{i=1}^n \noisydataelem_i}{n}\right)&=\sum_{i=1}^n\relu{\noisydataelem_i-\frac{C-\sum_{i=1}^n \noisydataelem_i}{n}}\\
        &\geq \sum_{i=1}^n\noisydataelem_i-\frac{C-\sum_{i=1}^n \noisydataelem_i}{n}\\
        &=
        \sum_{i=1}^n \sumpostdata(\noisydata)_i=C=h(\mu^*)\,.
    \end{align*}
    Due to monotonicity of the function $h$, $\mu^*$ is no less 
    than $\left(C-\sum_{i=1}^n \noisydataelem_i\right)/n$. Let $T(\noisydata)$ denote the difference between 
    $\mu^*$ and $\left(C-\sum_{i=1}^n \noisydataelem_i\right)/n$, which proves to be non-negative. The
    optimal solution
    $\bm{v}^*$ can then be expressed as
    \begin{align*}
        \bm{v}^*&=\nnsumpostdata(\noisydata)=\relu{\noisydata - \mu^*\cdot \bm{1}}\\
        &=\relu{\noisydata-
        \left(\frac{C-\sum_{i=1}^n \noisydataelem_i}{n}+T(\noisydata)\right)\cdot
        \bm{1}}\\
        &=\relu{\sumpostdata(\noisydata)
        -T(\noisydata)\cdot \bm{1}}\,.
    \end{align*}
    Moreover, by Lemma \ref{lem:proj_equal_proj_proj}, the post-processed solution $\nnsumpostdata(\noisydata)$
    of program \eqref{nnsumprogram}
    can be expressed as
\begin{align*}
    \nnsumpostdata(\noisydata)&=\nnsumpostdata(\sumpostdata(\noisydata))=
        \relu{\sumpostdata(\sumpostdata(\noisydata))
        -T(\sumpostdata(\noisydata))\cdot \bm{1}}\\
        &=\relu{\sumpostdata(\noisydata)
        -T(\sumpostdata(\noisydata))\cdot \bm{1}}\\
        &=\relu{\sumpostdata(\noisydata)
        -T(\noisydata)\cdot \bm{1}}\,,
\end{align*}
which indicates that $T(\noisydata)$ is identical to $T(\sumpostdata(\noisydata))$.
\end{proof}

\begin{proof}[Proof of Proposition \ref{prop:bound_T}]
    Note that both $\nnsumpostdata(\noisydata)$ and $\sumpostdata(\noisydata)$ belong to 
    the feasible region $\sumregion$. It follows that
    \begin{equation*}
        \sum_{i=1}^n \left(\nnsumpostdata(\noisydata)_i - \sumpostdata(\noisydata)_i\right) = 0\,.
    \end{equation*}
    Recall that Lemma \ref{lem:kkt_ps} shows that 
    \begin{equation*}
        \nnsumpostdata(\noisydata) = \relu{\sumpostdata(\noisydata)-T(\sumpostdata(\noisydata))\cdot \bm{1}}\,,
    \end{equation*}
    which implies that
    \begin{equation*}
        \nnsumpostdata(\noisydata)_i\begin{cases}
            =0\geq \sumpostdata(\noisydata)_i\,, & \sumpostdata(\noisydata)_i\leq 0\,,\\
            \leq \sumpostdata(\noisydata)_i\,, & \text{otherwise}.
        \end{cases}
    \end{equation*}
    Let $\cI$ denote the index set, which is a collection of the negative entries of the post-processed 
    solution of program $(P_S)$, $\sumpostdata(\noisydata)$, i.e.,
    \begin{equation*}
        \cI\coloneqq \left\{i\in[n]\mid \sumpostdata(\noisydata)_i\leq 0\right\}\,.
    \end{equation*}
    Since the constant $C$ is positive, there exists at least one entry $j\in [n]\setminus \cI$ such that
    $\nnsumpostdata(\noisydata)_j$ is positive and, thus, $\sumpostdata(\noisydata)_j$ is positive as
    well. It follows that, for the index $j$, 
    \begin{multline*}
        \nnsumpostdata(\noisydata)_j = \relu{\sumpostdata(\noisydata)_j-T(\sumpostdata(\noisydata))}=\\
        \sumpostdata(\noisydata)_j - T(\sumpostdata(\noisydata))_j>0\,,
    \end{multline*}
    which indicates that
    \begin{align*}
        T(\sumpostdata(\noisydata))&=\sumpostdata(\noisydata)_j-\nnsumpostdata(\noisydata)_j\\
        &\leq
        \sum_{i\in [n]\setminus \cI} \left(\sumpostdata(\noisydata)_i-\nnsumpostdata(\noisydata)_i\right)
        \\&=\sum_{i\in\cI} \left(\nnsumpostdata(\noisydata)_i-\sumpostdata(\noisydata)_i\right)\\
        &=-\sum_{i\in\cI} \sumpostdata(\noisydata)_i=\sum_{i=1}^n \negpart{\sumpostdata(\noisydata)_i}
        \,.
    \end{align*}
\end{proof}

%%%%%%%%%%%%%%%%%%%%%%%%%%%%%%%%%%%%%%%%%%%%%%%%%%%%%%%%%%%%%%%%%%%%%%%%%%%%%%%%%%
% Proof of the lower and upper bounds
%%%%%%%%%%%%%%%%%%%%%%%%%%%%%%%%%%%%%%%%%%%%%%%%%%%%%%%%%%%%%%%%%%%%%%%%%%%%%%%%%%

\subsubsection*{Proof of Lemma \ref{thm:bias_ineq}}
\begin{proof}[Proof of Lemma \ref{thm:bias_ineq}]
    By Lemma \ref{lem:kkt_ps}, the right hand side of Equation \eqref{eq:bias_ineq} can be written as 
    \begin{align*}
         &\biasm{\nnsumpostdata}_i-\biasm{\nnsumpostdata}_j\\
        =~&\left(\EE{\sumpostdata(\noisydata)}{\relu{\sumpostdata(\noisydata)_i - T(\sumpostdata(\noisydata))}}-x_i\right)
        -\\
        &\left(\EE{\sumpostdata(\noisydata)}{\relu{\sumpostdata(\noisydata)_j - T(\sumpostdata(\noisydata))}}
        -x_j\right)\,.
    \end{align*}
    Let $\delta:\sumregion\mapsto\RR$ denote the function
    \begin{multline*}
        \delta(\bm{y})\coloneqq\\\underbrace{\left(\relu{y_i - T(\bm{y})}-\relu{y_i}\right)}_{d(y_i,T(\bm{y}))}-\underbrace{
        \left(\relu{y_j - T(\bm{y})}-\relu{y_j}\right)}_{d(y_j,T(\bm{y}))}
    \end{multline*}
    such that
    \begin{equation}\label{eq:bias_ineq_equiv}
            \begin{aligned}
                &\left(\biasm{\nnsumpostdata}_i-\biasm{\nnsumpostdata}_j\right)-\\
                &\left(
                \biasm{\relu{\sumpostdata}}_i-
                \biasm{\relu{\sumpostdata}}_j\right)\\
                =~&\left(\EE{\sumpostdata(\noisydata)}{\relu{\sumpostdata(\noisydata)_i - T(\sumpostdata(\noisydata))}}-x_i\right)
                -\\
                &\left(\EE{\sumpostdata(\noisydata)}{\relu{\sumpostdata(\noisydata)_j - T(\sumpostdata(\noisydata))}}
                -x_j\right)-\\
                &\left(\EE{\sumpostdata(\noisydata)}{\relu{\sumpostdata(\noisydata)_i}}-x_i\right)+\\
                &\left(\EE{\sumpostdata(\noisydata)}{\relu{\sumpostdata(\noisydata)_j}}-x_j\right)\\
                =~&\EE{\sumpostdata(\noisydata)}{\delta(\sumpostdata(\noisydata))}\,.
            \end{aligned}
    \end{equation}
    Note that the feasible region $\sumregion$ can be partitioned into two
    almost disjoint sets $S_{i,j}^{-}$ and $S_{i,j}^{+}$
    \begin{align*}
        S_{i,j}^{-}&=\sumregion\bigcap H_{i,j}^-=\sumregion\bigcap\left\{\bm{y}\mid
        y_i-y_j\leq 0\right\}\,,\\
        S_{i,j}^{+}&=\sumregion\bigcap H_{i,j}^+=\sumregion\bigcap\left\{\bm{y}\mid
        y_i-y_j\geq 0\right\}\,,
    \end{align*}
    and $\swap{i}{j}{\cdot}$ is a bijection between these two sets. It follows that
    \begin{align}
        &\EE{\sumpostdata(\noisydata)}{\delta(\sumpostdata(\noisydata))}\nonumber\\
        =~&\EE{\sumpostdata(\noisydata)}{d(\sumpostdata(\noisydata)_i,T(\sumpostdata(\noisydata)))
        -d(\sumpostdata(\noisydata)_j,T(\sumpostdata(\noisydata)))}\nonumber\\
        =~&\int_{\bm{y}\in S_{i,j}^{-}}
        (d(y_i,T(\bm{y}))-d(y_j,T(\bm{y})))\cdot f_{\sumpostdata(\noisydata)}(\bm{y})d\bm{y}+\nonumber\\
        &\int_{\bm{y}'\in S_{i,j}^{+}}
        (d(\bm{y}'_i,T(\bm{y}'))-d(\bm{y}'_j,T(\bm{y}')))\cdot f_{\sumpostdata(\noisydata)}(\bm{y}')d\bm{y}'\nonumber\\
        =~&\int_{\bm{y}\in S_{i,j}^{-}}
        (d(y_i,T(\bm{y}))-d(y_j,T(\bm{y})))\cdot f_{\sumpostdata(\noisydata)}(\bm{y})d\bm{y}+\nonumber\\
        &\int_{\bm{y}\in S_{i,j}^{-}}
        (d(y_j,T(\swap{i}{j}{\bm{y}}))-
        d(y_i,T(\swap{i}{j}{\bm{y}})))\cdot\nonumber\\
        &\qquad \quad f_{\sumpostdata(\noisydata)}(\swap{i}{j}{\bm{y}})\vert\det(\bm{P}_{(ij)})\vert d{\bm{y}}
        \label{eq:bias_ineq_long_1}\\
        =~&\int_{\bm{y}\in S_{i,j}^{-}}
        (d(y_i,T(\bm{y}))-d(y_j,T(\bm{y})))\cdot f_{\sumpostdata(\noisydata)}(\bm{y})d\bm{y}+\nonumber\\
        &\quad(d(y_j,T(\bm{y}))-
        d(y_i,T(\bm{y})))\cdot
        f_{\sumpostdata(\noisydata)}(\swap{i}{j}{\bm{y}})d{\bm{y}}\label{eq:bias_ineq_long_2}\\
        =~&\int_{\bm{y}\in S_{i,j}^{-}}
        (d(y_i,T(\bm{y}))-d(y_j,T(\bm{y})))\cdot\nonumber\\
        &\qquad\quad\left(f_{\sumpostdata(\noisydata)}(\bm{y})-f_{\sumpostdata(\noisydata)}(\swap{i}{j}{\bm{y}})\right)d\bm{y}\nonumber\\
        =~&\int_{\bm{y}\in S_{i,j}^{-}}
        \delta(\bm{y})\cdot \left(f_{\sumpostdata(\noisydata)}(\bm{y})-f_{\sumpostdata(\noisydata)}(\swap{i}{j}{\bm{y}})\right)d\bm{y}\nonumber\,,
    \end{align}
    where Equation \eqref{eq:bias_ineq_long_1} comes from the variable substitution 
    $\bm{y}'=\swap{i}{j}{\bm{y}}$ and algebraic representation of 
    the swapping operation $\swap{i}{j}{\bm{y}}=\bm{P}_{(ij)}\bm{y}$ with the permutation
    matrix $\bm{P}_{(ij)}$. Equation \eqref{eq:bias_ineq_long_2} leverages the result of Lemma
    \ref{lem:swap_invariance} and the fact that the determinant of a permutation matrix has
    absolute value $1$. Recall that Proposition \ref{prop:func_d} shows that
    the function $d$ is decreasing in its first argument and, thus,
    \begin{equation*}
        \delta(\bm{y})=d(y_i,T(\bm{y}))-d(y_j,T(\bm{y}))\geq 0\,,\qquad\forall~\bm{y}\in S_{i,j}^-\,,
    \end{equation*}
    because $y_i$ is no greater than $y_j$ by definition of the set $S_{i,j}^-$.
    Additionally, Corollary \ref{cor:pdf_diff} and Lemma \ref{lem:lap_pdf_diff} show
    that the following inequality holds for any $\bm{y}\in S_{i,j}^-$ under
    the Gaussian and Laplace mechanism respectively
    \begin{equation*}
        f_{\sumpostdata(\noisydata)}(\bm{y})-f_{\sumpostdata(\noisydata)}(\swap{i}{j}{\bm{y}})\geq 0\,.
    \end{equation*}
    These two results above make the joint efforts to establish the following.
    \begin{multline*}
        \EE{\sumpostdata(\noisydata)}{\delta(\sumpostdata(\noisydata))}=\\
        \int_{\bm{y}\in S_{i,j}^{-}}
        \delta(\bm{y})\cdot \left(f_{\sumpostdata(\noisydata)}(\bm{y})-f_{\sumpostdata(\noisydata)}(\swap{i}{j}{\bm{y}})\right)d\bm{y}\geq 0\,.
    \end{multline*}
    \noindent
    By Proposition \ref{prop:func_d}, the function $d$ satisfies the following properties:
    for any $\bm{y}\in \sumregion$,
    \begin{multline*}
        \begin{cases}
            d(y_i,T(\bm{y}))\leq 0\,, \\
            d(y_j,T(\bm{y}))\geq -T(\bm{y})\,,
        \end{cases}\implies\\
        \delta(\bm{y})=d(y_i,T(\bm{y}))-d(y_j,T(\bm{y}))\leq T(\bm{y})\,,
    \end{multline*}
    which implies that 
    \begin{equation*}
        \EE{\sumpostdata(\noisydata)}{\delta(\sumpostdata(\noisydata))}\leq 
        \EE{\sumpostdata(\noisydata)}{T(\sumpostdata(\noisydata))}\,.
    \end{equation*}
    The expectation $\EE{\sumpostdata(\noisydata)}{\delta(\sumpostdata(\noisydata))}$ is thus between $0$ and 
        $\EE{\sumpostdata(\noisydata)}{T(\sumpostdata(\noisydata))}$. The
        equivalence established in Equation \eqref{eq:bias_ineq_equiv} then concludes the proof here.
\end{proof}

%%%%%%%%%%%%%%%%%%%%%%%%%%%%%%%%%%%%%%%%%%%%%%%%%%%%%%%%%%%%%%%%%%%%%%%%%%%%%%%%%%
% Proof of the second upper bound
%%%%%%%%%%%%%%%%%%%%%%%%%%%%%%%%%%%%%%%%%%%%%%%%%%%%%%%%%%%%%%%%%%%%%%%%%%%%%%%%%%

\begin{proof}[Proof of Lemma \ref{thm:bias_ineq_upper}]
    Likewise, by Lemma \ref{lem:kkt_ps}, the right hand side of Equation \eqref{eq:bias_ineq_upper} 
    can be written as 
    \begin{align*}
         &\biasm{\nnsumpostdata}_i-\biasm{\nnsumpostdata}_j\\
        =~&\left(\EE{\sumpostdata(\noisydata)}{\relu{\sumpostdata(\noisydata)_i -
        T(\sumpostdata(\noisydata))}}-x_i\right)
        -\\
        &\left(\EE{\sumpostdata(\noisydata)}{\relu{\sumpostdata(\noisydata)_j - T(\sumpostdata(\noisydata))}}
        -x_j\right)\,.
    \end{align*}
    Let $\rho:\sumregion\mapsto\RR$ denote the function
    \begin{equation*}
        \rho(\bm{y})\coloneqq\relu{y_i - T(\bm{y})}-\relu{y_j - T(\bm{y})}
    \end{equation*}
    such that
    \begin{align*}
        &\left(\biasm{\nnsumpostdata}_i-\biasm{\nnsumpostdata}_j\right)-\left(
        x_j-x_i\right)\\
        =~&\left(\EE{\sumpostdata(\noisydata)}{\relu{\sumpostdata(\noisydata)_i - T(\sumpostdata(\noisydata))}}-x_i\right)
        -\\
        &\left(\EE{\sumpostdata(\noisydata)}{\relu{\sumpostdata(\noisydata)_j - T(\sumpostdata(\noisydata))}}
        -x_j\right)
        -\left(
        x_j-x_i\right)\\
        =~&\EE{\sumpostdata(\noisydata)}{\rho(\sumpostdata(\noisydata))}\,.
    \end{align*}
    Note that the feasible region $\sumregion$ can be partitioned into two
    almost disjoint sets $S_{i,j}^{-}$ and $S_{i,j}^{+}$
    \begin{align*}
        S_{i,j}^{-}&=\sumregion\bigcap H_{i,j}^-=\sumregion\bigcap\left\{\bm{y}\mid
        y_i-y_j\leq 0\right\}\,,\\
        S_{i,j}^{+}&=\sumregion\bigcap H_{i,j}^+=\sumregion\bigcap\left\{\bm{y}\mid
        y_i-y_j\geq 0\right\}\,,
    \end{align*}
    and $\swap{i}{j}{\cdot}$ is a bijection between these two sets. It follows that
    \begin{align}
        &\EE{\sumpostdata(\noisydata)}{\rho(\sumpostdata(\noisydata))}
        \nonumber\\
        =~&\int_{\bm{y}\in S_{i,j}^{-}}
        \rho(\bm{y})\cdot f_{\sumpostdata(\noisydata)}(\bm{y})d\bm{y}+\int_{\bm{y}'\in S_{i,j}^{+}}
        \rho(\bm{y}')\cdot f_{\sumpostdata(\noisydata)}(\bm{y}')d\bm{y}'\nonumber\\
        =~&\int_{\bm{y}\in S_{i,j}^{-}}
        \rho(\bm{y})\cdot f_{\sumpostdata(\noisydata)}(\bm{y})d\bm{y}+\nonumber\\
        &\int_{\bm{y}\in S_{i,j}^{-}}
        \rho(\swap{i}{j}{\bm{y}})\cdot
        f_{\sumpostdata(\noisydata)}(\swap{i}{j}{\bm{y}})\vert\det(\bm{P}_{(ij)})\vert d{\bm{y}}
        \label{eq:bias_ineq_upper_1}\\
        =~&\int_{\bm{y}\in S_{i,j}^{-}}
        \rho(\bm{y})\cdot f_{\sumpostdata(\noisydata)}(\bm{y})+
        \rho(\swap{i}{j}{\bm{y}})\cdot\nonumber\\
        &\qquad \quad f_{\sumpostdata(\noisydata)}(\swap{i}{j}{\bm{y}}) d{\bm{y}}\label{eq:bias_ineq_upper_2}\\
        =~&\int_{\bm{y}\in S_{i,j}^{-}}
        \rho(\bm{y})\cdot \left(f_{\sumpostdata(\noisydata)}(\bm{y})-
        f_{\sumpostdata(\noisydata)}(\swap{i}{j}{\bm{y}})\right)d\bm{y}\label{eq:bias_ineq_upper_3}\\
        \leq~&0\label{eq:bias_ineq_upper_4}\,,
    \end{align}
    where Equation \eqref{eq:bias_ineq_upper_1} comes from the variable substitution 
    $\bm{y}'=\swap{i}{j}{\bm{y}}$ and algebraic representation of 
    the swapping operation $\swap{i}{j}{\bm{y}}=\bm{P}_{(ij)}\bm{y}$ with the permutation
    matrix $\bm{P}_{(ij)}$. Equation \eqref{eq:bias_ineq_upper_2} leverages the result of Lemma
    \ref{lem:swap_invariance} and the fact that the determinant of a permutation matrix has
    absolute value $1$. 
    Notice that
    \begin{align*}
        &\rho(\swap{i}{j}{\bm{y}}) \\
        =~& \relu{y_j - T(\swap{i}{j}{\bm{y}})}-
        \relu{y_j - T(\swap{i}{j}{\bm{y}})}\\
        =~&-\left[\relu{y_i - T(\bm{y})}-\relu{y_j - T(\bm{y})}\right]\\
        =~&-\rho(\bm{y})\,,
    \end{align*}
    where the second equation comes from Lemma \ref{lem:swap_invariance}. Thus, 
    Equation \eqref{eq:bias_ineq_upper_3} can be derived from the formula above.
    The value $\rho(\bm{y})$ turns out to be non-positive for any $\bm{y}\in S_{i,j}^{-}$, i.e.,
    \begin{align*}
        \rho(\bm{y})&=\relu{y_i - T(\bm{y})}-\relu{y_j - T(\bm{y})}\\
        &=
        \relu{y_i - T(\bm{y})}-\relu{\left(y_i - T(\bm{y})\right)+\left(y_j-y_i\right)}\\
        &\leq 0\,,
    \end{align*}
    because of monotonicity of the function $\relu{\cdot}:\RR\mapsto\RR_+$.
    Additionally, Corollary \ref{cor:pdf_diff} and Lemma \ref{lem:lap_pdf_diff} show
    that the following inequality holds for any $\bm{y}\in S_{i,j}^-$ under
    the Gaussian and Laplace mechanism respectively
    \begin{equation*}
        f_{\sumpostdata(\noisydata)}(\bm{y})-f_{\sumpostdata(\noisydata)}(\swap{i}{j}{\bm{y}})\geq 0\,,
    \end{equation*}
    which indicates that Equation \eqref{eq:bias_ineq_upper_4} holds.
\end{proof}

\subsubsection*{Proof of Proposition \ref{thm:lower_explicit}}
\begin{proof}[Proof of Proposition \ref{thm:lower_explicit}]
    Note that, if the injected noise follows a multivariate Gaussian distribution 
    $\cN(\boldsymbol{0}, \sigma^2\bm{I}_n)$, for $i\in [n]$,
    \begin{equation*}
        \nu^* = \eta_i-\frac{\sum_{j=1}^n\eta_j}{n} \sim \cN\left(0, \frac{(n-1)\sigma^2}{n}\right) = \cN\left(0,
        a^{-2}\right)\,,
    \end{equation*}
    where $a = \frac{1}{\sigma}\sqrt{\frac{n}{n-1}}$. Denote the 
    cumulative distribution function and probability density function
    of $\nu^*$ as $F_{\nu^*}(\cdot)$ and $f_{\nu^*}(\cdot)$.
    The bias term 
    $\biasm{\relu{\sumpostdata}}_i$ can then be written as
    \begin{align*}
        &\bias{\relu{\sumpostdata}}{\truedata}_i \\
        =~& \mathbb{E}_{\nu^*}\left[\left(x_i+\nu^*\right)_+\right]-x_i =\int_{-x_i}^{+\infty} (x_i+t) f_{\nu^*}(t) dt - x_i \\
        =~&-\int_{-\infty}^{-x_i} (x_i+t) f_{\nu^*}(t) dt =- \int_{-\infty}^{-x_i} (x_i+t) d F_{\nu^*}(t) \\
        =~&-(x_i+t)F_{\nu^*}(t)\bigg|_{-\infty}^{-x_i}+\int_{-\infty}^{-x_i}F_{\nu^*}(t) dt.
        % &=x_1 
    \end{align*}
    Notice that
    \begin{align*}
        \lim_{t\rightarrow -\infty} (x_i+t)F_{\nu^*}(t) &= \lim_{t\rightarrow -\infty} \frac{F_{\nu^*}(t)}{t^{-1}} =
        -\lim_{t\rightarrow -\infty} \frac{f_{\nu^*}(t)}{t^{-2}} \\
        &= -\lim_{t\rightarrow -\infty}
        \frac{t^2}{\sqrt{2\pi}a^{-1}\exp(a^2t^2/2)} \\
        &= 0\,.
    \end{align*}
    Therefore, 
    \begin{align*}
        &\bias{\relu{\sumpostdata}}{\truedata}_1 - \bias{\relu{\sumpostdata}}{\truedata}_n \\
        =~& \int_{-\infty}^{-x_1}F_{\nu^*}(t) dt - \int_{-\infty}^{-x_n}F_{\nu^*}(t) dt
        =\int_{-x_n}^{-x_1}F_{\nu^*}(t) dt \\
        =~& \int_{-x_n}^{-x_1}\Phi(at) dt\,,
    \end{align*}
    where $\Phi(\cdot)$ is the standard Gaussian cumulative distribution function. 
    According to mean value theorem, there exists $x_c\in[-x_n,x_1]$, such that
    \begin{align*}
        \int_{-x_n}^{-x_1}\Phi(at) dt&=\Phi(ax_c) (x_n-x_1) \geq \Phi(-ax_n) (x_n-x_1)\,,\\
        \int_{-x_n}^{-x_1}\Phi(at) dt&=\Phi(ax_c) (x_n-x_1) \leq \Phi(-ax_1) (x_n-x_1)\,,
    \end{align*}
    where the inequalities use the fact that $\Phi(\cdot)$ is an increasing function.
\end{proof}

%%%%%%%%%%%%%%%%%%%%%%%%%%%%%%%%%%%%%%%%%%%%%%%%%%%%%%%%%%%%%%%%%%%%%%%%%%%%%%%%%%%%%%%%%%%%%%
\subsubsection*{Proof of Theorem \ref{thm:a-fair}}
\label{app:proof_of_sec_alloc}
%%%%%%%%%%%%%%%%%%%%%%%%%%%%%%%%%%%%%%%%%%%%%%%%%%%%%%%%%%%%%%%%%%%%%%%%%%%%%%%%%%%%%%%%%%%%%%
 \begin{proof}[Proof of Theorem \ref{thm:a-fair}]
 
    Let $\tilde{f}$ denote the objective function of program \eqref{prog:s}, i.e.,
    $\tilde{f}(\bm{v}) = \norm{\bm{v} - \noisyalloc}_{\rightleftharpoons}$.
    Then, the proof here is split into two parts.
   \begin{itemize}
       \item $\noisyalloc\geq \bm{0}$;\\
       In this case, the mechanism $\posmech{\noisydata}$ simply produces $\noisyalloc$ as output
       and thus
       \begin{equation*}
           \tilde{f}\left(\posmech{\noisydata}\right) = \norm{\noisyalloc - \noisyalloc}_{\rightleftharpoons}=0\,.
       \end{equation*}
       Therefore, due to non-negative of the objective function $\tilde{f}$,
       the mechanism $\posmech{\noisydata}$ generates an optimal solution to program \eqref{prog:s}.
       \item $\exists~i\in[n],~\nalloci{i}<0$;\\
       Suppose that there exists $\bm{u}\in\Delta_n$ such that
       $\tilde{f}(\bm{u})<\tilde{f}(\posmech{\noisydata})$.
       In order to complete the proof here, it suffices to
       demonstrate that the very 
       existence of such $\bm{u}$ is invalid. In
       the rest of this proof, most of efforts will be put
       to show that this vector $\bm{u}$ must be summed up to
       a positive constant strictly larger than $1$, which means that
       it is unlikely for $\bm{u}$ to lie in the probability simplex
       $\Delta_n$.\\
       Let $\cI^-=\{j\mid \nalloci{j}<0\}$ be the index set,
       which captures all the negative entries of $\noisyalloc$. 
       By Lemma \ref{lem:kkt_ps}, since the mechanism $\pos$
       makes use of projection to restore feasibility,
       its output $\posmech{\noisydata}$ can thus be expressed as
       \begin{equation*}
           \posmech{\noisydata} = \relu{\nalloci{j}-T\cdot \bm{1}}\,,\qquad\exists~T\in\RR_+\,,
       \end{equation*}
       where $T$ satisfies the following relation
       \begin{equation*}
           \sum_{j=1}^n\relu{\nalloci{j}-T}=1\,.
       \end{equation*}
       Thus, it follows that
       \begin{equation}\label{eq:pos_mech}
           \begin{aligned}
              \tilde{f}(\posmech{\noisydata})=&\max_{j\in [n]}~\left(\posmech{\noisydata}_j-\nalloci{j}\right)-\\
              &\min_{j\in [n]}~\left(\posmech{\noisydata}_j-\nalloci{j}\right)\\
               &=\max_{j\in\cI^-}~\left(0-\nalloci{j}\right)-(-T)\\
               &=-\min_{j\in\cI^-}~\nalloci{j} + T\,,
           \end{aligned}
       \end{equation}
       and
       \begin{equation}
           \begin{aligned}
               &\sum_{j\in\cI^-}\left(\posmech{\noisydata}_j-\nalloci{j}\right)=-\sum_{j\in\cI^-}\nalloci{j}\\
           =~&\sum_{j\in [n]\setminus \cI^-} \left[\nalloci{j}-\relu{\nalloci{j}-T}\right]\\
           =~&-\sum_{j\in [n]\setminus \cI^-} \left(\posmech{\noisydata}_j-\nalloci{j}\right)\,.
           \end{aligned}\label{eq:pos_neg_sum}
       \end{equation}
       Since $\bm{u}$ lies in the probability simplex $\Delta_n$, 
       $\bm{u}$ is a non-negative vector, which implies that
       \begin{multline}\label{eq:max_ineq}
           \max_{j\in[n]}~\left(u_j-\nalloci{j}\right)\geq \max_{j\in\cI^-}~\left(u_j-\nalloci{j}\right)\geq\\
           \max_{j\in\cI^-}~\left(0-\nalloci{j}\right)=-\min_{j\in\cI^-}~\nalloci{j}\,.
       \end{multline}
       As a consequence, 
       \begin{align}
           &\min_{j\in[n]}~\left(u_j-\nalloci{j}\right)\nonumber\\
           =~&\max_{j\in[n]}~\left(u_j-\nalloci{j}\right)-\tilde{f}(\bm{u})\nonumber\\
           \geq~& -\min_{j\in\cI^-}~\nalloci{j} - \tilde{f}(\bm{u})\label{eq:sec5-1}\\
           =~&\left(\tilde{f}(\posmech{\noisydata})-T\right)-\tilde{f}(\bm{u})\label{eq:sec5-2}\\
           =~&\left(\tilde{f}(\posmech{\noisydata})-\tilde{f}(\bm{u})\right)-T\nonumber\\
           >~&-T\,,\label{eq:sec5-3}
       \end{align}
       where Equation \eqref{eq:sec5-1} comes from Equation \eqref{eq:max_ineq} and 
       Equation \eqref{eq:sec5-2} is a direct consequence of Equation \eqref{eq:pos_mech}. Moreover,
       Equation \eqref{eq:sec5-3} comes from the assumption that $\tilde{f}(\bm{u})<\tilde{f}(\posmech{\noisydata})$.
       Let $t$ be a shorthand for $-\min_{j\in[n]}~\left(u_j-\nalloci{j}\right)$, which leads to the following
       argument that, for any $j\in[n]$,
       \begin{multline*}
           \begin{cases}
                  u_j>\nalloci{j}-t\,,\\
                  u_j\geq 0
           \end{cases}\implies\\
           u_j\geq \relu{\nalloci{j}-t}\geq \relu{\nalloci{j}-T}\,.
       \end{multline*}
       Besides, there exists at least an index $l\in [n]\setminus\cI^-$ such that
       \begin{multline*}
           \relu{\nalloci{l}-t}=\nalloci{l}-t>\nalloci{l}-T=\\\relu{\nalloci{l}-T}>0=\posmech{\noisydata}_l\,,
       \end{multline*}
       because at least one entry of $\posmech{\noisydata}\in\Delta_n$ is positive. It follows that
       \begin{equation}\label{eq:strict_ineq}
           \sum_{j\in[n]\setminus\cI^-} u_j\geq  \sum_{j\in[n]\setminus\cI^-} \relu{\nalloci{j}-t}>
           \relu{\nalloci{j}-T}\,.
       \end{equation}
       With the analysis above, it is finally a step away from a contradiction.
       \begin{align}
           &\sum_{j\in [n]} \left(u_j-\nalloci{j}\right)\nonumber\\
           =~&\sum_{j\in\cI^-}\left(u_j-\nalloci{j}\right)+
           \sum_{j\in [n]\setminus \cI^-}\left(u_j-\nalloci{j}\right)\nonumber\\
           \geq~& -\sum_{j\in\cI^-}\nalloci{j}+
           \sum_{j\in [n]\setminus \cI^-}\left(u_j-\nalloci{j}\right)\label{eq:sec5-4}\\
           >~&-\sum_{j\in\cI^-}\nalloci{j}+\nonumber\\
          & \sum_{j\in [n]\setminus \cI^-}\left(\relu{\nalloci{l}-T}-\nalloci{j}\right)\label{eq:sec5-5}\\
           =~&-\sum_{j\in\cI^-}\nalloci{j}+
           \sum_{j\in [n]\setminus \cI^-}\left(\posmech{\noisydata}_j-\nalloci{j}\right)\nonumber\\
           =~&-\sum_{j\in\cI^-}\nalloci{j}+\sum_{j\in\cI^-}\nalloci{j}\label{eq:sec5-6}\\
           =~&0\,,\nonumber
       \end{align}
       where Equation \eqref{eq:sec5-4} comes from non-negativity of $\bm{u}$ and Equation \eqref{eq:sec5-5}
       is a direct consequence of Equation \eqref{eq:strict_ineq}. Additionally, 
       Equation \eqref{eq:sec5-6} leverages the result
       of Equation \eqref{eq:pos_neg_sum}. Therefore, it leads to the following
       \begin{equation*}
           \sum_{j\in[n]} u_j >\sum_{j\in[n]} \nalloci{j} = 1\,,
       \end{equation*}
       which is contradictory to the assumption that the vector $\bm{u}$ is inside the probability simplex $\Delta_n$. It concludes the proof here.
   \end{itemize}
  \end{proof}

\subsubsection*{Proof of Proposition \ref{prop:cop_as_l1_norm}}  
\begin{proof}[Proof of Proposition \ref{prop:cop_as_l1_norm}]
    Since $\pp\in\Pi_{\Delta_n}$, it follows that, for any noisy data $\noisydata\in\RR^n$,
    \begin{equation*}
        \sum_{i=1}^n \pp(\noisydata)_i = \sum_{i=1}^n \talloci{i} = 1\,,
    \end{equation*}
    and thus
    \begin{multline*}
        \sum_{i=1}^n \bias{\pp}{\al}_i = 0\implies \\\sum_{i\in [n]\setminus \cJ^-}
        \bias{\pp}{\al}_i = -\sum_{j\in\cJ^-} \bias{\pp}{\al}_j\,. 
    \end{multline*}
    Therefore, by the definition of $\ell_1$ norm,
    \begin{align*}
        &\sum_{j\in \cJ^-} \vert \bias{\pp}{\al}_j\vert \cdot B =
        -B\sum_{j\in \cJ^-}\bias{\pp}{\al}_j\\
        =~&\frac{B}{2}\cdot
        \left(- \sum_{j\in \cJ^-}\bias{\pp}{\al}_j+\sum_{i\in [n]\setminus \cJ^-}
        \bias{\pp}{\al}_i \right)\\
        =~&\frac{B}{2}\cdot\norm{\bias{\pp}{\al}}_1\,.
    \end{align*}
\end{proof}

\subsection*{Proof of Theorem \ref{thm:cop}}
\begin{proof}[Proof of Theorem \ref{thm:cop}]
    Let $\tilde{h}$ be a shorthand for the objective function of program \eqref{prog:s-cop}, i.e.,
    $\tilde{h}(\bm{v}) =B/2\cdot \norm{\bm{v}-\noisyalloc}_1$.
    For simplicity, this work only presents the detailed proof for the projection onto
    simplex mechanism. The result of the baseline mechanism can be obtained with ease by using
    similar techniques. \\
    Suppose that $\noisyalloc$ is a non-negative vector. The post-processed count
    $\posmech{\noisydata}$ is simply identical to $\noisyalloc$ and thus becomes the optimal solution
    to program \eqref{prog:s-cop} because $B/2\cdot\norm{\posmech{\noisydata}-\noisyalloc}_1=0$.\\
    Consider the non-trivial scenario where there exists at least one index $k\in[n]$
    such that $\nalloci{k}<0$. Let $\cI^+=\{j\mid \nalloci{j}>0\}$ denote the index set, which captures 
    all of the positive entries of $\noisyalloc$. Then, it follows that, for any
    $\bm{v}\in \Delta_n$,
    \begin{align}
        \tilde{h}(\bm{v})&=\frac{B}{2}\cdot \norm{\bm{v}-\noisyalloc}_1=\frac{B}{2}\sum_{i=1}^n\left\vert
        v_i-\nalloci{i}\right\vert\nonumber\\
        &=\frac{B}{2}\sum_{i\in \cI^+}\left\vert
        v_i-\nalloci{i}\right\vert+\frac{B}{2}\sum_{i\in [n]\setminus\cI^+}\left\vert
        v_i-\nalloci{i}\right\vert\nonumber\\
        &=\frac{B}{2}\sum_{i\in \cI^+}\left\vert
        v_i-\nalloci{i}\right\vert+
        \frac{B}{2}\sum_{i\in [n]\setminus\cI^+}
        \left(v_i-\nalloci{i}\right)\nonumber\\
        &\geq \frac{B}{2}\sum_{i\in \cI^+}
        \left(\nalloci{i}-v_i\right)+
        \frac{B}{2}\sum_{i\in [n]\setminus\cI^+}
        \left(v_i-\nalloci{i}\right)\label{eq:cop_long_1}\\
        &=\frac{B}{2}\left(\norm{\noisyalloc}_1+1\right)-B\sum_{i\in\cI^+}v_i\nonumber\\
        &\geq \frac{B}{2}\left(\norm{\noisyalloc}_1-1\right)\,,\nonumber
    \end{align}
    where Equation \eqref{eq:cop_long_1} comes from the triangle inequality.
    By Lemma \ref{lem:kkt_ps}, the post-processed count $\posmech{\noisydata}$ is known
    to have the following form
    \begin{equation*}
        \posmech{\noisydata}=\relu{\noisyalloc-T\cdot \bm{1}}\,,\qquad\exists~T\in \RR_+\,,
    \end{equation*}
    such that
    \begin{equation*}
        \sum_{i=1}^n \posmech{\noisydata}_i=\sum_{i=1}^n \relu{\nalloci{i}-T}=1\,,
    \end{equation*}
    and, therefore,
    \begin{equation*}
        \posmech{\noisydata}_i=\begin{cases}
               \relu{\nalloci{i}-T}<\nalloci{i}\,, & i\in\cI^+\,,\\
               0\,, & \text{otherwise.}
        \end{cases}
    \end{equation*}
    The value $\tilde{h}(\posmech{\noisydata})$ can then be computed as
    \begin{align*}
        \tilde{h}(\posmech{\noisydata})&=\frac{B}{2}\left[\!\sum_{i\in \cI^+}\!\!
        \left(\nalloci{i}-\posmech{\noisydata}_i\right)-\!\!\!\!\!\!\sum_{i\in [n]\setminus\cI^+}\!\!\!\!\nalloci{i}\right]\\
        &=\frac{B}{2}\norm{\noisyalloc}_1-\frac{B}{2}\sum_{i\in\cI^+}\posmech{\noisydata}_i\\
        &=\frac{B}{2}\left(\norm{\noisyalloc}_1-1\right)\,,
    \end{align*}
    which means that $\tilde{h}(\posmech{\noisydata})$ equals the lower bound of the objective $\tilde{h}$
    over the probability simplex $\Delta_n$. Hence, the projection onto simplex mechanism
    produces an optimal solution to program \eqref{prog:s-cop}.
\end{proof}

\end{document}